\documentclass[letterpaper, leqno, 11pt]{article}
\usepackage[utf8]{inputenc}

\usepackage[letterpaper, margin=1in]{geometry}

\usepackage{hyperref}

\usepackage{amssymb,amsmath,amsfonts,amsthm}
\usepackage{graphicx,hyperref,xcolor,xspace}
\usepackage{mathtools}
\usepackage{todonotes}
\usepackage{paralist}
\usepackage{csquotes}
\usepackage{thm-restate}
\usepackage{algorithm} 

\usepackage[noend]{algpseudocode}
\algrenewcommand\textproc{}

\newtheorem{theorem}{Theorem}
\newtheorem{lemma}[theorem]{Lemma}
\newtheorem{observation}[theorem]{Observation}
\newtheorem{Definition}[theorem]{Definition}

\newtheorem{corollary}[theorem]{Corollary}


\newcommand{\Oh}{\ensuremath{\mathcal{O}}}
\newcommand{\poly}{\ensuremath{\text{poly}}}
\newcommand{\RR}{\mathbb{R}}
\DeclareMathOperator{\df}{d_{F}}
\def \eps{\varepsilon}
\DeclareMathOperator*{\argmin}{arg\,min}
\DeclareMathOperator*{\argmax}{arg\,max}
\newcommand{\seqtocurve}[1]{\left\langle#1\right\rangle}
\def\radius{\delta}
\def\HHH{{\mathcal H}}
\def\GGG{{\mathcal G}}
\def\PPP{{\mathcal P}}
\def\CCC{{\mathcal C}}

\def\SSS{{\mathcal S}}
\def\III{{\mathcal I}}
\def\ZZ{{\mathbb Z}}
\def\NN{{\mathbb N}}
\newcommand{\straightening}{straightening\xspace}
\newcommand{\straightenings}{straightenings\xspace}
\newcommand{\Straightenings}{Straightenings\xspace}
\newcommand{\deltamonotone}{$\delta$-monotone\xspace}
\newcommand{\ossov}{\textsc{OneSidedSparseOV}\xspace}

\begin{document}
\title{Tight Bounds for Approximate Near Neighbor Searching for Time Series under the Fr\'echet Distance}

\author{Karl Bringmann\thanks{Saarland University and Max Planck Institute for Informatics,  Saarland Informatics Campus, Saarbrücken, Germany. This work is part of the project TIPEA that has received funding from the European Research Council (ERC) under the European Unions Horizon 2020 research and innovation programme (grant agreement No. 850979).} \and Anne Driemel\thanks{Hausdorff Center for Mathematics, University of Bonn, Germany} \and André Nusser\thanks{Saarbrücken Graduate School of Computer Science and Max Planck Institute for Informatics,  Saarland Informatics Campus, Saarbrücken, Germany} \and Ioannis Psarros\thanks{Athena Research Center, Athens, Greece. Part of this work was done while the author was a postdoctoral researcher at the Hausdorff Center for Mathematics and the  University of Bonn, Germany. The author was (partially) supported by the European Union's Horizon 2020 Research and Innovation programme, under the grant agreement No. 957345: “MORE”. }}

\date{}

\maketitle

\begin{abstract}
\noindent
We study the $c$-approximate near neighbor problem under the continuous Fr\'echet distance: Given a set of $n$ polygonal curves with $m$ vertices, a radius $\delta > 0$, and a parameter $k \leq m$, we want to preprocess the curves into a data structure that, given a query curve $q$ with $k$ vertices, either returns an input curve with Fr\'echet distance at most $c\cdot \delta$ to $q$, or returns that there exists no input curve with Fr\'echet distance at most $\delta$ to $q$. We focus on the case where the input and the queries are one-dimensional polygonal curves---also called time series---and we give a comprehensive analysis for this case. We obtain new upper bounds that provide different tradeoffs between approximation factor, preprocessing time, and query time.

Our data structures improve upon the state of the art in several ways. We show that for any $0 < \varepsilon \leq 1$ an approximation factor of $(1+\varepsilon)$ can be achieved within the same asymptotic time bounds as the previously best result for $(2+\varepsilon)$. Moreover, we show that an approximation factor of $(2+\varepsilon)$ can be obtained by using preprocessing time and space $O(nm)$, which is linear in the input size, and query time in $O(\frac{1}{\varepsilon})^{k+2}$, where the previously best result used preprocessing time in $n \cdot O(\frac{m}{\varepsilon k})^k$ and query time in $O(1)^k$. We complement our upper bounds with matching conditional lower bounds based on the Orthogonal Vectors Hypothesis. 
Interestingly, some of our lower bounds already hold for any super-constant value of $k$. This is achieved by proving hardness of a one-sided sparse version of the Orthogonal Vectors problem as an intermediate problem, which we believe to be of independent interest.
\end{abstract}

\thispagestyle{empty}
\clearpage
\setcounter{page}{1}
\section{Introduction}

The Fr\'echet distance and its variants provide a versatile class of distance measures for parametrized curves as they occur in application areas such as trajectories of moving objects (e.g., vehicles, animals, or robots), outlines of shapes, signatures, gestures, and other types of time series from sensor data~\cite{acmsurvey20, su2020survey}. This distance measure is very similar to the Hausdorff distance, which is defined for sets, except that it takes the ordering of points along the curve into account. At the same time, by assuming the equivalence class of all reparametrizations of a curve, it is robust to local irregularities in the parametrization of the curves (e.g., errors due to local time delays or irregular measurements). An intuitive definition of the distance measure is given as follows. Imagine traversing the two curves at independent and varying speeds from the beginning to the end, and consider the maximum (Euclidean) distance that the two positions can maintain throughout the traversal without backtracking along the curves. Minimizing over all possible traversals yields the Fr\'echet distance of the two curves. 

Due to the popularity of the distance measure for trajectory analysis and data analysis applications, many heuristics and algorithm engineering solutions have been proposed to speed up the distance computation and similarity retrieval~\cite{BergGM17, BaldusB2017,DutschV17,BuchinDDM17,BringmannKN19, GHPS20}.
A fundamental task in this area is near neighbor searching: Preprocess $n$ curves into a data structure, such that we can query this data structure with a curve and retrieve an input curve that has small distance to the query curve. 
This problem has been studied intensively~\cite{BergCG13, BergMO17, I02, AD18, EP20, FiltserFK20, DS17, M20, DP21} and, for the discrete version of the Fr\'echet distance, these efforts lead to a simple and likely optimal data structure~\cite{FiltserFK20}. However, for the more classic continuous version of the Fr\'echet distance, the computational complexity of near neighbor searching is still largely open, and seems very challenging to resolve. 

Therefore, in this paper we focus on the special case of one-dimensional curves, which we also refer to as time series. We aim to resolve approximate near neighbor searching for this special case of the continuous Fr\'echet distance. We obtain strong lower bounds based on the Orthogonal Vectors Hypothesis in the regime of small approximation factors. More specifically, we differentiate a range of lower bounds for different approximation factors and preprocessing/query time. We show that our bounds are tight by devising data structures that asymptotically match the lower bounds in all cases considered. The new data structures improve upon the state of the art in several ways. For the same preprocessing and query time, we can improve the approximation factor from $(2+\varepsilon)$ to $(1+\varepsilon)$. For the same approximation factor $(2+\varepsilon)$, we  get a better time complexity---in some cases we can even achieve linear preprocessing time and space.

\subsection{Problem Definition}

Let us first formally define the distance measure considered in this work.
\begin{Definition}[Fr\'{e}chet distance]
Given two curves $P,Q:~ [0,1] \mapsto \RR^d $, their Fr\'{e}chet distance is
\[
	\df(P,Q) \coloneqq \min\limits_{f,g \in \mathcal{T}} \max_{t \in [0,1]} \|P(f(t)) - Q(g(t))\|_2,
\]
where $\mathcal{T}$ is the set of all monotone and surjective functions from $[0,1]$ to $[0,1]$. For functions $f$ and $g$ that realize the minimum above, we define $\phi: [0,1] \to [0,1]^2$ with $\phi(t) = (f(t), g(t)), t \in [0,1]$ and we refer to $\phi$ as a realizing traversal of the two curves. 
\end{Definition}

The central problem of this work is then defined as follows.
 \begin{Definition}[$c$-Approximate Near Neighbors problem ($c$-ANN)]\label{Dgenann}
The input consists of a set $\PPP$ of $n$ curves in $\RR^d$, each of complexity $m$, and a number $2 \le k \le m$. Given a distance threshold $\radius>0$ and an approximation factor $c>1$, preprocess $\PPP$ into a data structure such that for any query curve $Q$ of complexity $k$, the data structure reports as follows:
\begin{compactitem}
    \item if $\exists P\in \PPP$ such that $\df(P,Q)\leq \radius$, then it returns $P'\in \PPP$ such that $\df(P',Q)\leq c\radius$,
    \item if $\forall P \in \PPP$, $\df(P,Q)\geq c\radius$ then it returns “no”,
    \item otherwise, it either returns a curve $P\in \PPP$ such that $\df(P,Q)\leq c\radius$, or  “no”. 
\end{compactitem}
\end{Definition}

The assumption that all input curves have the same number of vertices $m$ and that the queries have $k$ vertices is mostly to simplify presentation; all our data structures are easily generalized to allow input curves of complexity \emph{at most} $m$ and query curves of complexity \emph{at most} $k$. Note, however, that we assume the input has size in $\Omega(nm)$ and that $2\leq k\leq m$. The case $k=1$ is a boundary case that is easier to solve; we ignore it throughout this paper.  

\subsection{State of the Art}


We start by reviewing the state of the art for the \emph{discrete} variant of the Fréchet distance. In the discrete Fr\'echet distance, the continuous traversal $\phi$ is replaced by a discrete traversal of the two point sequences, we refer to~\cite{FiltserFK20} for the exact definition. The currently best known data structure for $(1+\eps)$-ANN under the discrete Fr\'echet distance is by
Filtser et al.~\cite{FiltserFK20}. Their data structure uses space in $ n \cdot \Oh(1/\eps)^{kd} + \Oh(nm)$ and query time in $\Oh(k d)$, where $k$ denotes the complexity of the query (measured in the number of vertices), $m$ denotes the complexity of an input curve and $n$ denotes the number of input curves. 
It is an interesting question whether the same bounds can be obtained for the continuous Fr\'echet distance.
At first glance, the discrete and continuous variants of the Fr\'echet distance seem very similar, but there is an important difference: while the metric space of bounded complexity curves under the discrete Fr\'echet distance has bounded doubling dimension, this does not hold in the continuous case, even when restricted to polygonal curves of constant complexity~\cite{DKS16}. (A metric space has doubling dimension at most $d$ if any ball of any radius $r$ can be covered by $2^d$ balls of radius $\frac{r}{2}$.) This immediately shows that the technique employed by Filtser et al., which effectively applies a doubling oracle to the metric balls centered at input curves (more specifically, simplifications thereof), does not directly extend to the continuous Fr\'echet distance, since such a doubling oracle cannot exist in this case. 

So the \emph{discrete} Fr\'echet distance has a simple ANN that seems optimal, but there is indication that for the \emph{continuous} Fr\'echet distance resolving the time complexity of ANN is more challenging.

Note that it is possible to reduce the ANN problem for the continuous Fr\'echet distance to the ANN problem for the discrete Fr\'echet distance by subsampling along the continuous curves. However, it seems that this approach introduces an (otherwise avoidable) dependency on the arclength.
In 2018, Driemel and Afshani \cite{AD18} described data structures based on multi-level partition trees (using semi-algebraic range searching techniques) which can also be used for exact near neighbor searching under the continuous Fr\'echet distance. For $n$ curves of complexity $m$ in $\RR^2$, their data structure uses space bounded by $n \cdot (\log \log n)^{\Oh(m^2)}$ and the query time is bounded by $\sqrt{n} \cdot (\log n)^{\Oh(m^2)}$. (If the input is restricted to curves in $\RR$, these bounds can be slightly improved.) 
Recently, Driemel and Psarros~\cite{DP21} obtained bounds for the continuous Fr\'echet distance that are similar to the bounds of Filtser et al., albeit at the expense of a higher approximation factor and only for curves in $\RR$. They present a $(5+\eps)$-ANN data structure which uses space in $n\cdot  \Oh\left({\frac 1 \eps}\right)^{k} + \Oh(nm)$ 
and has query time in $\Oh\left(k\right)$, and  a $(2+\eps)$-ANN data structure, which uses space in 
$n\cdot  \Oh\left(\frac{m}{k\eps}\right)^{k} + \Oh(nm)$ 
and has query time in $\Oh\left(k\cdot 2^k\right)$. 
Even more efficient data structures can be obtained at the expense of an even larger approximation factor, see the work of Driemel, Silvestri, and Psarros~\cite{DS17} and~\cite{DP21} which uses locality-sensitive hashing.
In these results neither the space nor the query time is exponential in the complexity of the curves (neither input nor query), but the approximation factor is linear in the query complexity $k$.

\subsubsection*{(Unconditional) lower bounds} Given these results, one may ask whether the cited bounds are optimal for the respective approximation factor that they guarantee. We review some efforts in answering this question and discuss the limitations of the current techniques.
Driemel and Psarros~\cite{DP21, DP20} approach this question using a technique by Miltersen~\cite{miltersen94} for proving cell-probe lower bounds. Their results indicate that any data structure answering a query for a near neighbor under the continuous Fr\'echet distance by using only a constant number of probes to memory cells cannot have a space usage that is independent of the arclength of the input curves (assuming a query radius of $1$).
In addition, their bounds indicate that, in some cases, space exponential in the complexity of the query $k$ is necessary. 
However, these bounds hold only for data structures that use a constant number of probes to memory cells for answering a query, while we would also be interested in data structures that use higher query time, such as $\Oh(k)$ or $\Oh(\log n)$.
A different lower bound technique was used by Driemel and Afshani \cite{AD18}. They show a lower bound in the pointer model of computation on the space-time tradeoff for \emph{range reporting} under the Fr\'echet distance. In this problem, all curves contained inside the query radius need to be output by the query. The resulting lower bound matches the above cited upper bounds even up to the asymptotic number of factors of $\log(n)$. 
The proof uses a construction of input curves in $\RR^2$ and a set of queries, such that the intersection of any two query results has small volume while the queries themselves have large volume. The main drawback of this technique is that, being a volume argument, it inherently uses the fact that all curves inside the query need to be returned and therefore it cannot easily be applied in the near neighbor setting. 

\subsubsection*{Conditional lower bounds} 
The recent rise of fine-grained complexity has also lead to a renewed interest in conditional lower bounds for nearest neighbor data structure problems, see, e.g.~\cite{AlmanW15,AbboudRW17,Rubinstein18,CGLRR19,DBLP:conf/soda/ChenW19}. These lower bounds are for the offline version of the data structure problem, by considering the total time needed for preprocessing and performing a number of queries. They are obtained in a similar way as NP-hardness, specifically via reductions from some fine-grained hypothesis 
such as the \emph{Strong Exponential Time Hypothesis (SETH)}~\cite{IP01} or the \emph{Orthogonal Vectors Hypothesis (OVH)}~\cite{Williams05}. In the Orthogonal Vectors problem we are given two sets of vectors $A, B \subseteq \{0,1\}^d$ of size $n$ and ask whether there exist two vectors $a \in A, b \in B$ such that $\langle a, b \rangle = 0$. The hypothesis postulates that for any constant $\eps > 0$ there exists a constant $c>0$ such that there is no algorithm solving the Orthogonal Vectors problem in time $\Oh(n^{2-\eps})$ in dimension $d = c \log n$. It should be noted that OVH is at least as believable as SETH, because SETH implies OVH~\cite{Williams05}.
As an example, based on the OV-hardness of bichromatic Euclidean closest pair~\cite{AlmanW15} and reducing via a variant of OV with unbalanced size $|A| \ll |B|$~\cite{AbboudW14}, one can show that for any 
$\eps,\beta > 0$
there is no data structure for Euclidean nearest neighbors on $n$ points in $\RR^d$ with preprocessing time 
$\Oh(n^{\beta})$
and query time 
$\Oh(n^{1-\eps})$, in some dimension $d = c \log n$. 
This rules out any sublinear query time for any data structure with polynomial preprocessing time, unless OVH fails. 

For computing the Fr\'echet distance of two  polygonal curves there is a tight conditional lower bound~\cite{Bringmann14}, also for the one-dimensional case~\cite{BringmannM16,BuchinOS19}. However, thus far, there seems to be no comprehensive study of conditional lower bounds for the corresponding data structure problem. We want to close this gap and show tight bounds for the case of one-dimensional curves. These are similar in spirit to the Euclidean nearest neighbor lower bounds discussed above.


\subsection{Our Results}

\begin{table}[t]
\centering
\def\arraystretch{1.5}
\caption{Known upper bounds and our results. For the discrete case we only cite the best known result. The space complexity is implicitly bounded by the preprocessing time in each case. Our preprocessing time is randomized; the bounds can be derandomized at the cost of a factor $\log n$ in preprocessing and query time (by using search trees instead of perfect hashing).}
\smallskip
\label{table:1}
\begin{tabular}{||c||c| c| c| c||} 
 \hline
 Fr\'echet distance & Approximation & Preprocessing Time  & Query Time & Reference \\ [0.5ex] 
 \hline\hline
 discrete, dD & $(1+\varepsilon)$-ANN  &  
 $nm \cdot\left( \Oh(\frac{1}{\varepsilon})^{dk} + \Oh( d\log m) \right)$ 
 & $\Oh(dk)$ & \cite{FiltserFK20} \\
 \hline\hline
 continuous, 1D & $(2+\varepsilon)$-ANN & $n \cdot  \Oh(\frac{m}{k\varepsilon})^k$ & $\Oh(1)^k$ & \cite{DP21} \\
  & $(5+\varepsilon)$-ANN & 
 $n \cdot \Oh(\frac{1}{\varepsilon})^{k} + \Oh(n m)$ 
  & $\Oh(k)$ & \cite{DP21} \\
  \hline\hline
 & $(1+\varepsilon)$-ANN & $n \cdot  \Oh(\frac{m}{k\varepsilon})^k$ & $\Oh(1)^k$ & Theorem~\ref{thm:onepluseps} \\ 
 & $(2+\varepsilon)$-ANN & $n  \cdot \Oh(\frac{m}{k\varepsilon})^k$ & $\Oh(k)$ & Theorem~\ref{thm:twopluseps_one} \\
 continuous, 1D & $(2+\varepsilon)$-ANN & 
  $n \cdot \Oh(\frac{1}{\varepsilon})^{k} + \Oh(n m)$ 
& $\Oh(1)^k$ 
 & Theorem~\ref{thm:twopluseps_two} \\
 & $(2+\varepsilon)$-ANN & 
 $\Oh(nm)$ & $\Oh(\frac{1}{\varepsilon})^{k+2}$ 
 & Theorem~\ref{thm:twopluseps_three} \\
 & $(3+\varepsilon)$-ANN & 
  $n \cdot \Oh(\frac{1}{\varepsilon})^{k} + \Oh(n m)$ 
 & $\Oh(k)$ & Theorem~\ref{thm:threepluseps} \\ [1ex] 
 \hline
\end{tabular}

\caption{Our conditional lower bounds. Each row gives an approximation ratio and a setting of $k$ and $m$ where any $\poly(n)$ preprocessing time and $\Oh(n^{1-\eps'})$ query time cannot be achieved simultaneously. The constants $\eps,\eps',c$ are quantified as $\forall \eps, \eps'>0\colon \exists c>0$. By $f(n) \ll g(n)$ we mean $f(n) = o(g(n))$.
We refer to the respective theorems in Section~\ref{section:lowerbounds} for the exact statements.}
\smallskip
\label{table:2}
\begin{tabular}{||c||c|c|c|c|c||} 
 \hline
 Fr\'echet dist. & Approx. & Preproc. & Query & Parameter Setting & Reference \\ [0.5ex] 
 \hline\hline
continuous, 1D & $2-\varepsilon$ &  $\poly(n)$  & $\Oh(n^{1-\eps'})$ & $1 \ll k \ll \log n$ and $m = k \cdot n^{c/k}$ & Thm.~\ref{thm:2minusepshard1d} \\
& $3-\varepsilon$ & $\poly(n)$ & $\Oh(n^{1-\eps'})$ & $m=k=c \log n$ & Thm.~\ref{thm:3minuseps_lb_1d} \\ [1ex] 
\hline
continuous, 2D & $3-\varepsilon$ & $\poly(n)$  & $\Oh(n^{1-\eps'})$ & $1 \ll k \ll \log n$ and $m = k \cdot n^{c/k}$ & Thm.~\ref{thm:3minuseps_lb_2d} \\ [1ex] 
 \hline
\end{tabular}
\end{table}

For the discrete Fr\'echet distance the ANN problem is by now well understood, but the continuous Fr\'echet distance remains very challenging. Therefore, in this paper we focus on the important special case of one-dimensional curves, which arise in various domains such as finance and signal processing, where they are typically called ``time series''. We give several new data structure bounds for the problem of approximate near neighbor searching for one-dimensional curves under the continuous Fr\'echet distance. 
Table~\ref{table:1} provides an overview of our upper bounds, compared to known results.
In the second part of our paper, we show that most of these upper bounds are tight under the Orthogonal Vectors Hypothesis, when viewed as offline problems where the input and the set of queries are given in advance. 
To obtain these lower bounds, we introduce a novel OV-hard variant of Orthogonal Vectors in which one set contains sparse vectors, i.e., vectors that only contain few 1s; this problem may be of independent interest.
Table~\ref{table:2} gives an overview of our lower bound results.
To argue that most of our upper bounds are tight, we consider the following general scenario:
\begin{quote}
\emph{Suppose we have an $\alpha$-ANN for some fixed constant $\alpha$, we run its preprocessing on a data set of $n$ curves, and then we run $n$ queries.}
\end{quote}
In particular, consider this scenario for the following three ranges of $\alpha$.
\begin{itemize}
\item $1 < \alpha < 2$: Using our $(1+\eps)$-ANN, this scenario takes total time $n \cdot \Oh(\frac{m}{k\varepsilon})^k$, which simplifies to $n \cdot \Oh(\frac{m}{k})^k$ since $\eps = \alpha-1$ is fixed. Assuming OVH, our first lower bound shows that this running time cannot be improved to $n \cdot f(k) \cdot (\frac m k)^{o(k)}$ for any function $f$, for the following reason. Pick $k = k(n)$ sufficiently small such that $f(k) = n^{o(1)}$. Pick $m = k \cdot n^{c/k}$, so that $(\frac m k)^{o(k)} = (\frac{k \cdot n^{c/k}}{k})^{o(k)} = n^{o(1)}$. Then the total running time would be $n \cdot f(k) \cdot (\frac m k)^{o(k)} = n^{1+o(1)}$, which contradicts that either the preprocessing time is superpolynomial or the query time near-linear, as stated in Theorem~\ref{thm:2minusepshard1d}. This shows that the factor $(\frac m k)^{\Theta(k)}$ in our running time is necessary.
Our second lower bound shows that the running time cannot be improved to $n \cdot (\frac m k)^{f(k)} \cdot 2^{o(k)}$ for any function~$f$, as for $m = k = c \log  n$ the total time would become $n \cdot (\frac m k)^{f(k)} \cdot 2^{o(k)} = n \cdot 1^{f(k)} \cdot n^{o(1)} = n^{1+o(1)}$, which contradicts that either the preprocessing time is superpolynomial or the query time near-linear, as stated in Theorem~\ref{thm:3minuseps_lb_1d}. This shows that the factor $\Oh(1)^k$ in our query time is necessary. In this sense, the running time of our $(1+\eps)$-ANN is tight.
    
\item $2 < \alpha < 3$: By using our second or third $(2+\eps)$-ANN (Theorem~\ref{thm:twopluseps_two} or~\ref{thm:twopluseps_three}) we solve this scenario in total time $\Oh(nm) + n \cdot \Oh(\frac 1 \eps)^{k+2}$, which simplifies to $\Oh(nm) + n \cdot \Oh(1)^k$ since $\eps = \alpha-2$ is fixed. Assuming OVH, our second lower bound shows that this cannot be improved to time $n \cdot (\frac m k)^{f(k)} \cdot 2^{o(k)}$ for any function~$f$, as for $m = k = c \log  n$ we would obtain a total time of $n \cdot (\frac m k)^{f(k)} \cdot 2^{o(k)} = n \cdot 1^{f(k)} \cdot n^{o(1)} = n^{1+o(1)}$, which contradicts that either the preprocessing time is superpolynomial or the query time near-linear, as stated in Theorem~\ref{thm:3minuseps_lb_1d}. This shows that the factor $\Oh(1)^k$ in our running time is necessary. In this sense, the running time of our $(2+\eps)$-ANNs from Theorems~\ref{thm:twopluseps_two} and~\ref{thm:twopluseps_three} are tight. (Our $(2+\eps)$-ANN from Theorem~\ref{thm:twopluseps_one} is not tight in this sense, but it realizes a different tradeoff between preprocessing and query time.)

\item $\alpha>3$: In this range, our ANNs still require exponential time in terms of $k$, but we cannot hope for a tight lower bound using the current techniques. This is due to a fundamental limitation of proving inapproximability factor $>3$ for a metric problem, cf.~e.g.~\cite[Open Question 3]{Rubinstein18}. For this reason, we have no tight lower bounds in this range.
\end{itemize}

\subsection{Technical Overview}\label{section:overview}

The high-level view of our data structures employs a well-known technique: exhaustively enumerate a strategic subset of the query space with a set of ``candidate'' query curves during preprocessing, and store the answers to these candidate queries in a dictionary. During query time, we apply a simple transformation to the query curve (such as rounding vertices to a scaled integer grid) and look up the answer in the dictionary. Filtser et al.~\cite{FiltserFK20}  used this technique for the discrete Fr\'echet distance and Driemel and Psarros~\cite{DP21} showed that it can also be applied for the continuous Fr\'echet distance of one-dimensional curves. A particular challenge that appears in the continuous case is that the doubling dimension can be unbounded, even if the complexity of the curves is small. Intuitively, what can happen is that the query contains some small noise that appears in the middle of a long edge. The continuous Fr\'echet distance---being robust to this noise---may match these short edges to the interior of a long edge on the near neighbor input curve. However, we cannot afford to generate all possible noisy query curves of this type, since this would introduce a dependency on the arclength in our time and space bounds. 
Driemel and Psarros overcome this challenge with the use of signatures, which allow to ``guess'' the approximate shape of a query curve within some approximation factor. The idea is that the signature acts as a ``low-pass'' filter that eliminates the noisy short edges. However, this is a delicate process as the signature may eliminate too many edges on one of the curves (either on the near neighbor or on the query curve) leading to the near neighbor being missed during query time. In addition, the process may introduce false-positives, hence the high approximation factor of $(5+\eps)$ in the result of~\cite{DP21}. 

We see our contributions as three-fold:
\begin{enumerate}
\item Our first contribution is to improve the approximation factors of Driemel and Psarros~\cite{DP21} while staying within the same time bounds, cf.~Table~\ref{table:1} for a comparison.
\begin{enumerate}
\item For Theorem~\ref{thm:threepluseps}, we use almost the same algorithm as Driemel and Psarros, but combine this with a more careful analysis based on new observations on the Fr\'echet distance of approximately monotone curves. As a result, we can achieve a $(3+\eps)$-approximation within the same time bounds as the previous $(5+\eps)$-ANN.
\item In Theorem~\ref{thm:onepluseps} we even achieve an approximation factor of $(1+\eps)$ within the same time bounds as the previous $(2+\eps)$-ANN. To achieve this result, we introduce the concept of \straightenings{} in Section~\ref{section:lemmas}. \Straightenings{} share some properties of signatures, but they provide a more refined approximation, leading to fewer false positives. They allow us to ``guess'' the shape of a query curve up to approximation factor $(1+\eps)$.
\end{enumerate}
We derive useful properties of both signatures and \straightenings{}. Central to our analysis is the concept of $\delta$-visiting orders, which we introduce in Section~\ref{section:lemmas} and analyze in Section~\ref{section:missingproofs}. 

\item Our second contribution is a range of data structures for the $(2+\eps)$-ANN which together provide a tradeoff between preprocessing time and query time (see Theorems~\ref{thm:twopluseps_one}, \ref{thm:twopluseps_two}, and~\ref{thm:twopluseps_three}). In each case, the preprocessing time implicitly bounds the number of candidates that are generated and therefore the size of the dictionary used by the data structure. Thus, these data structures also achieve a tradeoff between space and query time. An important observation that leads to this result is that the enumeration of candidates can be ``dualized'' and then be shifted from the preprocessing time to the query time. In the extreme case, this allows us to design a data structure that has linear preprocessing time and space, by performing most of the candidate generation during query time, see Theorem~\ref{thm:twopluseps_three} for the exact result.

\item Given the diverse range of upper bounds, it is natural to ask if these bounds can be improved. Our third main contribution is to show that most of our upper bounds are tight under the Orthogonal Vectors Hypothesis.
All known OV-based hardness results for the Fréchet distance encode each of the dimensions using at least one vertex, thus transforming $d$-dimensional vectors into curves of length $k=\Omega(d)$. Since OVH postulates a lower bound in dimension $d = c \log n$, it is thus natural to prove OV-based lower bounds for curves of length $k=c \log n$. Our lower bound in Theorem \ref{thm:3minuseps_lb_1d} handles this setting, cf.~Table~\ref{table:2}. 

However, for some of our lower bounds we require $k = o(\log n)$, as this is necessary to rule out time $(m/k)^{o(k)}$.
Surprisingly, we overcome the barrier of using at least one vertex per dimension. Specifically, we prove OV-based lower bounds for any $1 \ll k \ll \log n$, see Theorem~\ref{thm:2minusepshard1d}.
For this, we use two crucial observations: 
(i) it is possible to only encode the 1s of one vector set, while the 0s do not require any additional vertices on the curve, and (ii) we can show hardness of a variant of OV where one set contains only sparse vectors, i.e., vectors with a very small number of 1s. See Theorem \ref{thm:2minusepshard1d} for the hardness result we obtain in this case.
Interestingly, a similar construction is also possible for $(3-\eps)$-ANN for two-dimensional curves, see Theorem \ref{thm:3minuseps_lb_2d}.
\end{enumerate}

\subsubsection*{Organization}
In Section~\ref{section:prelims} we define the notation and state some known facts and observations. In Section~\ref{section:lemmas} we define key concepts, and we present their properties and our main technical lemmas. 
Our data structures are described and analyzed in Sections \ref{section:datastructure1apprx}, \ref{section:datastructure2apprx}, and \ref{section:datastructure3apprx}. 
In Section~\ref{section:missingproofs} we prove our main technical lemmas. 
In Section~\ref{section:lowerbounds} we present our conditional lower bounds. 

\section{Preliminaries}\label{section:prelims}

For any positive integer $n$, we define $[n] \coloneqq \{1,\ldots,n\}$. For any two points $p,q\in \RR^d$, $\overline{pq}$ denotes the directed line segment connecting $p$ with $q$ in the direction from $p$ to $q$.
Any sequence of points $p_1,\ldots,p_m \in \RR^d$ defines a polygonal curve formed by the ordered line segments $\overline{p_i p_{i+1}}$. We call the points $p_i$ the \emph{vertices} of the curve and the line segments $\overline{p_i p_{i+1}}$ the \emph{edges}. The resulting curve can be viewed as a continuous function $P\colon [0,1]\mapsto \RR^d$. For $d=1$, we may refer to the curve as a \emph{one-dimensional curve} or as a \emph{time series}. We define the \emph{complexity} of a polygonal curve $P$ as the number of its vertices and we denoted it by $|P|$.
We say a polygonal curve is \emph{degenerate} if there are three consecutive vertices $p,q,r$, such that $q$ lies on the line segment $\overline{p r}$.  In this case, we call $q$ a degenerate vertex of this curve.
Given a sequence of points $p_1,\ldots,p_m$, we can define a non-degenerate curve by omitting degenerate vertices. We denote the resulting curve by $\seqtocurve{ p_1,\ldots,p_m}$. Note that for one-dimensional curves, the vertices of the resulting non-degenerate curve are the extrema of the function.  
For any two $0\leq t_a<t_b\leq1$ and any curve $P$, we denote by $P[t_a,t_b]$ the subcurve of $P$ starting at $P(t_a)$ and ending at $P(t_b)$. For any two curves $P$, $Q$, with vertices $p_1,\ldots,p_a$ and $p_b,\ldots,p_m$, respectively, 
$P \circ Q $ denotes the polygonal curve $\seqtocurve{ p_1,\ldots,p_a,p_b, \ldots p_m } $, that is the \emph{concatenation} of $P$ and $Q$. For $n$ polygonal curves $P_1,\ldots,P_n$, we denote by $\bigcirc_{i=1}^n P_i$ the concatenation $P_1\circ P_2 \circ \cdots \circ P_n$. 
Given a polygonal curve $P = \seqtocurve{p_1, \dots, p_m}$ and a point $x$ in $\RR^d$, we define the translated curve as $P + x \coloneqq \seqtocurve{p_1 + x, \dots, p_m + x}$.
For a point $x\in \RR^d$ and a polygonal curve $P$, we use the notation $x \in P$ to indicate that there exists a $t\in[0,1]$ such that $P(t)=x$. Let $\GGG_{\eps}:=\{i\cdot \eps \mid i\in \ZZ\}$ be the regular grid with side-length $\eps>0$.

We will use the following known observations~(see also \cite{buchin2008computing} and \cite{driemel2013jaywalking}).

\begin{observation}\label{observation:linesegment}
For any two line segments $X=\overline{ab}$, $Y=\overline{cd}$ it holds that
$\df(X,Y)=\max\{\|a-c\|,\|b-d\|\}$.
\end{observation}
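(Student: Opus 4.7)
The plan is to prove the two inequalities $\df(X,Y)\ge \max\{\|a-c\|,\|b-d\|\}$ and $\df(X,Y)\le \max\{\|a-c\|,\|b-d\|\}$ separately, using only the definition of the Fréchet distance together with the triangle inequality.

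For the lower bound, I would argue as follows. Any pair of reparametrizations $f,g\in\mathcal T$ must be monotone and surjective maps $[0,1]\to[0,1]$, which forces $f(0)=g(0)=0$ and $f(1)=g(1)=1$. Writing $X(s)=(1-s)a+sb$ and $Y(s)=(1-s)c+sd$, this means that at $t=0$ the traversal realizes the pair $(a,c)$ and at $t=1$ the pair $(b,d)$. Hence the maximum pointwise distance along any traversal is at least $\max\{\|a-c\|,\|b-d\|\}$, and taking the minimum over all $(f,g)\in\mathcal T$ preserves this inequality.

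For the upper bound, I would exhibit a specific traversal that achieves this value, namely the linear one $f(t)=g(t)=t$. Along this traversal, the pointwise distance at time $t$ equals
\[
\bigl\|X(t)-Y(t)\bigr\| = \bigl\|(1-t)(a-c) + t(b-d)\bigr\| \le (1-t)\|a-c\| + t\|b-d\| \le \max\{\|a-c\|,\|b-d\|\},
\]
by the triangle inequality and convexity. Taking the maximum over $t\in[0,1]$ and then the minimum over all traversals yields the desired upper bound.

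There is no real obstacle here; the only subtlety is being careful that the Fréchet definition allows arbitrary monotone surjective reparametrizations, so we must ensure that the endpoint argument for the lower bound truly applies to \emph{every} $(f,g)\in\mathcal T$ (which it does, since monotonicity plus surjectivity pins down $f(0),g(0),f(1),g(1)$). Combining the two bounds gives the claimed equality.
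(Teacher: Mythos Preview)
Your proof is correct and is the standard argument for this well-known fact. The paper does not actually prove this observation; it states it as known and refers to \cite{buchin2008computing} and \cite{driemel2013jaywalking}, so there is nothing to compare against beyond noting that your argument is exactly the elementary one those references (implicitly or explicitly) rely on.
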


\begin{observation}
\label{observation:concatenation}
Let two polygonal curves $Q : [0, 1]\mapsto \RR^d$ and
$P : [0, 1] \mapsto \RR^d$ be the concatenations of two subcurves
each, $Q = Q_1\circ Q_2$ and $P= P_1\circ P_2$. Then it holds that
$\df (P, Q) \leq \max \{\df (Q_1, P_1), \df (Q_2, P_2)\}$.
\end{observation}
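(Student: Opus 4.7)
The observation is a standard and essentially well-known fact about the Fréchet distance, so the proof will be short. The plan is to exhibit a traversal of $P$ and $Q$ whose maximum pointwise distance is bounded by $\max\{\df(P_1,P_2), \df(Q_1,Q_2)\}$, by stitching together optimal traversals for the two pairs of subcurves. Since the Fréchet distance is the infimum over all traversals, this immediately gives the desired inequality.

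First, I would take optimal reparametrizations $(f_1, g_1)$ and $(f_2, g_2)$ of $[0,1]$ realizing $\df(P_1, Q_1)$ and $\df(P_2, Q_2)$ respectively (these exist since the curves are polygonal, but one can alternatively work with $\eps$-approximate minimizers and take a limit). Next, let $\alpha_P, \alpha_Q \in [0,1]$ be the parameter values at which $P$ and $Q$ are split into their two subcurves, i.e., $P[0,\alpha_P] = P_1$, $P[\alpha_P,1] = P_2$, and analogously for $Q$.

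The key construction is to define reparametrizations $f,g : [0,1] \to [0,1]$ by
\[
f(t) = \begin{cases} \alpha_P \cdot f_1(2t) & \text{if } t \in [0, 1/2], \\ \alpha_P + (1-\alpha_P) \cdot f_2(2t-1) & \text{if } t \in [1/2, 1], \end{cases}
\]
and symmetrically for $g$ with $\alpha_Q$. I would then verify the three conditions required: (i) continuity at $t=1/2$, which follows because $f_1(1) = 1$ and $f_2(0) = 0$ make both one-sided limits equal to $\alpha_P$; (ii) monotonicity, which is inherited from $f_1, f_2$; and (iii) surjectivity onto $[0,1]$, which is inherited from the surjectivity of $f_1, f_2$ together with the fact that the image on each half covers $[0,\alpha_P]$ and $[\alpha_P,1]$ respectively.

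Finally, for $t \in [0, 1/2]$ we have $P(f(t)) = P_1(f_1(2t))$ and $Q(g(t)) = Q_1(g_1(2t))$, so $\|P(f(t)) - Q(g(t))\| \le \df(P_1, Q_1)$; an analogous identity holds on $[1/2,1]$ with $P_2, Q_2$. Taking the maximum over $t \in [0,1]$ bounds the cost of this traversal by $\max\{\df(P_1,Q_1), \df(P_2,Q_2)\}$, which upper-bounds $\df(P,Q)$. The only real care required is in verifying that the stitched reparametrization is genuinely monotone and surjective at the junction $t = 1/2$; this is the one place where an incorrect rescaling could break the argument, but it goes through cleanly with the definition above.
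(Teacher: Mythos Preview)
Your proof is correct and is the standard argument; the paper itself does not give a proof of this observation but simply cites it as known (see the references to \cite{buchin2008computing} and \cite{driemel2013jaywalking} preceding the statement). Two minor remarks: in your first paragraph you wrote $\max\{\df(P_1,P_2),\df(Q_1,Q_2)\}$ where you clearly meant $\max\{\df(P_1,Q_1),\df(P_2,Q_2)\}$; and note that the paper's concatenation operator $\circ$ does not require the last vertex of $P_1$ to coincide with the first vertex of $P_2$, so in full generality there may be a connecting edge --- your stitching argument still works there (traverse the two connecting segments simultaneously and invoke Observation~\ref{observation:linesegment}), but strictly speaking your identification $P[0,\alpha_P]=P_1$, $P[\alpha_P,1]=P_2$ only covers the case of matching endpoints.
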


\begin{observation}\label{observation:shortcut}
Let $Q$ be a line segment and let $P$ be a curve with $\df(P,Q) \leq \delta$. Let $P'$ be a curve that is formed from a subsequence of the vertex sequence of $P$ including the first and last vertex of $P$. Then, $\df(P',Q) \leq \delta$.
\end{observation}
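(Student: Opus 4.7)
The plan is to construct an explicit traversal of $P'$ and $Q$ by restricting a realizing traversal of $P$ and $Q$ to the kept vertices, then invoking Observations~\ref{observation:linesegment} and~\ref{observation:concatenation} to bound the Fr\'echet distance piecewise. Concretely, I would first fix a realizing traversal $\phi = (f,g)$ of $P$ and $Q$ witnessing $\df(P,Q) \leq \delta$. For every vertex $p_i$ of $P$ I pick a parameter $t_i \in [0,1]$ with $P(f(t_i)) = p_i$, choosing $t_1 = 0$, $t_m = 1$, and the $t_i$ weakly increasing in $i$; this is possible by the monotonicity and surjectivity of $f$. Setting $q_i := Q(g(t_i))$ gives a sequence with $\|p_i - q_i\| \leq \delta$ and, since $g$ is monotone and $Q$ is a single line segment, the points $q_1, \ldots, q_m$ lie weakly monotonically along $Q$, with $q_1$ at the start and $q_m$ at the end of $Q$.

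Next, write $P' = \seqtocurve{p_{i_1}, \ldots, p_{i_\ell}}$ with $1 = i_1 < i_2 < \cdots < i_\ell = m$. The corresponding subsequence $q_{i_1}, \ldots, q_{i_\ell}$ partitions $Q$ into consecutive subsegments $Q_j$ from $q_{i_j}$ to $q_{i_{j+1}}$ for $j = 1, \ldots, \ell-1$, so that $Q = Q_1 \circ Q_2 \circ \cdots \circ Q_{\ell-1}$. Applying Observation~\ref{observation:linesegment} to the edge $E_j := \overline{p_{i_j} p_{i_{j+1}}}$ of $P'$ and the subsegment $Q_j$ yields
\[
 \df(E_j, Q_j) = \max\{\|p_{i_j} - q_{i_j}\|, \|p_{i_{j+1}} - q_{i_{j+1}}\|\} \leq \delta.
\]
An inductive application of Observation~\ref{observation:concatenation} to $P' = E_1 \circ \cdots \circ E_{\ell-1}$ and $Q = Q_1 \circ \cdots \circ Q_{\ell-1}$ then gives $\df(P', Q) \leq \max_j \df(E_j, Q_j) \leq \delta$, as claimed.

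The only subtle step is the construction of the monotone sequence of matched points $q_i$ on $Q$ and the verification that $q_1$ and $q_m$ coincide with the endpoints of $Q$; once this is set up, the rest is a routine piecewise composition. A minor bookkeeping detail is that some consecutive $q_{i_j}$ may coincide, in which case the corresponding $Q_j$ degenerates to a point, but Observation~\ref{observation:linesegment} still applies with the same bound, so the argument goes through unchanged.
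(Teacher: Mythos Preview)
Your argument is correct. The paper states this observation without proof, listing it among several ``known observations'' with references to prior work, so there is no in-paper proof to compare against. Your approach---restricting a realizing traversal to obtain monotone matching points $q_{i_j}$ on the segment $Q$, then applying Observation~\ref{observation:linesegment} edgewise and Observation~\ref{observation:concatenation} to stitch the pieces together---is the standard and natural way to justify this fact, and all the steps you describe go through as stated.
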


We also make use of an algorithm by Alt and Godau~\cite{AltG95} for deciding whether the Fr\'echet distance between two polygonal curves exceeds a given threshold. 
 \begin{theorem}[\cite{AltG95}]\label{theorem:frechetdecision}
 There is an algorithm which, given polygonal curves $P$, $Q$ and a threshold parameter $\delta>0$, decides in $\Oh(|P|\cdot |Q|)$ time whether $\df(P,Q)\leq \delta$.
 \end{theorem}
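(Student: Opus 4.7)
The plan is to use the classical \emph{free space diagram} approach. Let $P$ have vertices $p_1,\ldots,p_a$ and $Q$ have vertices $q_1,\ldots,q_b$. Define the free space
\[
F_\delta \coloneqq \{(s,t)\in[0,1]^2 \mid \|P(s)-Q(t)\|_2 \le \delta\}.
\]
The first step is to observe, directly from the definition of the Fr\'echet distance, that $\df(P,Q)\le \delta$ if and only if there is a continuous, monotone (nondecreasing in each coordinate) path in $F_\delta$ from $(0,0)$ to $(1,1)$. Indeed, any pair $(f,g)$ of reparametrizations witnessing $\df(P,Q)\le\delta$ traces out exactly such a path, and conversely any such path can be rescaled into a valid pair of reparametrizations.

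Next I would analyze the combinatorial structure of $F_\delta$. The unit square is partitioned by the grid lines $s = i/(a-1)$ and $t = j/(b-1)$ into $(a-1)(b-1)$ cells, each corresponding to an edge $\overline{p_i p_{i+1}}$ of $P$ paired with an edge $\overline{q_j q_{j+1}}$ of $Q$. Within such a cell, $P(s)$ and $Q(t)$ are affine in $s$ and $t$, so $\|P(s)-Q(t)\|_2 \le \delta$ cuts out a (possibly empty) intersection of the cell with an ellipse, hence a convex region. Consequently the intersection of $F_\delta$ with each horizontal and vertical cell boundary is a single interval, computable in $\Oh(1)$ time from the affine formulas.

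Having this structure, I would define for every vertical and horizontal cell edge a \emph{reachable interval}: the subset of its free-space interval that is reachable from $(0,0)$ by a monotone path in $F_\delta$. The central observation is that the reachable interval on the right (resp.~top) boundary of a cell depends only on (i) the free-space intervals on all four of its boundaries and (ii) the reachable intervals on its left and bottom boundaries, and can be computed from these in $\Oh(1)$ time using the convexity of the cell's free space. Processing cells in bottom-up, left-to-right order (dynamic programming) thus propagates reachability through all $\Oh(ab)$ cells in total time $\Oh(ab) = \Oh(|P|\cdot|Q|)$. The algorithm answers ``yes'' iff the top-right corner $(1,1)$ is reachable.

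The only real subtlety, which I would call out as the main obstacle, is correctly handling the reachability update across a single cell: one must take the reachable left-edge interval, propagate it monotonically upward within the cell (restricting to the free ellipse), and combine with the reachable bottom-edge interval propagated rightward, to produce the reachable right and top intervals. Convexity of the cell's free space makes each propagated set an interval and yields explicit closed-form expressions, so the update is $\Oh(1)$. Aggregating over all cells yields the claimed $\Oh(|P|\cdot|Q|)$ bound.
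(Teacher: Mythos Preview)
Your proposal is correct and is precisely the classical free-space diagram argument of Alt and Godau~\cite{AltG95}. Note that the paper itself does not give a proof of this theorem at all: it is stated as a cited result from~\cite{AltG95} and used as a black box, so there is no ``paper's own proof'' to compare against---your sketch simply reproduces the argument from the original reference.
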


Our data structures can be implemented to work on the Word-RAM and under certain assumptions on the Real-RAM, as discussed next. Central to our approach is the use of a dictionary, which we define as follows.


\begin{Definition}[Dictionary]
A \emph{dictionary} is a data structure which  stores a set of (key, value) pairs and when presented with a key, either returns the corresponding value, or returns that the key is not stored in the dictionary.
\end{Definition}

In the Word-RAM model, such a dictionary can be implemented using perfect hashing. For storing a set of $n$ (key,value) pairs, where the keys come from a universe $U^k$,  perfect hashing provides us with a dictionary using $\Oh(n)$ space and $\Oh(k)$ query time which can be constructed in  $\Oh(n)$ expected time~\cite{FKS84}.   During look-up, we compute the hash function in $\Oh(k)$ time, we access the corresponding bucket in the hashtable in $\Oh(1)$ time and check if the key stored there is equal to the query in $\Oh(k)$ time. This gives an efficient randomized implementation of dictionaries. Alternatively, we can use balanced binary search trees and pay an additional $\log n$ factor in preprocessing and query time of the dictionary. This deterministic algorithm also works in the Real-RAM model, if we assume that the floor function can be computed in constant time---a model which is often used in the literature~\cite{H11}.   
In the Word-RAM model, we use the standard assumption that the word size is logarithmic in the size of the input, and we ensure that all numbers (vertices of the time series, results of intermediate computations, etc.) are restricted to be of the form $a/b$ where $a$ is an integer in $[-(nm)^{\Oh(1)},(nm)^{\Oh(1)}]$ and $b=(nm)^{\Oh(1)}$. 

\section{Simplifications, signatures, and \straightenings}\label{section:lemmas}


In this section we state the main definitions and lemmas. 
To allow for an easier understanding of our results, we then already describe our algorithms and prove correctness using these lemmas. In Section \ref{section:missingproofs} we then give the proofs of the lemmas presented in the current section.

\subsection{Definitions}

Let us start with two basic definitions.

\begin{Definition}
We say a curve $P:[0,1]\rightarrow \RR$ is \emph{\deltamonotone{}} if one of the following statements holds:
\begin{compactenum}[(i)]
    \item $\forall~ t<t'\in [0,1]:$ $P(t')\geq P(t) -\delta $,
    \item $\forall~ t<t'\in [0,1]:$ $P(t')\leq P(t) +\delta $. 
\end{compactenum}
More specifically, we say the curve is $\delta$-\emph{monotone increasing} in case (i) and $\delta$-\emph{monotone decreasing} in case (ii). Note that a curve can be both $\delta$-monotone increasing and decreasing at the same time. In addition, we may say  $P$ is $\delta$-monotone with respect to a directed edge $\overline{ab}$, if $a \leq b$ in case (i) and if $b \leq a$ in case (ii).
\end{Definition}

\begin{Definition}
The $\delta$-range of a point $p \in \RR$ is the interval $B(p,\delta)=[p-\delta,p+\delta]$. The $\delta$-range of a curve $P$ is the interval $B(P,\delta)=\bigcup_{x\in P} B(x,\delta)$. 
\end{Definition}

We now define the notion of \emph{simplification} that we use in this work.

\begin{Definition}[$\delta$-simplification]
Given a curve $ P:~ [0, 1] \mapsto \RR^d$, a $\delta$-simplification is a curve $P' :~ [0, 1] \mapsto \RR^d$ that is given as $P'= \langle P(t_1),\dots,P(t_{\ell}) \rangle$ for a sequence of values $0=t_1 < \dots < t_{\ell}=1$, such that each $P(t_i)$ is a vertex of $P$,
$P'$ is non-degenerate, and
\begin{equation}\label{locality-property}
\df(\overline{P(t_i) P(t_{i+1})}, P[t_i,t_{i+1}]) \leq \delta,\ \text{for all } 1 \leq i < \ell.
\end{equation}
\end{Definition}

We also refer to (\ref{locality-property}) as the \emph{locality property}. Furthermore, note that if $P'$ is a $\delta$-simplification of $P$, then $\df(P,P') \leq \delta$ and the complexity of $P'$ is at most the complexity of $P$.
Note that the vertices of a $\delta$-simplification $P'$ give us a natural partition of $P$. Furthermore, we want to highlight that our definition of a simplification is one out of many definitions that are used in literature. In particular, in other work curves which are degenerate or non vertex-restricted are also called simplifications.
Now we define some properties that a simplification can or must have.

\begin{observation}[direction-preserving property]
For any  $\delta$-simplification $P' = \langle P(t_1),\dots,P(t_{\ell}) \rangle$ of a curve  $ P:~ [0, 1] \mapsto \RR$ and any index $i$, the subcurve $P[t_i,t_{i+1}]$ is $2\delta$-monotone with respect to $\overline{P(t_i)P(t_{i+1})}$.
\end{observation}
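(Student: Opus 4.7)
My plan is to unpack the locality property, which gives $\df(\overline{P(t_i)P(t_{i+1})},\, P[t_i,t_{i+1}]) \leq \delta$ at every index $i$, by passing to a realizing pair of reparametrizations. This yields a monotone surjective pairing between $P[t_i,t_{i+1}]$ and the straight segment $\overline{P(t_i)P(t_{i+1})}$ that keeps paired points within Euclidean distance $\delta$.

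Without loss of generality I would assume $P(t_i) \leq P(t_{i+1})$ (the case $P(t_i) > P(t_{i+1})$ is symmetric) and target the increasing version of $2\delta$-monotonicity. Fix arbitrary $t_i \leq t < t' \leq t_{i+1}$. From the traversal I would extract the points $a,b$ on the segment paired with $P(t)$ and $P(t')$. Since the reparametrizations are monotone and $t < t'$, the point $a$ appears no later than $b$ along the segment. Because the segment is parametrized from the smaller endpoint $P(t_i)$ toward the larger endpoint $P(t_{i+1})$, this positional ordering translates to the numerical inequality $a \leq b$; this is precisely the place where the qualifier ``with respect to $\overline{P(t_i)P(t_{i+1})}$'' earns its keep.

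Combining the pieces is then a one-line chain: using $|P(t)-a| \leq \delta$ and $|P(t')-b| \leq \delta$ together with $a \leq b$, I get
\[
P(t') \;\geq\; b - \delta \;\geq\; a - \delta \;\geq\; P(t) - 2\delta,
\]
which is exactly $2\delta$-monotone increasing on $[t_i,t_{i+1}]$. The symmetric case flips the inequalities and gives $2\delta$-monotone decreasing when $P(t_i) > P(t_{i+1})$.

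I do not anticipate a real obstacle. The only subtlety worth being careful about is the bridge between ``earlier in the traversal along the segment'' and ``smaller real value on $\RR$''; once the direction of $\overline{P(t_i)P(t_{i+1})}$ is fixed, this is immediate and selects which of the two monotonicity cases applies.
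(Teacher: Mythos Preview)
Your argument is correct. The paper states this observation without proof; your direct derivation from the locality property via a realizing traversal is exactly the intended reasoning, and it coincides with the ``only if'' direction that the paper spells out later for Observation~\ref{lemma:montonecurvesimple} (there phrased contrapositively).
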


\begin{Definition}[vertex-range-preserving property]
Let $P' = \langle P(t_1),\dots,P(t_{\ell}) \rangle$ be a $\delta$-simplification of a  curve $P:~[0,1]\mapsto \RR$.
We say $P'$ is range-preserving on the vertex $P(t_i)$ if the following holds:
\begin{enumerate}[(i)]
\item if $P(t_i)$ is a local maximum on $P'$, then $P(t) \leq P(t_i)$ for all $t$ in $[t_{i-1},t_{i+1}]$, and
\item if $P(t_i)$ is a local minimum on $P'$, then $P(t) \geq P(t_i)$ for all $t$ in $[t_{i-1},t_{i+1}]$.
\end{enumerate}
We say $P'$ is vertex-range-preserving, if it is vertex-range-preserving on all interior vertices.
\end{Definition}

\begin{Definition}[edge-range-preserving property]
Let $P' = \langle P(t_1),\dots,P(t_{\ell}) \rangle$ be a  $\delta$-simplification of $P:~[0,1]\mapsto \RR$. 
We say that $P'$ is edge-range-preserving on edge $\overline{P(t_i)P(t_{i+1})}$ if for any $t \in [t_i, t_{i+1}]$ it holds that $P(t) \in \overline{P(t_i)P(t_{i+1})}$. We say $P'$ is edge-range-preserving if this condition holds for all edges of $P'$.
\end{Definition}

Note that the vertex-range-preserving property is implied by the edge-range-preserving property, but not the other way around. However, the vertex-range preserving property implies the edge-range-preserving property on all edges except the first and the last edge. 

\begin{Definition}[$\delta$-edge-length property]
We say that a one-dimensional curve $P = \langle p_1, \dots, p_m \rangle$ has the $\delta$-edge-length property if
\begin{itemize}
    \item $|p_1 - p_2| > \delta$ and $|p_{m-1} - p_m| > \delta$, and
    \item $|p_i - p_{i+1}| > 2\delta$ for all $i \in \{2,\ldots,m-2\}$.
\end{itemize}
\end{Definition}

Finally, we can define two of the main concepts that we use in our algorithms: $\delta$-signatures and $\delta$-\straightenings. These two definitions help us to preprocess the input set of one-dimensional curves and the query curve in ways such that an efficient retrieval is possible.

\begin{Definition}[$\delta$-signature]
A  $\delta$-simplification $P'$ of a one-dimensional curve $P$ is a $\delta$-signature if it has the $\delta$-edge length property and is vertex-range-preserving. 
\end{Definition}

\begin{Definition}[$\delta$-\straightening]
A $\delta$-simplification $P'$ of a one-dimensional curve $P$ is a $\delta$-\straightening if it is edge-range-preserving.
\end{Definition}

The above definition of a $\delta$-signature is equivalent to the definition given in~\cite{DKS16}. 
For any $\delta>0$ and any curve $P:~[0,1]\mapsto \RR$ of complexity $m$, a $\delta$-signature of $P$ can be computed in $\Oh(m)$ time \cite{DKS16}. 
The $\delta$-signature of a   curve is unique under certain general-position assumptions, however we do not explicitly use this property in our proofs.
Note that $\delta$-\straightenings are \emph{not} unique. In fact, there can be many different $\delta$-\straightenings of the same curve, e.g., $P$ itself is a $\delta$-\straightening of $P$ for any $\delta > 0$. 
We give an example of a signature and different \straightenings of the same curve in Figure~\ref{fig:defexamples}.

\begin{figure}
    \centering
	\includegraphics[scale=1.1]{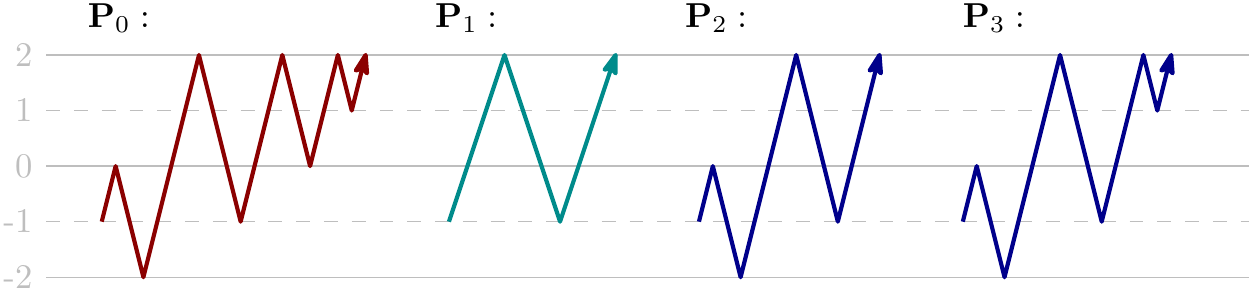}
    \caption{$P_1$ is a $1$-signature of $P_0$, whereas $P_2$ and $P_3$ are  $1$-\straightenings of $P_0$.}
    \label{fig:defexamples}
\end{figure}


We introduce the notion of visiting orders, which we will use to prove correctness of our data structures.   
\begin{Definition}\label{def:visitingorder}
Let $P:[0,1] \rightarrow \RR$ and $Q:[0,1] \rightarrow \RR$ be curves. Let $u_1,\dots,u_{\ell}$ denote the ordered vertices of $Q$ and let $v_1,\dots,v_{m}$ denote the ordered vertices of $P$. A (partial) \textbf{$\delta$-visiting order} of $Q$ on $P$ is a sequence of indices $i_1 \leq \dots \leq i_{\ell}$, such that $|u_j - v_{i_{j}}| \leq \delta$ for each vertex $u_{j}$ of $Q$. 
\end{Definition}
In particular, if we know that there exists a $\delta$-visiting order of $Q$ on $P$, then we can approximately ``guess'' $Q$ from the vertex sequence of $P$, by enumerating all possible visiting orders of the vertices of $P$ and for any fixed visiting order, enumerating all eligible grid sequences within the $\delta$-ranges of these vertices.

Driemel, Krivosija and Sohler proved the following lemma (rephrased using $\delta$-visiting orders).
\begin{lemma}[Lemma 3.2 \cite{DKS16}]\label{lemma:signatures2} 
Let $P:[0,1] \rightarrow \RR$ and $Q:[0,1] \rightarrow \RR$ be curves and let $P'$ be a $\delta$-signature of $P$. If $\df (P,Q) \leq \delta$, then there exists a $\delta$-visiting order of $P'$ on $Q$. 
\end{lemma}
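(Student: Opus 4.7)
The plan is to construct the required visiting order directly from a Fr\'echet matching between $P$ and $Q$. Fix monotone surjections $f, g \colon [0,1] \to [0,1]$ realising $\df(P,Q) \leq \delta$, so $|P(f(s)) - Q(g(s))| \leq \delta$ for every $s$. For each signature vertex $u_j = P(t_j)$ choose $s_j$ with $f(s_j) = t_j$, making the choice monotonically and with $s_1 = 0$, $s_\ell = 1$, and set $\sigma_j := g(s_j)$. Then $0 = \sigma_1 \leq \sigma_2 \leq \cdots \leq \sigma_\ell = 1$ and $|u_j - Q(\sigma_j)| \leq \delta$. For the two boundary vertices I take $i_1 := 1$ and $i_\ell := m$; this works because $f$ and $g$ fix the endpoints, forcing $|u_1 - v_1|, |u_\ell - v_m| \leq \delta$.

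For an interior vertex $u_j$, vertex-range-preservation makes $u_j$ a local maximum or minimum of $P'$; I treat the maximum case, the minimum being symmetric. I isolate a ``peak region'' on $P$ by taking $[\alpha_j, \beta_j]$ to be the connected component of $\{t : P(t) \geq u_j - \delta\}$ that contains $t_j$. The $\delta$-edge-length property ensures $u_{j\pm 1} < u_j - \delta$, so $t_{j\pm 1} \notin [\alpha_j, \beta_j]$, and vertex-range-preservation yields $P(t) \in [u_j - \delta, u_j]$ on $[\alpha_j, \beta_j]$. Choosing $s_j^\pm$ with $f(s_j^\pm) = \alpha_j, \beta_j$ and $s_j^- \leq s_j \leq s_j^+$, and setting $\sigma_j^\pm := g(s_j^\pm)$, yields an interval $[\sigma_j^-, \sigma_j^+] \subseteq [\sigma_{j-1}, \sigma_{j+1}]$ containing $\sigma_j$, on which $Q \leq u_j + \delta$. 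Combined with $Q(\sigma_j) \geq u_j - \delta$, the maximum of $Q$ over $[\sigma_j^-, \sigma_j^+]$ therefore lies in $[u_j - \delta, u_j + \delta]$.

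The crux is to realise this maximum at an actual vertex of $Q$. When both edges incident to $u_j$ are interior edges of $P'$ (so $3 \leq j \leq \ell - 2$), the edge-length property gives $u_j - u_{j\pm 1} > 2\delta$ strictly; enlarging the defining threshold of the peak region slightly beyond $\delta$ forces $Q(\sigma_j^\pm) < u_j - \delta$, so the maximum is attained in the open interval $(\sigma_j^-, \sigma_j^+)$ and, by non-degeneracy of $Q$, at a local-maximum vertex $v_{i_j}$ with $|u_j - v_{i_j}| \leq \delta$. Monotonicity of the chosen indices then follows at once: for consecutive interior vertices $u_j$ (max) and $u_{j+1}$ (min), the peak region of $u_j$ and the valley region of $u_{j+1}$ on $P$ must be disjoint, since $P$ cannot simultaneously exceed $u_j - \delta$ and lie below $u_{j+1} + \delta$ when $u_j - u_{j+1} > 2\delta$; this disjointness is transported by $f$ and $g$ to the corresponding intervals on $Q$.

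The expected main obstacle is the handling of $u_2$ and $u_{\ell-1}$, the two interior vertices adjacent to the short boundary edges, where only $u_j - u_{j\pm 1} > \delta$ is guaranteed and the strict-interior argument may break. If the maximum of $Q$ on $[\sigma_2^-, \sigma_2^+]$ sits at $\sigma_2^-$ without $\sigma_2^-$ being a vertex, then $\sigma_2^-$ lies in the interior of a decreasing edge of $Q$, so $Q$ is strictly higher immediately to the left; walking leftward one reaches either a local-maximum vertex in $(0, \sigma_2^-)$ or the starting vertex $v_1 = Q(0)$. In the latter case, $v_1 \leq u_1 + \delta$ combined with $u_2 - u_1 > \delta$ and $v_1 > Q(\sigma_2^-) \geq u_2 - \delta$ gives $|v_1 - u_2| < \delta$, so $i_2 = 1$ works. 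Either way the chosen vertex lies in $[0, \sigma_2^-]$, strictly before $[\sigma_3^-, \sigma_3^+]$, preserving monotonicity of the index sequence; the vertex $u_{\ell-1}$ is handled symmetrically.
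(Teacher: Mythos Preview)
The paper does not supply its own proof of this lemma; it is quoted from~\cite{DKS16} without argument, so there is nothing in-paper to compare against. Evaluating your proposal on its own merits, the overall strategy---transport a Fr\'echet matching, isolate a peak region on $P$ around each interior signature vertex, and locate a vertex of $Q$ at the maximum over the transported interval---is the natural one, but the central quantitative step is wrong as written.

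With the peak region defined by threshold $\delta'$, the boundary satisfies $P(\alpha_j)=u_j-\delta'$, and the matching only yields $Q(\sigma_j^{-})\le P(\alpha_j)+\delta=u_j-(\delta'-\delta)$. For this to fall strictly below $u_j-\delta$ (so that the maximum, which is at least $Q(\sigma_j)\ge u_j-\delta$, is forced into the open interval) you need $\delta'>2\delta$, not ``slightly beyond $\delta$''. With $\delta'=\delta+\varepsilon$ you get only $Q(\sigma_j^{-})\le u_j-\varepsilon$, which may exceed $u_j-\delta$; the maximum can then sit at the endpoint $\sigma_j^{-}$ and need not be a vertex of $Q$. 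If instead you take $\delta'>2\delta$ (possible for $3\le j\le\ell-2$ since both incident edges have length $>2\delta$), your monotonicity argument breaks: you argue disjointness of the peak region of $u_j$ and the valley region of $u_{j+1}$ using the $\delta$-thresholds, but the vertices $v_{i_j},v_{i_{j+1}}$ you actually select lie in the larger $\delta'$-regions. Disjointness of those would require $u_j-u_{j+1}>\delta'_j+\delta'_{j+1}>4\delta$, while the signature only guarantees $u_j-u_{j+1}>2\delta$; hence the two regions on $Q$ may overlap and $i_j\le i_{j+1}$ does not follow. The argument can be repaired---for instance by treating the endpoint-maximum case for \emph{all} interior vertices via the ``walk to the adjacent vertex of $Q$'' device you already use for $u_2$, together with a more careful ordering argument---but as it stands the proof has a genuine gap at this coupling of the two thresholds.
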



\subsection{Main lemmas}

In this section we present the main lemmas for signatures and \straightenings that we will use in Sections~\ref{section:datastructure1apprx}~to~\ref{section:datastructure3apprx}. Their proofs are deferred to Section~\ref{section:missingproofs}.

Most of our lemmas improve the basic triangle inequality $\df(P,Q) \le \df(P,X) + \df(X,Q)$ in some situations involving signatures and straightenings.

\begin{restatable}{lemma}{lemmastraightenings}\label{lemma:simplproxy}\label{lemmastraightenings}

 Let $P:[0,1]\mapsto \RR$ and $Q:[0,1]\mapsto \RR$ be two curves and let $Q'$ be any $\delta$-\straightening of $Q$. If $\df(P,Q') \leq  \delta$ then $\df(P,Q)\leq \delta$.
\end{restatable}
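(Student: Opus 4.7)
The plan is to construct a traversal of $(P,Q)$ from a realizing traversal of $(P,Q')$ by exploiting the edge-range-preserving property of the \straightening. I would start by taking a realizing traversal $\phi=(f,g)$ of $(P,Q')$ and choosing times $\sigma_1<\dots<\sigma_\ell\in[0,1]$ with $Q'(g(\sigma_i))=Q(t_i)$, the $i$-th vertex of $Q'$. This decomposes $[0,1]$ into phases $[\sigma_i,\sigma_{i+1}]$ on each of which $Q'(g(\cdot))$ traverses the edge $e_i:=\overline{Q(t_i)Q(t_{i+1})}$ monotonically, so $\df(P_i,e_i)\le\delta$ for the corresponding subcurve $P_i:=P[f(\sigma_i),f(\sigma_{i+1})]$. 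By the edge-range-preserving property the image of $Q_i:=Q[t_i,t_{i+1}]$ is contained in $e_i$, and since $Q_i$ continuously connects the two endpoints of $e_i$, the intermediate value theorem implies that the image of $Q_i$ equals all of $e_i$. By Observation~\ref{observation:concatenation}, it then suffices to prove $\df(P_i,Q_i)\le\delta$ in each phase.

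To prove the single-phase statement, scale the edge so that $e_i=[0,1]$. Then $Q_i:[0,1]\to[0,1]$ with $Q_i(0)=0,\,Q_i(1)=1$, and $P_i$ satisfies $\df(P_i,\mathrm{id})\le\delta$, where $\mathrm{id}$ denotes the identity curve on $[0,1]$ (identified with $e_i$). Normalize a realizing traversal of $(P_i,\mathrm{id})$ into the form $|P_i(f_P(\tau))-v(\tau)|\le\delta$ with $v$ monotone from $0$ to $1$. The idea is then to define a traversal $(f_P,g_Q)$ of $(P_i,Q_i)$ by picking, for each $\tau$, some time $g_Q(\tau)$ with $Q_i(g_Q(\tau))=v(\tau)$; such a time exists by the IVT applied to $Q_i$. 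This choice immediately yields the desired bound $|P_i(f_P(\tau))-Q_i(g_Q(\tau))|=|P_i(f_P(\tau))-v(\tau)|\le\delta$.

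The main technical obstacle will be ensuring that $g_Q$ can simultaneously be monotone, continuous, and boundary-compatible, i.e., $g_Q(0)=0$ and $g_Q(1)=1$. Naive selections (the first- or last-preimage of $v$ under $Q_i$) are each monotone but may jump discontinuously at peaks and valleys of the oscillations of $Q_i$. I expect to resolve this by a careful rule that interleaves these branches and bridges their transitions by synchronized pauses in $f_P$: during a pause, $f_P$ is held constant while $g_Q$ sweeps continuously through an oscillation of $Q_i$. Here one uses that $\df(Q_i,\mathrm{id})\le\delta$ (the locality property of the $\delta$-simplification) bounds the oscillation amplitude of $Q_i$ by $2\delta$, while $P_i$ always lies within $\delta$ of $\mathrm{id}$, so that with a suitable choice of pause point the distance remains within $\delta$ throughout. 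Finally, concatenating the per-phase traversals via Observation~\ref{observation:concatenation} yields a traversal of $(P,Q)$ with cost at most $\delta$, as required.
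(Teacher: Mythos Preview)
Your high-level plan matches the paper's proof exactly: decompose $[0,1]$ into phases along the edges of $Q'$, reduce to a per-edge statement, and concatenate via Observation~\ref{observation:concatenation}. The per-edge statement you isolate is precisely the paper's Lemma~\ref{lemma:strongertriangleineq}.

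The gap is in your sketch of that per-edge proof. Pausing $f_P$ at the natural jump points of the first- or last-preimage selection does not work in general: if $Q_i$ has a local maximum $M$ followed by a local minimum $m$ and you pause when $v$ reaches $M$, then $p:=P_i(f_P(\tau))$ is only known to lie in $[M-\delta,M+\delta]$, while during the sweep $Q_i(g_Q)$ ranges over $[m,M]$, so you would need $p\le m+\delta$; nothing forces this. The $2\delta$-monotonicity of $Q_i$ only guarantees $M-m\le 2\delta$, which makes the target interval $[M-\delta,m+\delta]$ non-empty but does not place $p$ inside it. A correct ``suitable pause point'' has to be chosen by inspecting $P_i(f_P)$ rather than $v$ alone, and handling consecutive or nested oscillations takes further bookkeeping---essentially the same amount of work as building the traversal from scratch.

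The paper does exactly that: it proves the per-edge statement by invoking Lemma~\ref{lemma:montonecurves}, which abandons the lifted-traversal viewpoint and constructs a greedy traversal of $(P_i,Q_i)$ directly, maintaining the invariant $P_i(s)=Q_i(t)+\delta$ (in the increasing case) and restoring it whenever either curve dips. Its hypotheses---both curves $2\delta$-monotone, $Q_i\subseteq\overline{Q_i(0)Q_i(1)}$, $P_i\subseteq B(Q_i,\delta)$, and endpoint proximity---are precisely what Observation~\ref{lemma:montonecurvesimple} together with the edge-range-preserving property give you. So your reduction is sound; what is missing is to replace the pause heuristic by this greedy construction, which is where the real content of the lemma sits.
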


We would like to show the equivalent statement of Lemma~\ref{lemma:simplproxy} for signatures. However, as the  example in Figure~\ref{fig:counterexample} shows, this is not possible. Instead, we show a slightly weaker bound in the following lemma.

\begin{restatable}{lemma}{lemmasignatureproxy}\label{lemma:signatureproxy}\label{lemmasignatureproxy}

 Let $\delta=\delta'+\delta''$ for $\delta,\delta',\delta''\geq 0$ and let $P:[0,1]\mapsto \RR$ and $Q:[0,1]\mapsto \RR$ be two curves. Let $Q'$ be any $\delta'$-signature of $Q$. If $\df(Q',P) \leq \delta$, $|Q(0)-P(0)| \leq \delta''$, and $|Q(1)-P(1)| \leq \delta''$, then $\df(P,Q)\leq \delta$.
\end{restatable}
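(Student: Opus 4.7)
The plan is to improve the naive triangle bound $\df(Q,P) \le \df(Q,Q') + \df(Q',P) \le \delta' + \delta = 2\delta'+\delta''$ down to $\delta = \delta'+\delta''$ by an explicit construction of a traversal of $(Q,P)$. First, I would use the realizing traversal $(f,g)$ of $(Q',P)$ of width $\le \delta$ to identify breakpoints corresponding to the signature vertices. Let $0=\tau_1 < \cdots < \tau_\ell = 1$ be the parameters of the vertices on $Q'$ and $0=t_1<\cdots<t_\ell=1$ the corresponding parameters on $Q$, so that $Q'(\tau_i) = Q(t_i)$. Let $s_i$ be an instant at which $f(s_i)=\tau_i$, and set $g_i := g(s_i)$. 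These breakpoints partition $Q$ and $P$ into matched subcurves $Q[t_i,t_{i+1}]$ and $P[g_i,g_{i+1}]$. By Observation~\ref{observation:concatenation}, it suffices to show $\df(Q[t_i,t_{i+1}], P[g_i,g_{i+1}]) \le \delta$ for each $i$.

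Then, for interior indices $i$, the vertex-range-preserving property of the signature forces $Q[t_i,t_{i+1}]$ to lie within the interval spanned by the signature edge $E_i = \overline{Q(t_i)\,Q(t_{i+1})}$. I would define the $Q$-side of the subcurve traversal by the value-matching rule $u(s) := \inf\{t \in [t_i,t_{i+1}] : Q(t) = Q'(f(s))\}$, the first hitting time of the target value dictated by $(f,g)$. The Intermediate Value Theorem (combined with the fact that $Q$ attains both $Q(t_i)$ and $Q(t_{i+1})$ at the endpoints and stays within their range) guarantees that $u$ is well-defined, and one can verify that the first-hitting time of a monotonically varying target on a continuous curve is itself monotone, so $u$ yields a valid traversal on the $Q$-side. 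Since $Q(u(s)) = Q'(f(s))$ exactly, the width of this subcurve traversal is $|Q'(f(s)) - P(g(s))| \le \delta$.

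For the boundary segments ($i \in \{1,\ell-1\}$), the subcurve $Q[t_i,t_{i+1}]$ can deviate from the range of $E_i$ by up to $\delta'$, since the vertex-range-preserving property does not constrain $Q(0)$ or $Q(1)$. This is exactly where the endpoint hypotheses $|Q(0)-P(0)|\le\delta''$ and $|Q(1)-P(1)|\le\delta''$ are needed: they reduce the width of $(f,g)$ at the endpoints from the generic $\delta$ down to $\delta''$, freeing up a slack of $\delta-\delta'' = \delta'$. I would prepend (respectively append) a phase in which $P$ is held at $P(0)$ (resp.~$P(1)$) while $u$ traverses the portion of $Q$ that lies outside the range of $E_1$ (resp.~$E_{\ell-1}$); the $\delta'$-simplification property bounds this excursion, so the width during such a hold phase is at most $|Q(u)-Q(0)| + |Q(0)-P(0)| \le \delta'+\delta'' = \delta$, and analogously on the other side.

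The hard part will be making the interior value-matching construction endpoint-compatible: ensuring $u(s_{i+1}) = t_{i+1}$ so that the segment traversals glue seamlessly at the signature vertices, since the first-hitting rule may reach $Q(t_{i+1})$ strictly before $t_{i+1}$. I anticipate handling this by interpolating between first- and last-hitting times across an appropriately chosen plateau of $f$ within the sub-traversal, using a constant-$f$ interval on which the $(Q',P)$ traversal only advances $g$; verifying monotonicity and the width bound across this switch is the main technical content of the argument.
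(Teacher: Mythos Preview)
Your high-level decomposition is the same as the paper's: partition $P$ and $Q$ according to the signature edges via a realizing traversal of $(Q',P)$, bound $\df(Q[t_i,t_{i+1}],P[g_i,g_{i+1}])\le\delta$ on each piece, and concatenate. The paper does exactly this, invoking an auxiliary lemma (Lemma~\ref{lemma:strongertriangleineq}) for interior edges and a more delicate one (Lemma~\ref{lemma:signatureedge1}) for the first and last edge.

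Where your plan breaks down is the interior per-edge step. Your value-matching rule $u(s)=\inf\{t\in[t_i,t_{i+1}]:Q(t)=Q'(f(s))\}$ is monotone, but it is \emph{not continuous}: whenever $Q$ has a local dip, $u$ jumps. To turn $u$ into a legal traversal you must hold $g$ fixed and sweep $t$ across each jump interval $[a,b]$. On that interval $Q(a)=v^*=Q'(f(s^*))$ and $Q$ can drop as low as $v^*-2\delta'$ (only the $2\delta'$-monotonicity from the locality property constrains it), while $P(g(s^*))$ may sit at $v^*+\delta$. Hence the width can reach $\delta+2\delta'$, not $\delta$. Concretely, take $\delta'=1$, $\delta''=0$, $Q=\seqtocurve{0,2,\tfrac12,3}$, and $Q'=P=\overline{0\,3}$: all hypotheses hold with $\delta=1$, yet your traversal has width $\tfrac32$ at $t=2$. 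So the sentence ``Since $Q(u(s))=Q'(f(s))$ exactly, the width is $\le\delta$'' is only true off the jump set, and the ``hard part'' you flag (endpoint gluing) is not the real obstruction.

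A similar issue hits the boundary edges: the portion of $Q[0,t_2]$ lying outside the range of $E_1$ need not be a single prefix, so a one-shot ``hold $P(0)$, flush the excursion'' phase does not cover later dips below $Q(0)$.

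The paper circumvents all of this by \emph{not} matching $Q$ to $Q'$ at all on a subedge. Instead it proves a structural fact (Lemma~\ref{lemma:montonecurves}): if two one-dimensional curves are each $2\delta$-monotone with respect to the same direction, one of them is range-contained in its own endpoint segment, the other lies in the $\delta$-tube of that segment, and the endpoints are $\delta$-close, then their Fr\'echet distance is already $\le\delta$. Both $Q[t_i,t_{i+1}]$ and $P[g_i,g_{i+1}]$ satisfy these hypotheses (the first by the signature's locality and range properties, the second because $\df(P[g_i,g_{i+1}],E_i)\le\delta$), so the per-edge bound follows directly. The traversal underlying that lemma is a greedy one that keeps the invariant $P(s)=Q(t)+\delta$ rather than $Q(t)=Q'(f(s))$; this is what absorbs the dips of $Q$ without overshooting $\delta$.
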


Note that Lemma~\ref{lemma:simplproxy} is much stronger than what we would get by merely applying the triangle inequality on the Fr\'echet distances on the curves $P$, $Q$ and $Q'$. Lemma~\ref{lemma:signatureproxy}, although weaker, is still stronger than the bound we would get from the triangle inequality.
To illustrate this we include the following corollary. Note that merely using triangle inequality would yield $\df(P,Q) \le 6\delta$, instead of $\df(P,Q) \le 3\delta$. 

\begin{corollary}
  For one-dimensional curves $P,Q$ let $P'$ be a $\delta$-signature of $P$, and let $Q'$ be the $2\delta$-signature of $Q$. If $\df(P',Q') \le 3\delta$ and $|P'(0)-Q'(0)| \le \delta, |P'(1)-Q'(1)| \le \delta$, then $\df(P,Q) \le 3\delta$.
\end{corollary}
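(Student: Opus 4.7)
My plan is to apply Lemma~\ref{lemma:signatureproxy} twice, each time using the fact that $\delta$-simplifications preserve endpoints. Note that a naive triangle inequality only yields $\df(P,Q) \le \df(P,P') + \df(P',Q') + \df(Q',Q) \le \delta + 3\delta + 2\delta = 6\delta$, which is why Lemma~\ref{lemma:signatureproxy} is essential: it lets us swap a signature back to the original curve without paying the triangle-inequality cost, provided the endpoint conditions are met.

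First, observe that by the definition of a $\delta$-simplification, $P'$ and $Q'$ share their endpoints with $P$ and $Q$ respectively, i.e., $P'(0)=P(0)$, $P'(1)=P(1)$, $Q'(0)=Q(0)$, $Q'(1)=Q(1)$. In particular, the assumption $|P'(0)-Q'(0)|\le\delta$ is equivalent to $|P(0)-Q(0)|\le\delta$, and analogously for the endpoint at $1$.

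For the first application of Lemma~\ref{lemma:signatureproxy}, I will play the roles: the lemma's ``$P$'' is $P'$, the lemma's ``$Q$'' is $Q$, and the lemma's signature ``$Q'$'' is our $Q'$, which is indeed a $2\delta$-signature of $Q$. I set $\delta'=2\delta$ and $\delta''=\delta$, so the lemma's total $\delta$ becomes $3\delta$. The hypotheses hold: $\df(Q',P')\le 3\delta$ by assumption, and $|Q(0)-P'(0)|=|Q'(0)-P'(0)|\le\delta=\delta''$ (similarly at $1$) by the observation above. The lemma then yields $\df(P',Q)\le 3\delta$.

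For the second application, I reverse roles: the lemma's ``$P$'' is $Q$, the lemma's ``$Q$'' is $P$, and the lemma's signature ``$Q'$'' is $P'$, which is a $\delta$-signature of $P$. I now set $\delta'=\delta$ and $\delta''=2\delta$, again summing to $3\delta$. The hypothesis $\df(P',Q)\le 3\delta$ was established in the previous paragraph, and $|P(0)-Q(0)|=|P'(0)-Q'(0)|\le\delta\le 2\delta=\delta''$ (similarly at $1$). Lemma~\ref{lemma:signatureproxy} now delivers $\df(P,Q)\le 3\delta$, completing the argument. The only subtlety is the bookkeeping in choosing the split of $3\delta$ into $\delta'+\delta''$ that matches the signature parameter on each side ($2\delta$ vs.~$\delta$); beyond that, the proof is mechanical.
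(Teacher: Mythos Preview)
Your proof is correct and follows essentially the same approach as the paper: two applications of Lemma~\ref{lemma:signatureproxy}, using that signatures preserve endpoints. The only difference is the order---you first replace $Q'$ by $Q$ (obtaining $\df(P',Q)\le 3\delta$) and then $P'$ by $P$, whereas the paper does it the other way around (first obtaining $\df(P,Q')\le 3\delta$); this is immaterial.
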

\begin{proof}
  Follows from applying of Lemma~\ref{lemma:signatureproxy} twice.
  We first apply the lemma to $P'$, $Q'$ and $P$ and obtain $\df(P,Q')\leq 3\delta$. In the second step, we apply the lemma to $P$, $Q'$ and $Q$ and obtain $\df(P,Q)\leq 3\delta$. 
\end{proof}

\begin{figure}
    \centering
	\includegraphics[scale=1.1]{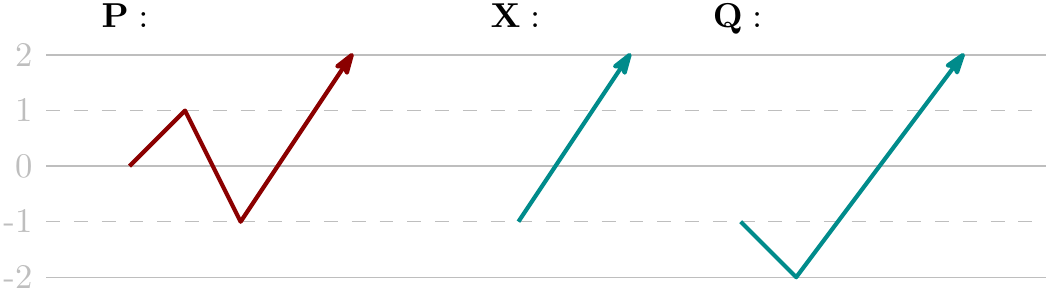}
    \caption{This example shows that an equivalent statement of Lemma~\ref{lemma:simplproxy} for signatures is not true. The curve $X=\seqtocurve{-1,2}$ is a $1$-signature of $Q=\seqtocurve{-1,-2,2}$ and the curve $P=\seqtocurve{0,1,-1,2}$ has Fr\'echet distance $1$ to $X$, but the Fr\'echet distance of $P$ to $Q$ is $2$.}
    \label{fig:counterexample}
\end{figure}


The following lemma is used to show correctness for our $(1+\eps)$ and $(2+\eps)$-ANN.

\begin{restatable}{lemma}{lemmagoodsimplificationranges}
\label{lemma:goodsimplificationranges2}
 Let $P:[0,1]\mapsto \RR$ and $Q:[0,1]\mapsto \RR$ be curves such that $\df(Q,P)\leq \delta$, there exists a $\delta$-\straightening  $Q'$ of $Q$ which satisfies the following properties:
 \begin{compactenum}[(i)]
     \item there exists a $11\delta$-visiting order of $Q'$ on $P$, and
     \item $\df(Q',P)\leq \delta$. 
 \end{compactenum}
\end{restatable}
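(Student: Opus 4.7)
The plan is to construct $Q'$ explicitly from a realizing traversal of $\df(Q,P)\le \delta$, keeping those vertices of $Q$ whose matched counterparts on $P$ lie near some vertex of $P$.

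\emph{Setup.} Let $\phi=(f,g)$ be a realizing traversal with $\max_t|Q(f(t))-P(g(t))|\le\delta$. For each vertex $w_k$ of $P$ at parameter $r_k$, let $t_k:=f(g^{-1}(r_k))$; then $|Q(t_k)-w_k|\le\delta$. The parameters $t_1<\cdots<t_m$ split $Q$ into subcurves, each matched to an edge of $P$ with Fr\'echet distance $\le\delta$, hence $2\delta$-monotone with respect to its matched edge (by the analogue of the direction-preserving observation applied to Fr\'echet couplings).

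\emph{Construction.} For each vertex $v_j$ of $Q$ at parameter $s_j$, let $p_j:=P(g(f^{-1}(s_j)))$, so $|v_j-p_j|\le\delta$. Call $v_j$ an \emph{anchor} if $p_j$ is within $10\delta$ of some vertex of $P$, and \emph{interior} otherwise. Let $Q'$ be the polygonal curve traversing the anchor vertices of $Q$ in their original order; each maximal run of interior vertices is replaced by a single line segment joining the flanking anchors.

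\emph{Verification of (i).} For each anchor $v_j$, pick $w_{k(j)}$ with $|p_j-w_{k(j)}|\le10\delta$; the triangle inequality gives $|v_j-w_{k(j)}|\le 11\delta$, and the monotonicity of the indices $k(j)$ in $j$ follows from the monotonicity of $\phi$.

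\emph{Verification that $Q'$ is a $\delta$-straightening.} For any replacement edge of $Q'$, the skipped interior vertices have values within $\delta$ of points on a single long $P$-edge (by the definition of \emph{interior}), so they lie in a narrow band of values; the flanking anchors bracket this band because their matched $P$-points lie near the endpoints of the matched $P$-arc, which yields edge-range preservation. The $\delta$-simplification Fr\'echet bound between a replacement edge and the corresponding $Q$-subcurve then follows from the $2\delta$-monotonicity of the subcurve together with an explicit traversal matching the replacement segment to the subcurve, in the spirit of the Fr\'echet analysis of a $2\delta$-monotone curve against its bounding segment.

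\emph{Verification of (ii).} I construct a traversal $\phi'$ of $Q'$ and $P$ by modifying $\phi$: on anchor vertices $\phi'$ agrees with $\phi$ and inherits distance $\le\delta$; on each replacement edge of $Q'$, $\phi'$ matches the segment directly to the corresponding near-straight arc of $P$, achieving distance $\le\delta$ because the segment endpoints are within $\delta$ of the corresponding $P$-points and the arc stays close to a single line segment of $P$.

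\emph{Main obstacle.} The subtlest step will be simultaneously verifying edge-range preservation and the $\delta$-simplification Fr\'echet bound for replacement edges, especially when a skipped run straddles portions of $Q$ matched to several adjacent $P$-edges. The $2\delta$-monotonicity slack of the matched subcurves interacts nontrivially with the $11\delta$-slack at anchors, and a careful case analysis based on how the matched points distribute across edges of $P$ is needed to close the gap to the tight $\delta$ bound on the Fr\'echet distances involved.
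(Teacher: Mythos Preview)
Your high-level plan---keep those vertices of $Q$ whose $\phi$-images lie near a vertex of $P$ and shortcut the rest---is the same spirit as the paper's proof, but the specific construction you give does \emph{not} produce a $\delta$-straightening in general. Here is a concrete failure. Take $P=\langle 0,100\rangle$, $Q=\langle 0,10,9.5,100\rangle$, $\delta=1$, and use the realizing traversal that sends $10\mapsto 9$ and $9.5\mapsto 10.5$ on $P$ (this is a valid $\delta$-traversal). Then $p_2=9$ is within $10\delta$ of the vertex $0$, so $v_2=10$ is an anchor, while $p_3=10.5$ is more than $10\delta$ from both endpoints, so $v_3=9.5$ is interior. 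Your $Q'$ is $\langle 0,10,100\rangle$, and on the replacement edge $\overline{10,100}$ the skipped vertex $9.5$ lies \emph{outside} the edge range. So edge-range preservation fails; your claim that ``the flanking anchors bracket this band'' is simply not true. The same example shows that ``the skipped interior vertices have values within $\delta$ of points on a single long $P$-edge'' does not save you: they do, but the anchor sitting right at the boundary of the $10\delta$-window can overshoot them.

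This is precisely the obstacle you flag in your last paragraph, and it is genuinely where the work lies. The paper gets around it by a two-phase construction that is more delicate than yours. First it keeps the vertices of $Q$ that $\delta$-\emph{visit} a vertex of $P$ in the sense that the two are close \emph{and} $\phi$ associates them to adjacent cells; this locality is what lets one repair non-monotone index assignments into a $3\delta$-visiting order (your one-line monotonicity claim also needs this: with your ``any vertex within $10\delta$ in value'' rule, a zig-zag $P$ easily produces non-monotone $k(j)$). Second, for each maximal gap $Q[s,s']$, the paper does \emph{not} shortcut from anchor to anchor. Instead it distinguishes whether the $3\delta$-balls around the assigned $P$-vertices at $s$ and $s'$ overlap. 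If they do, it keeps \emph{all} vertices of $Q[s,s']$ and assigns them to a single $P$-vertex, using the $2\delta$-monotonicity of the gap to bound them within $11\delta$. If they do not overlap, it chooses shortcut endpoints $\alpha,\beta$ as the \emph{last exit} from one ball and the \emph{first entry} into the other; by construction $Q[\alpha,\beta]\subseteq\overline{Q(\alpha)Q(\beta)}$, so edge-range preservation is automatic, and a separate lemma then gives both $\df(\overline{Q(\alpha)Q(\beta)},Q[\alpha,\beta])\le\delta$ and $\df(\overline{Q(\alpha)Q(\beta)},P[t,t'])\le\delta$. Your construction has no analogue of this step, and without it the argument cannot be closed.
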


We use the following lemma to show correctness for our $(3+\eps)$-ANN. One part of the lemma statement, the existence of a $2\delta$-visiting orders, was already used in~\cite{DP20}. However, the resulting approximation factor of the ANN obtained there was $(5+\eps)$. In order to show correctness of our $(3+\eps)$-ANN, it is necessary to prove the bound of $3\delta$ on the resulting Fr\'echet distance of the two signature curves. Note that the triangle inequality implies a bound of $4\delta$---which would not be sufficient for us.

\begin{restatable}{lemma}{lemmasignatures}
\label{lemma:signatures3}
  For one-dimensional curves $P,Q$ let $P'$ be a $\delta$-signature of $P$, and let $Q'$ be a $2\delta$-signature of $Q$. If $\df(P,Q) \le \delta$ then $\df(P',Q') \le 3\delta$ and there exists a $2\delta$-visiting order of $Q'$ on~$P'$.
\end{restatable}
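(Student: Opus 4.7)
The statement has two parts that I will address separately. The $2\delta$-visiting order of $Q'$ on $P'$ should follow quickly: since $P'$ is a $\delta$-simplification of $P$ we have $\df(P,P') \le \delta$, and combined with the hypothesis $\df(P,Q) \le \delta$ the triangle inequality gives $\df(Q,P') \le 2\delta$. Applying Lemma~\ref{lemma:signatures2} to the pair $(Q, P')$, with $Q'$ playing the role of the $2\delta$-signature, then yields the desired $2\delta$-visiting order of $Q'$ on $P'$ directly.

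The distance bound $\df(P',Q') \le 3\delta$ is the substantive part. Chaining $P' \to P \to Q \to Q'$ by triangle inequality gives only $\df(P',Q') \le \delta + \delta + 2\delta = 4\delta$, so I need to save one factor of $\delta$. My plan is to build a traversal of $(P', Q')$ explicitly, using the visiting order $\sigma$ from the first part as a scaffold. At each matched pair $(p'_{\sigma(j)}, q'_j)$ the distance is at most $2\delta$ by construction; the task therefore reduces to bounding by $3\delta$ the joint traversal on each ``inter-pair'' piece, namely the edge $\overline{q'_j q'_{j+1}}$ of $Q'$ together with the $P'$-subcurve from $p'_{\sigma(j)}$ to $p'_{\sigma(j+1)}$.

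I will handle each inter-pair piece by case analysis. If $\sigma(j) = \sigma(j+1)$, then $p'_{\sigma(j)}$ is within $2\delta$ of both endpoints of the $Q'$-edge, and since we are in $\mathbb{R}$ this propagates to every interior point of the edge, so parking $P'$ at $p'_{\sigma(j)}$ while $Q'$ sweeps the edge keeps the distance at most $2\delta$. When $\sigma(j) < \sigma(j+1)$, the intermediate vertices of $P'$ are interior vertices of the $\delta$-signature, hence local extrema; by the vertex-range-preserving property each such extremum is attained by $P$ on a corresponding parameter interval, the hypothesis $\df(P,Q) \le \delta$ then places a $Q$-value within $\delta$ of the extremum, and $\df(Q,Q') \le 2\delta$ combined with the vertex-range/edge-length structure of $Q'$ places a value of $Q'$ on the current edge within a further $2\delta$. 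The $2\delta$-edge-length property of $Q'$, which forces interior edges to span more than $4\delta$, should ensure that the monotone sweep of $Q'$ can be synchronized with the oscillating $P'$-subcurve while keeping the distance at most $3\delta$ throughout.

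The main obstacle will be this inter-pair argument: coordinating the monotone sweep of a $Q'$-edge with a $P'$-subcurve that may contain several intermediate extrema, while simultaneously absorbing the $2\delta$-slack at matched vertices and the $\delta + 2\delta$ slack on intermediate extrema into a single bound of $3\delta$ rather than $4\delta$. A tempting shortcut would be to invoke Lemma~\ref{lemma:signatureproxy} directly, but that lemma converts a signature-to-other-curve distance bound into a bound on the original-to-other-curve distance (the reverse of the direction we need here), so a direct traversal construction appears unavoidable.
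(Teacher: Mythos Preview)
Your argument for the $2\delta$-visiting order is correct and coincides with the paper's.

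For $\df(P',Q') \le 3\delta$, your scaffold-via-visiting-order approach has a real gap in Case~2. The chain you propose for an intermediate extremum $p'_i$ --- attained by $P$, hence within $\delta$ of some $Q$-value, hence within a further $2\delta$ of some $Q'$-value --- does not place that $Q'$-value on the \emph{current} edge $\overline{q'_j q'_{j+1}}$: the $2\delta$-traversal witnessing $\df(Q,Q')\le 2\delta$ is unrelated to the visiting order $\sigma$, so there is no alignment guarantee. Even if you could pin each $p'_i$ to within $3\delta$ of the edge's range, you would still need the $P'$-subcurve between $p'_{\sigma(j)}$ and $p'_{\sigma(j+1)}$ to be $6\delta$-monotone with respect to that edge (Observation~\ref{lemma:montonecurvesimple}), and the visiting order carries no such information --- it is merely a monotone list of close vertex pairs, not derived from any traversal, and need not be compatible with one. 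Invoking the $2\delta$-edge-length property of $Q'$ does not rescue this: that property bounds the span of $Q'$-edges from below, which says nothing about controlling the oscillation of the $P'$-subcurve you are trying to sweep against it.

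The paper sidesteps all of this by decomposing along the edges of $Q'$ rather than along the visiting order. Fix a $\delta$-traversal $\phi$ of $(P,Q)$. For an edge $X$ of $Q'$ with underlying subcurve $Q[\alpha,\beta]$, the locality property of the $2\delta$-signature gives $\df(Q[\alpha,\beta],X)\le 2\delta$, and $\phi$ determines a subcurve $P[\alpha',\beta']$ with $\df(P[\alpha',\beta'],Q[\alpha,\beta])\le\delta$; the triangle inequality \emph{per edge} then yields $\df(P[\alpha',\beta'],X)\le 3\delta$. Passing from $P$ to $P'$ is Observation~\ref{observation:shortcut} (shortcutting a curve matched to a line segment does not increase the Fr\'echet distance), with a short extra argument for the first and last edges of $P'$ where range-preservation may fail. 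Concatenation via Observation~\ref{observation:concatenation} finishes. The saving over the global $4\delta$ bound comes from applying the triangle inequality edge-by-edge through the locality property; the shortcut observation then transfers the bound from $P$-pieces to $P'$-pieces at no additional cost.
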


\section{$(1+\eps)$-Approximation}
\label{section:datastructure1apprx}

In this section, we show that there exists a $(1+\eps)$-ANN data structure for  one-dimensional curves under the Fr\'echet distance, with space in $n\cdot \Oh(\frac{m}{k\eps})^k$, expected preprocessing time in  $nm\cdot \Oh(\frac{m}{k\eps})^k$ and query time in $\Oh(k \cdot 2^k)$. We describe the data structure in Section~\ref{subsection:datastructure1apprx_ds} and we analyze its performance in Section~\ref{subsection:datastructure1apprx_an}.
\subsection{The data structure}
\label{subsection:datastructure1apprx_ds}
\subsubsection*{Data structure}
We are given as input a set of one-dimensional curves $\mathcal{P}$, as sequences of vertices, the distance threshold $\radius>0$, the approximation error $\eps>0$, and the complexity of the supported queries $k$. 
To discretize the query space, we use the grid  $\GGG_{\eps\radius/2}$ (recall that $\GGG_{\eps}:=\{i\cdot \eps \mid i\in \ZZ\}$ is the regular grid with side-length $\eps$). Let $\HHH$ be a dictionary which is initially empty. 
For each input one-dimensional curve $P \in \mathcal{P}$  
we compute a set $\CCC':=\CCC'(P)$ which contains all curves $Q$ such that: 
\begin{inparaenum}[i)]
\item $Q$ has complexity at most $k$,
\item all vertices of $Q$ belong to $\GGG_{\eps\radius/2}$, and
\item there is an $((11+\eps/2)\radius)$-visiting order of $Q$ on $P$.
\end{inparaenum}
%
Formally, 
\begin{multline*}
\CCC'=\{ \seqtocurve{u_1,\ldots,u_{\ell}}   \mid  \ell \leq k \text{ and }\\ \exists (i_1,\ldots,i_{\ell}) ( i_1\leq \dots \leq i_{\ell}  \text{ and } (\forall j\in [\ell]) (u_j \in B(p_{i_j},(11+\eps/2)\radius)\cap \GGG_{{\eps\radius}/{2}})) \}.
\end{multline*}

 Next, we filter $\CCC'$ to obtain 
the set  $\CCC(P)=\{Q\in \CCC' \mid \df(Q,P)\leq (1+\eps/2)\radius \}$. We store $\CCC(P)$ in $\HHH$ as follows: for each $Q\in \CCC(P)$, 
if $Q$ is not already stored in $\HHH$, then we insert $Q$
into $\HHH$, associated with a pointer to $P$.  

The complete pseudocode for the preprocessing algorithm can be found in Algorithms~\ref{alg:generateorders} and \ref{alg:preprocessing1apprx}. To achieve approximation factor $(1+\eps)$,  we run \texttt{preprocess}$(P,\radius,\eps/2,k)$.

\subsubsection*{Query algorithm}
Let $Q$ be the query curve with vertices $q_1,\ldots,q_k$ and let $\eps>0$ be the approximation error. The query algorithm first enumerates  all curves $Q'$ such that
\[
Q' \in \{\seqtocurve{q_1,S,q_k} \mid \text{$S$ is a subsequence of $q_2,\ldots,q_{k-1}$}  \}.
\]
For each such $Q'$ we test whether it is a $\radius$-\straightening of $Q$.  To this end, we first test if each shortcut taken in $Q'$ is within distance $\radius$ from the corresponding subcurve of $Q$. Then we check for each shortcut if the corresponding subcurve of $Q$ stays within range by testing all vertices of the subcurve one by one. 
If $Q'$ is a  $\radius$-\straightening of $Q$, then we snap the vertices of $Q'$ to $\GGG_{\eps\radius/2}$,  to obtain a new curve  $Q''$ and we 
 probe $\HHH$: if $Q''$ is stored in $\HHH$, then we return its associated input curve $P\in \mathcal{P}$. If $Q''$ is not stored in $\HHH$, then we return “no”. 

The complete pseudocode for the query algorithm can be found in 
Algorithm~\ref{alg:query1apprx}.  To achieve approximation factor $(1+\eps)$, we run \texttt{query}$(Q,\radius,\eps/2)$.

\label{section:pseudocode1apprx}
\begin{algorithm}
\caption{A call to \texttt{generate\_orders}($m$, $k$) 
returns all $(i_1,\ldots,i_{\ell}) \in [m]^\ell$, where $\ell\in [k]$ and such that $1=i_1\leq \cdots \leq i_{\ell}=m$. We assume $k \ge 2$.\label{alg:generateorders}}

\begin{algorithmic}[1]

\Procedure{\texttt{generate\_orders}}{$m \in \NN$, $k \in \NN$} 
\State $\III_2\gets \{(1,m)\}$ 
\For{\textbf{each} $\ell = 3,\ldots, k$}
\State $\III_{\ell}\gets \emptyset$
\For{\textbf{each} $(i_1,\ldots,i_{\ell-1})\in \III_{\ell-1}$} \Comment{$i_{\ell-1}=m$}
\For{\textbf{each} $j=i_{\ell-2},\ldots,m$}
\State \label{alg:generate_orders:update} $\III_{\ell} \gets \III_{\ell} \cup 
\{(i_1,\ldots,i_{\ell-2},j,m)\}$
\EndFor
\EndFor
\EndFor
\State \Return $\bigcup_{2 \le \ell \le k} \III_{\ell}$
\EndProcedure
\end{algorithmic}
\end{algorithm}

\begin{algorithm}
\caption{Preprocessing algorithm. We call \texttt{preprocess} to build the data structure. \label{alg:preprocessing1apprx}}
\begin{algorithmic}[1]
\Procedure{\texttt{preprocess}}{input set $\PPP$, $\radius>0$, $\eps>0$, $k \in \NN$}
     \State Initialize empty dictionary $\HHH$ 
     \For {{\bf each} $P \in \PPP$}
     \State $\CCC(P)\gets$ \texttt{generate\_keys}$(P, \radius , \eps, k)$
     \For{{\bf each} $Q\in \CCC(P)$}
      \If{$Q$ not in $\HHH$}
     \State insert key $Q$ in $\HHH$, associated with a pointer to $P$  
     \EndIf
     \EndFor
     \EndFor
     \EndProcedure
\Procedure{\texttt{generate\_keys}}{curve $P$, $\radius>0$, $\eps>0$, $k \in \NN$}
\State $\CCC'\gets$\texttt{generate\_candidates}$(P,\radius, (11+\eps),\eps,k)$
\State $\CCC  \gets \emptyset$
\For{{\bf each} $Q\in \CCC'$}
\If{$\df( P ,   {Q} )\leq (1+\eps)\radius$} 
\State $\CCC \gets \CCC \cup \{ Q\}$
\EndIf
\EndFor
\State {\bf  return} $\CCC $
\EndProcedure
\Procedure{\texttt{generate\_candidates}}{curve $P$ with vertices $p_1,\ldots,p_m $, $\radius>0$, $r>0$, $\eps>0$, $k \in \NN$}
\State $\mathcal{S}\gets \emptyset$, $\CCC'\gets \emptyset$
\State $\mathcal{I}\gets$\texttt{generate\_orders}$(m,k)$
\For{\textbf{each} $(i_1,\ldots,i_{\ell}) \in \mathcal{I}$}
\State $\mathcal{S}\gets \mathcal{S}\cup \prod_{j=1}^{\ell} B(p_{i_j},r\delta)\cap \GGG_{\eps\delta}$\label{algoline:candidates}
\EndFor
\For{\textbf{each} $\sigma \in \SSS$}
\State  $\CCC'\gets \CCC'\cup \{ \seqtocurve{\sigma} \}  $
\EndFor
\State \Return $\CCC'$
\EndProcedure
\end{algorithmic}
\end{algorithm}

\begin{algorithm}
\caption{Query algorithm \label{alg:query1apprx}}
\begin{algorithmic}[1]

\Procedure{\texttt{query}}{curve $Q$ with vertices $q_1,\ldots,q_k$, $\radius>0$, $\eps>0$} 
\State $\mathcal{I}\gets$\texttt{generate\_orders}$(k,k)$
\For{\textbf{each} $(i_1,\ldots,i_{\ell})\in \mathcal{I}$}
\State $flag \gets 1 $
\For{$j=1,\ldots,\ell-1$} 
\If{$\df(\overline{q_{i_j}q_{i_{j+1}}},\seqtocurve{ q_{i_j},\ldots,q_{i_{j+1}}})> \delta$}\Comment{test $\delta$-simplification property}
\State $flag \gets 0$
\EndIf 
\For{\textbf{each} $t=i_j,\ldots , i_{j+1}$}
\Comment{test edge-range-preserving property}
\If{$q_t \notin  \overline{q_{i_j}q_{i_{j+1}}}$}
\State $flag \gets 0$ 
\EndIf
\EndFor
\EndFor
\If{$flag =1$}
\State $Q' \gets  \seqtocurve{{q_{i_1}}, \ldots, {q_{i_{\ell}}}} $ \Comment{a  $\radius$-\straightening of $Q$}
\State $Q'' \gets  \seqtocurve{\left \lfloor \frac{q_{i_1}}{\eps\radius} \right\rfloor\cdot (\eps\radius), \ldots, \left \lfloor \frac{q_{i_{\ell}}}{\eps\radius}  \right\rfloor\cdot (\eps\radius) }$ \Comment{snap $Q'$ to $\GGG_{\eps\radius}$}
\If{$Q''$ in $\HHH$}
\State \Return input curve $P$ associated with $Q''$ in $\HHH$ 
\EndIf
\EndIf
\EndFor
\State \Return “no”
\EndProcedure
\end{algorithmic}
\end{algorithm}

\subsection{Analysis}
\label{subsection:datastructure1apprx_an}
In this section, we analyze the performance of our data structure. 
\begin{lemma}
\label{lemma:numberofcandidates}
For any  curve $P$ with vertices $p_1,\ldots,p_m$, $\radius>0$, $\eps>0$, $r\geq\eps$, $k\in \NN$,  the procedure \texttt{generate\_candidates}$(P,\radius, r,\eps,k)$ has running time in \[  {{m+k-2}\choose{k-2} }\cdot \Oh\left(\frac{  r}{\eps}\right)^k.\] 
\end{lemma}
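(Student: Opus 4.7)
The plan is to decompose the work of \texttt{generate\_candidates} into (a)~the cost of \texttt{generate\_orders}$(m,k)$, and (b)~for each produced order, the cost of enumerating the associated Cartesian product of grid points and emitting the corresponding curve. The two costs have the same dominant form, and we will bound each by the claimed product.

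First, I will count the orders produced by \texttt{generate\_orders}$(m,k)$. Inspecting the procedure, $\III_\ell$ is exactly the set of sequences $(i_1,\dots,i_\ell)\in[m]^\ell$ with $1=i_1\le i_2\le \cdots\le i_{\ell-1}\le i_\ell=m$; this is easily verified by induction on $\ell$, since in iteration $\ell$ the update in line~\ref{alg:generate_orders:update} appends a value $j\in[i_{\ell-2},m]$ before the fixed trailing $m$. The number of such sequences for fixed $\ell$ equals the number of non-decreasing sequences of length $\ell-2$ with values in $[1,m]$, i.e.\ $\binom{m+\ell-3}{\ell-2}$. By the hockey-stick identity, the total number of orders returned is
\[
\sum_{\ell=2}^{k}\binom{m+\ell-3}{\ell-2}=\sum_{j=0}^{k-2}\binom{m+j-1}{j}=\binom{m+k-2}{k-2}.
\]
Each order of length $\ell$ is constructed and stored in $\Oh(\ell)\le\Oh(k)$ time, so the total cost of \texttt{generate\_orders} is $\binom{m+k-2}{k-2}\cdot\Oh(k)$.

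Next, I will bound the enumeration in line~\ref{algoline:candidates}. Since $\GGG_{\eps\delta}$ has spacing $\eps\delta$ and $B(p_{i_j},r\delta)$ is an interval of length $2r\delta$, we have $|B(p_{i_j},r\delta)\cap\GGG_{\eps\delta}|\le \lfloor 2r/\eps\rfloor+1=\Oh(r/\eps)$ using the hypothesis $r\ge\eps$. Therefore, for an order of length $\ell$ the Cartesian product contains at most $\Oh(r/\eps)^\ell$ tuples, each of length $\ell$, and is produced in time $\Oh(\ell\cdot(r/\eps)^\ell)$. The subsequent loop over $\sigma\in\SSS$ turns each tuple into a curve in $\Oh(\ell)$ additional time, so the same bound applies.

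Summing over all orders and using the per-length count above,
\[
\sum_{\ell=2}^{k}\binom{m+\ell-3}{\ell-2}\cdot\Oh\!\left(\ell\cdot(r/\eps)^\ell\right)
\le \binom{m+k-2}{k-2}\cdot\Oh\!\left(k\cdot(r/\eps)^k\right)
= \binom{m+k-2}{k-2}\cdot\Oh(r/\eps)^k,
\]
where in the last step I absorb the factor $k$ into the base of the exponential (for any constant $C\ge 2$, $k\cdot C^k\le(2C)^k$). Combining with the cost of \texttt{generate\_orders} gives the claimed bound. The main work is the bookkeeping of the combinatorial identities and the justification that the factor $\ell$ (from writing out each tuple/curve) can be swallowed into $\Oh(r/\eps)^k$; no other step presents a real obstacle.
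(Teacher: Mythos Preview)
Your proof is correct and follows essentially the same approach as the paper: count the index sequences via combinations-with-repetition and the hockey-stick identity to get $\binom{m+k-2}{k-2}$, bound each grid–interval intersection by $\Oh(r/\eps)$, multiply, and absorb the residual factor~$k$ into the base of the exponential. The paper is slightly terser about the last step but performs the same absorption.
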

\begin{proof}
The set $\mathcal{I}$ contains all   
sequences of indices $(i_1,\ldots,i_{\ell}) \in [m]^{\ell}$ such that $\ell\leq k$, and $1=i_1\leq \cdots \leq i_{\ell}=m$. 
Let $\III_{\ell}$ be the subset of $\III$ containing the sequences of  length $\ell$ as denoted in \texttt{generate\_orders}. 
We first claim that   \texttt{generate\_orders}$(m,k)$ runs in time  $\Oh(|\III| \cdot k)$. To see that, consider any sequence of indices $s \in \III$. During the execution of \texttt{generate\_orders}, $s$ is added to the sets of indices (Line~\ref{alg:generate_orders:update}) only once. This step costs $\Oh(k)$, therefore the running time of 
\texttt{generate\_orders}$(m,k)$ is in $\Oh(|\III| \cdot k)$. Now, 
let $\SSS'$ be a multiset which contains all sequences (including duplicates) which are generated and inserted to $\SSS$ in all executions of Line~\ref{algoline:candidates} of  \texttt{generate\_candidates}.  
The running time of \texttt{generate\_candidates}$(P,\radius,r,\eps,k)$ is upper bounded by $\Oh(|\SSS'|\cdot k)$,  because $|\SSS'|\geq |\III| $ and 
 computing $\CCC'$ costs $\Oh(|\SSS'|\cdot k)$ time. We proceed by showing an upper bound on $|\SSS'|$. 

 Any sequence   $(x_1,\ldots,x_{\ell})\in \GGG_{\eps\radius}^{\ell}$, which is included in $\SSS'$, may appear in the computation taking place in  Line~\ref{algoline:candidates} multiple times: once for each sequence  of indices $(i_1,\ldots,i_{\ell})\in \mathcal{I}$ such that for each $j\in [\ell]$, $x_j\in B(p_{i_j},r\radius)$. Notice that $|\mathcal{I}_{\ell}|$ is equal to the number of combinations of $\ell-2$ objects taken (with repetition) from a set of size $m$, i.e. $|\mathcal{I}_{\ell}|={{m+\ell-3}\choose{\ell-2} }$. 
Hence, by the Hockey-stick identity,
\begin{align}
\label{eq:hockeystick}
    | \III|=
   \sum_{\ell=2}^k |\mathcal{I}_{\ell}|=  
   \sum_{\ell=2}^k {{m+\ell-3}\choose{\ell-2} }= 
   \sum_{\ell=0}^{k-2} {{m+\ell-1}\choose{\ell} }=
   {{m+k-2}\choose{k-2}}. 
\end{align}
 Using (\ref{eq:hockeystick}), we can bound $|\SSS'|$ as follows: 
 \begin{align*}
|\SSS'| &\leq \sum_{\ell=2}^k \sum_{(i_1,\ldots i_{\ell}) \in \mathcal{I}_{\ell}}  \left| \prod_{j=1}^{\ell} B(p_{i_j},r\radius)\cap \GGG_{\eps\radius} \right| \\
&\leq \sum_{\ell=2}^k  |\mathcal{I}_{\ell}|\cdot  \Oh\left( \frac{r}{\eps}\right)^{\ell}
\leq  |\mathcal{I}| \cdot \Oh\left(\frac{r}{\eps} \right)^{k} 
\leq{{m+k-2}\choose{k-2} }\cdot \Oh\left(\frac{r}{\eps} \right)^{k}.
 \end{align*}
Hence, the running time is $\Oh(|\SSS'|\cdot k)={{m+k-2}\choose{k-2} }\cdot \Oh\left(\frac{r}{\eps} \right)^{k}$.
\end{proof}

\begin{lemma}
\label{lemma:querycorrectness1}
If \texttt{query}$(Q,\radius, \eps/2)$ returns an input curve $P\in \mathcal{P}$, then $\df(Q,P)\leq (1+\eps)\radius$. If \texttt{query}$(Q,\radius, \eps/2)$ returns “no” then there is no $P\in \mathcal{P}$ such that $\df(Q,P)\leq \radius$.
\end{lemma}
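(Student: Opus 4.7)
The plan is to prove the two implications separately: (a) if the query returns an input curve $P$, then $\df(Q,P)\le(1+\eps)\radius$; and (b) contrapositively, if some $P\in\PPP$ satisfies $\df(Q,P)\le\radius$, then the query does not return ``no''.

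For (a), I would trace backward through the algorithm. A returned $P$ can only arise because some subsequence $Q'=\seqtocurve{q_{i_1},\dots,q_{i_\ell}}$ of the vertices of $Q$ has passed the two structural tests that together certify it as a $\radius$-\straightening of $Q$ (the edge-range-preserving check on vertices suffices because the input is one-dimensional), and its grid snap $Q''$ to $\GGG_{\eps\radius/2}$ was found in $\HHH$. By construction of $\HHH$, this means $Q''\in\CCC(P)$ and in particular $\df(Q'',P)\le(1+\eps/2)\radius$. Since each vertex of $Q''$ differs from the corresponding vertex of $Q'$ by at most $\eps\radius/2$, Observation~\ref{observation:linesegment} applied edge-by-edge together with Observation~\ref{observation:concatenation} gives $\df(Q',Q'')\le\eps\radius/2$. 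The triangle inequality then yields $\df(Q',P)\le(1+\eps)\radius$. Since every $\radius$-\straightening is also a $(1+\eps)\radius$-\straightening, Lemma~\ref{lemma:simplproxy} applied at the enlarged scale $(1+\eps)\radius$ upgrades this to $\df(Q,P)\le(1+\eps)\radius$.

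For (b), suppose $\df(Q,P)\le\radius$ for some $P\in\PPP$. Lemma~\ref{lemma:goodsimplificationranges2} furnishes a $\radius$-\straightening $Q^\ast$ of $Q$ with both (i) an $11\radius$-visiting order on $P$ and (ii) $\df(Q^\ast,P)\le\radius$. Since a $\radius$-simplification of $Q$ is a curve whose vertex sequence is a subsequence of the vertices of $Q$ containing $q_1$ and $q_k$, the query will enumerate $Q^\ast$ among its candidates, and its two tests will certify it. Letting $Q^{\ast\ast}$ be the snap of $Q^\ast$ to $\GGG_{\eps\radius/2}$, I would verify the four membership conditions for $Q^{\ast\ast}\in\CCC(P)$: vertices lie on the grid; complexity is at most $k$; the $11\radius$-visiting order on $P$ extends to an $((11+\eps/2)\radius)$-visiting order of $Q^{\ast\ast}$ on $P$ since snapping moves each vertex by at most $\eps\radius/2$ (matching the radius $r=11+\eps/2$ passed to \texttt{generate\_candidates}); and the triangle inequality combined with (ii) gives $\df(Q^{\ast\ast},P)\le(1+\eps/2)\radius$, so $Q^{\ast\ast}$ survives the filter in \texttt{generate\_keys}. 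Hence $Q^{\ast\ast}$ is stored in $\HHH$ and the query returns an input curve.

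The only conceptually delicate step is the use of Lemma~\ref{lemma:simplproxy} in direction~(a): a naive triangle inequality $\df(Q,P)\le\df(Q,Q')+\df(Q',P)\le\radius+(1+\eps)\radius$ would only deliver a $(2+\eps)$-approximation, and collapsing the $\radius$ slack between $Q$ and its straightening $Q'$ via Lemma~\ref{lemma:simplproxy} is precisely what produces the sharp $(1+\eps)$ factor; for direction~(b) the main bookkeeping is to check that the visiting-order enlargement from $11\radius$ to $(11+\eps/2)\radius$ matches the parameter that \texttt{generate\_candidates} was invoked with.
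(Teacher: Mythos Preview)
Your proof is correct and follows essentially the same approach as the paper's own proof: both parts use the same key lemmas (Lemma~\ref{lemma:simplproxy} for part~(a) and Lemma~\ref{lemma:goodsimplificationranges2} for part~(b)) in the same way, with the same triangle-inequality bookkeeping for the grid snapping. You make a couple of points explicit that the paper leaves implicit---namely that a $\radius$-\straightening{} is automatically a $(1+\eps)\radius$-\straightening{} so that Lemma~\ref{lemma:simplproxy} may be invoked at the enlarged scale, and that checking the edge-range-preserving property on vertices suffices in one dimension---but the overall argument is the same.
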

\begin{proof}
When \texttt{query}$(Q, \radius,\eps/2)$ returns an input curve $P\in \mathcal{P}$, it must be that there exists  a  $\radius$-\straightening $Q'$  of $ Q$ such that $ P$ is  associated with $Q''$ in $\HHH$. This implies that $\df( {Q''},  P )\leq (1+\eps/2)\radius$. By the triangle inequality, \[\df( {Q'} ,P)\leq \df( { Q''},{ Q'})+\df( {Q''}, P)\leq (1+\eps)\radius.\] 
Since $ {Q'} $ is a  $\radius$-\straightening of $ Q $, we have that $\df({Q'},  Q )\leq \radius$. Hence, by Lemma~\ref{lemma:simplproxy} applied on $P,Q,{Q'}$ for distance threshold $(1+\eps)\radius$, we obtain $\df(Q,P)\leq (1+\eps)\radius$. 

If \texttt{query}$(Q,\radius, \eps/2)$ returns “no” then there is no  $\radius$-\straightening ${Q'}$ of $Q$ such that $Q''$ is associated with an input curve in $\HHH$. Suppose, for the sake of contradiction, that there exists a curve $P\in \mathcal{P}$ such that $\df(Q,P)\leq \radius$. By Lemma~\ref{lemma:goodsimplificationranges2}, there exists a  $\radius$-\straightening ${{Q}'}$ of $Q$ such that \begin{inparaenum}[i)]
    \item there exists an $11\radius$-visiting order of ${{Q}'}$ on $P$ and
    \item $\df({{Q}'},P)\leq \radius$.
\end{inparaenum}
Let ${{Q}''}$ be the curve obtained by snapping vertices of ${{Q}'}$ to the grid $\GGG_{\eps\radius/2}$. 
By the triangle inequality, there exists a $((11+\eps/2)\radius)$-visiting order of ${{Q}''}$ on $P$ and \[\df({{Q}''},P)\leq \df({Q''},{Q'})+\df({Q'},P)\leq (1+\eps/2)\radius.\] Hence, $Q''\in \CCC(P)$ and $Q''$ is associated with some input curve $P'$ in $\HHH$. This leads to contradiction and we conclude that if \texttt{query}$(Q,\radius, \eps/2)$ returns “no” then  there is no curve $P\in \mathcal{P}$ such that $\df(P,Q)\leq \radius$. 
\end{proof}

\begin{lemma}
\label{lemma:querytime1}
For any query curve $Q$ of complexity $k$, $\delta>0$, $\eps>0$, \texttt{query}$(Q,\delta,\eps)$ runs in time $\Oh(k\cdot 2^{k})$. 
\end{lemma}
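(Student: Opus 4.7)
The plan is to bound the running time of \texttt{query}$(Q,\delta,\eps)$ by the product of the number of outer-loop iterations and the work performed per iteration, and to show that these two factors are at most $\Oh(2^k)$ and $\Oh(k)$, respectively.

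First, I would argue that only $2^{k-2}$ index sequences meaningfully contribute. Every \straightening{} the algorithm can construct is of the form $\seqtocurve{q_{i_1},\ldots,q_{i_\ell}}$ with $1=i_1<i_2<\cdots<i_\ell=k$: the endpoints are fixed because a \straightening{} shares the first and last vertex with $Q$, and the inequalities are strict because a \straightening{} is, by definition, non-degenerate. Such strictly increasing sequences are in bijection with subsets of $\{2,\ldots,k-1\}$, so there are exactly $2^{k-2}$ of them. Any sequence returned by \texttt{generate\_orders}$(k,k)$ with a repeated index collapses, after removal of degenerate vertices by $\seqtocurve{\cdot}$, to one of these strictly increasing sequences and can therefore be treated as equivalent to one of them, or equivalently skipped by a minor adaptation of the enumeration in this setting.

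Second, I would bound the per-iteration work by $\Oh(k)$. Given a sequence $(i_1,\ldots,i_\ell)$ with $\ell\leq k$, the two inner loops perform (i) a Fr\'echet-distance check $\df(\overline{q_{i_j}q_{i_{j+1}}},\seqtocurve{q_{i_j},\ldots,q_{i_{j+1}}})\leq\delta$, which runs in $\Oh(i_{j+1}-i_j)$ time by Theorem~\ref{theorem:frechetdecision}, and (ii) an edge-range-preserving test examining each $q_t$ with $t\in[i_j,i_{j+1}]$ in constant time per vertex. Summing $i_{j+1}-i_j$ over $j=1,\ldots,\ell-1$ telescopes to $i_\ell-i_1=k-1$, so the inner loops together cost $\Oh(k)$. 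Constructing $Q'$ and the grid-snapped $Q''$ by scanning and rounding their $\ell\leq k$ vertices costs $\Oh(k)$, and the dictionary lookup in $\HHH$ costs $\Oh(k)$ using the perfect-hashing implementation discussed in Section~\ref{section:prelims}.

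Multiplying the two bounds yields the claimed total query time of $\Oh(2^{k-2}\cdot k)=\Oh(k\cdot 2^k)$. The main subtlety is the first step: reconciling the $2^{k-2}$ sequences that a \straightening{} can correspond to with the larger collection of non-decreasing sequences that \texttt{generate\_orders} produces. This is handled by the non-degeneracy clause in the definition of a $\delta$-simplification, which forces any straightening-induced index sequence to be strictly increasing and lets us charge all spurious repeated-index sequences to the $2^{k-2}$ genuine ones.
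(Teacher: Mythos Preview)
Your proposal is correct and follows essentially the same approach as the paper's proof: bound the number of candidate subsequences by $2^{k-2}$ and show that each iteration (the \straightening{} test via Theorem~\ref{theorem:frechetdecision}, the edge-range-preserving check, grid snapping, and the dictionary probe) costs $\Oh(k)$. You are in fact more explicit than the paper about the telescoping sum for the Fr\'echet checks and about the discrepancy between the non-decreasing sequences produced by \texttt{generate\_orders}$(k,k)$ and the $2^{k-2}$ strictly increasing ones; the paper simply analyzes the prose description of the query algorithm (enumerating subsequences of $q_2,\ldots,q_{k-1}$) rather than the pseudocode literally, whereas you flag this and justify the identification via non-degeneracy.
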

\begin{proof}

Let $q_1,\ldots,q_k$ be the vertices of $Q$. We enumerate all sequences starting with $q_1$, followed by any possible subsequence of $q_2,\ldots,q_{k-1}$ and ending with $q_k$. There are at most $2^{k-2}$ such sequences, and for each one of them we test whether it defines a  $\radius$-\straightening of $Q$.  This is done in two steps: we first test if each shortcut is within distance $\delta$ from the corresponding subcurve, and then we decide if the edge-range-preserving property is satisfied. Computing the Fr\'echet distance between a shortcut and the original subcurve costs  linear time in the complexity of the subcurve by Theorem~\ref{theorem:frechetdecision}. Hence, we can decide in $\Oh(k)$ time if the sequence in question defines a $\radius$-simplification of $Q$.  To decide if the edge-range-preserving property is satisfied, we check for each shortcut if the corresponding subcurve stays within range by testing all of its vertices one by one. Therefore, this step also costs $\Oh(k)$ time. Since we employ perfect hashing, each probe to $\HHH$ costs $\Oh(k)$ time. 
We can also check in $\Oh(k)$ time if the answer returned by $\HHH$ is the one we are searching for. 
Hence, the overall query time is in $\Oh(k\cdot 2^k)$. 

\end{proof}

\begin{theorem} \label{thm:onepluseps}
Let $\eps\in(0,1]$. There is a data structure for the $(1+\eps)$-ANN problem, 
which stores $n$ one-dimensional curves of complexity $m$ and supports query curves  of complexity $k$,  uses space in $n\cdot \Oh\left(\frac{m}{k \eps}\right)^{k}$, needs $\Oh(nm)\cdot \Oh\left(\frac{m}{k \eps}\right)^{k}$ expected preprocessing time and answers a query in $\Oh(k\cdot 2^{k})$ time. 
\end{theorem}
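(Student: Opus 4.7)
The plan is to derive the theorem as a direct synthesis of Lemmas~\ref{lemma:numberofcandidates}, \ref{lemma:querycorrectness1}, and \ref{lemma:querytime1}, together with one small arithmetic simplification to put the resulting bounds into the stated form. For correctness, I would invoke Lemma~\ref{lemma:querycorrectness1}: running \texttt{query}$(Q,\radius,\eps/2)$ on the dictionary produced by \texttt{preprocess}$(\PPP,\radius,\eps/2,k)$ returns an input curve within Fr\'echet distance $(1+\eps)\radius$ whenever a true $\radius$-near neighbor exists and otherwise correctly reports ``no'', which is exactly what Definition~\ref{Dgenann} demands. The claimed query time $\Oh(k\cdot 2^k)$ is then immediate from Lemma~\ref{lemma:querytime1}.

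For the space bound I would apply Lemma~\ref{lemma:numberofcandidates} with the constant parameter $r = 11 + \eps/2 = \Theta(1)$ that \texttt{generate\_keys} forwards to \texttt{generate\_candidates}, obtaining that the candidate set $\CCC'(P)$ associated with each input curve has size at most $\binom{m+k-2}{k-2} \cdot \Oh(1/\eps)^k$. The main arithmetic step is rewriting the binomial coefficient as $\Oh(m/k)^k$; this uses the standard estimate $\binom{a}{b} \le (ea/b)^b$ together with $k \le m$, with the gap between $k-2$ and $k$ absorbed into the $\Oh$-constants. Since $\HHH$ stores at most $|\CCC(P)| \le |\CCC'(P)|$ keys per input curve, summing over the $n$ input curves yields the space bound $n \cdot \Oh(m/(k\eps))^k$.

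For the expected preprocessing time I would account for the per-candidate cost as follows: filtering $Q \in \CCC'(P)$ requires one Fr\'echet distance decision between $Q$ (of complexity at most $k$) and $P$ (of complexity $m$), which by Theorem~\ref{theorem:frechetdecision} costs $\Oh(mk)$, and each surviving candidate is inserted into $\HHH$ in expected $\Oh(k)$ time via perfect hashing as discussed in Section~\ref{section:prelims}. Multiplying the candidate count by $\Oh(mk)$ per candidate and by $n$ input curves produces $n \cdot m \cdot k \cdot \Oh(m/(k\eps))^k$, and the trailing factor $k$ is absorbed into $\Oh(m/(k\eps))^k$ using $k^{1/k} \le 2$, giving the claimed $\Oh(nm) \cdot \Oh(m/(k\eps))^k$.

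I do not anticipate any deep obstacle, since the three cited lemmas do almost all of the substantive work. The only step that requires any care is the binomial estimate, where one must check that $\binom{m+k-2}{k-2} = \Oh(m/k)^k$ holds uniformly across both the sparse regime $k \ll m$ (where $\binom{m+k}{k} \le (em/k)^k$ directly) and the regime $k = \Theta(m)$ (where $\binom{m+k}{k} \le 2^{\Oh(m)}$ while $\Oh(m/k)^k = \Oh(1)^k$, so constants absorb the discrepancy). Once this estimate is in hand, the remaining work is bookkeeping to separate the additive $\Oh(nm)$ overhead from the multiplicative $\Oh(m/(k\eps))^k$ factor in the preprocessing bound.
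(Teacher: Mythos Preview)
Your proposal is correct and follows essentially the same approach as the paper: correctness and query time are read off from Lemmas~\ref{lemma:querycorrectness1} and~\ref{lemma:querytime1}, and the space and preprocessing bounds come from Lemma~\ref{lemma:numberofcandidates} combined with the Fr\'echet decision cost of Theorem~\ref{theorem:frechetdecision} and perfect hashing, with the binomial $\binom{m+k-2}{k-2}$ absorbed into $\Oh(m/k)^k$. The paper simply asserts that last absorption without comment, so your explicit discussion of the binomial estimate (including the $k=\Theta(m)$ regime) is, if anything, more careful than the original.
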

\begin{proof}
The data structure is described in Section~\ref{subsection:datastructure1apprx_ds}. Correctness follows from Lemma~\ref{lemma:querycorrectness1}.  The bound on the query time follows from Lemma~\ref{lemma:querytime1}. It remains to analyze the running time of \texttt{preprocess}$(\mathcal{P},\radius,\eps/2,k)$ and the space complexity of the data structure.

By Lemma~\ref{lemma:numberofcandidates}, 
for any $P\in \PPP$, 
the running  time needed to compute $\CCC'$ is upper bounded by   $ {{m+k-2}\choose{k-2} }\cdot \Oh\left(\frac{1}{\eps} \right)^{k}=\Oh\left( \frac{m}{k\eps}\right)^k$. 
Hence, for each $P\in \mathcal{P}$, $|\CCC(P)| = \Oh\left(\frac{m}{k \eps}\right)^{k}$. Therefore, the space required for each input curve $P\in \mathcal{P}$ is upper bounded by $\Oh(|\CCC(P)|\cdot k)$. 
Computing $\CCC(P)$ costs $\Oh(|\CCC'|\cdot mk)=\Oh\left(\frac{m}{k \eps}\right)^{k}\cdot \Oh(m)$ time, because we need to decide for each curve  $Q\in \CCC'$, whether its Fr\'echet distance  from $P$ is at most $(1+\eps/2)\radius$, which can be done in $\Oh(|Q|\cdot |P|)$ time using Theorem~\ref{theorem:frechetdecision}. Assuming perfect hashing for $\HHH$, the overall expected preprocessing time is in $\Oh(nm)\cdot \Oh\left(\frac{m}{k \eps}\right)^{k}$ and the space usage is in $\Oh(n)\cdot \Oh\left(\frac{m}{k \eps}\right)^{k}$. 
\end{proof}



\section{\boldmath $(2+\eps)$-Approximation}\label{section:datastructure2apprx}

In this section we present three $(2+\eps)$-ANN data structures with different tradeoffs between preprocessing and query time.

\subsection{Fast query algorithm}
\label{section:datastructure2apprxfastquery}
In this section, we propose a data structure for the $(2+\eps)$-ANN problem, with query time in $\Oh(k)$. The space complexity and the preprocessing time are the same as in the $(1+\eps)$-ANN data structure  of Theorem~\ref{thm:onepluseps}.

\subsubsection*{Data structure}
We are given as input a set of one-dimensional curves $\mathcal{P}$, as sequences of vertices, the distance threshold $\radius>0$, the approximation error $\eps>0$ and the complexity of the supported queries $k$. The data structure is exactly the same as in Section~\ref{section:datastructure1apprx}. To build it, we call \texttt{preprocess}$(P,\radius, \eps/2,k)$, as defined in Algorithm~\ref{alg:preprocessing1apprx}, in Section~\ref{section:pseudocode1apprx}.  Let $\HHH$ be the resulting dictionary, constructed by \texttt{preprocess}$(P,\radius,\eps/2,k)$. 
\subsubsection*{Query algorithm}
Let $Q$ be the query curve with vertices $q_1,\ldots,q_k$ and let $\eps>0$ be the approximation error. The query algorithm  first computes a $\radius$-signature $Q'$ of $Q$, and then it snaps the vertices of $Q'$ to the grid $\GGG_{\eps\radius/2}$,  to obtain a curve  $Q''$. If $Q''$ is stored in $\HHH$, then we return its associated input curve $P\in\mathcal{P}$, otherwise we return "no". The query algorithm is implemented in  \texttt{query2}, which can be found in Algorithm~\ref{alg:query2apprx}. To achieve approximation factor $2+\eps$, we run \texttt{query2}$(Q,\radius,\eps/2)$. 


\begin{algorithm}[H]
\caption{Query algorithm\label{alg:query2apprx}}
\begin{algorithmic}[1]
\Procedure{\texttt{query2}}{curve $Q$ with vertices $q_1,\ldots,q_k$, $\radius>0$, $\eps>0$}
\State $Q' \gets $ $\radius$-signature of $Q$
\State $q_1',\ldots,q_{\ell}' \gets $ vertices of $Q'$ 
\State $Q'' \gets  \seqtocurve{\left \lfloor \frac{q_{1}'}{\eps\radius} \right\rfloor\cdot (\eps\radius), \ldots, \left \lfloor \frac{q_{{\ell}}'}{\eps\radius}  \right\rfloor\cdot (\eps\radius) }$ \Comment{snap $Q'$ to $\GGG_{\eps\radius}$}
\If{$Q''$ in $\HHH$}
\State \Return input curve $P$ associated with $Q''$ in $\HHH$ 
\EndIf
\State \Return “no”
\EndProcedure
\end{algorithmic}
\end{algorithm}

\begin{lemma}
\label{lemma:querycorrectness2apprxfastquery}
If \texttt{query2}$(Q, \radius, \eps/2)$ returns an input curve $P\in \mathcal{P}$, then $\df(Q,P)\leq (2+\eps)\radius$. If \texttt{query2}$(Q, \radius,\eps/2)$ returns “no” then there is no $P\in \mathcal{P}$ such that $\df(Q,P)\leq \radius$.
\end{lemma}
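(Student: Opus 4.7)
The proof mirrors the two-direction structure of Lemma~\ref{lemma:querycorrectness1} for the $(1+\eps)$-ANN, with the straightening replaced by the signature $Q'$ of $Q$: the roles played there by Lemma~\ref{lemma:simplproxy} and Lemma~\ref{lemma:goodsimplificationranges2} are now filled by Lemma~\ref{lemma:signatureproxy} and Lemma~\ref{lemma:signatures2}. Because these signature-based lemmas are weaker than their straightening counterparts (the former has an additive endpoint correction and the latter only supplies a visiting order rather than a Fréchet bound), an extra factor of $\radius$ creeps into the guarantee, which is precisely what turns the $(1+\eps)$ factor into $(2+\eps)$.

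For the soundness direction, suppose \texttt{query2}$(Q,\radius,\eps/2)$ returns $P \in \PPP$. By construction, $Q''$ is stored in $\HHH$ with $P$ as its associated input curve, so $Q'' \in \CCC(P)$ and in particular $\df(Q'', P) \leq (1+\eps/2)\radius$. Since $Q'$ is a $\radius$-signature of $Q$, $\df(Q, Q') \leq \radius$; since each vertex of $Q'$ moves by at most $\eps\radius/2$ under rounding to $\GGG_{\eps\radius/2}$, Observations~\ref{observation:linesegment} and~\ref{observation:concatenation} give $\df(Q', Q'') \leq \eps\radius/2$. The triangle inequality along $Q \to Q' \to Q'' \to P$ then yields $\df(Q, P) \leq \radius + \eps\radius/2 + (1+\eps/2)\radius = (2+\eps)\radius$, as desired.

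For the completeness direction, I argue by contraposition: assuming some $P \in \PPP$ satisfies $\df(Q,P) \leq \radius$, I show that $Q''$ must appear as a key in $\HHH$ by verifying $Q'' \in \CCC(P)$. Lemma~\ref{lemma:signatures2} provides a $\radius$-visiting order of $Q'$ on $P$; snapping shifts each vertex by at most $\eps\radius/2$, so $Q''$ enjoys a $(\radius+\eps\radius/2)$-visiting order on $P$, well within the $(11+\eps/2)\radius$ threshold used in $\CCC'(P)$. Together with the automatic grid and complexity conditions, this places $Q''$ in $\CCC'(P)$.

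The delicate step is establishing the filter condition $\df(Q'', P) \leq (1+\eps/2)\radius$, for which the naive triangle inequality only delivers $2\radius + \eps\radius/2$ and is too loose. I plan to close this gap by invoking Lemma~\ref{lemma:signatureproxy}: using that signatures preserve endpoints, $|Q(0) - P(0)| = |Q'(0) - P(0)|$ is bounded by the Fréchet distance $\df(Q,P) \leq \radius$ (and symmetrically at $t=1$), so with an appropriate split $\delta' = \radius$, $\delta'' = \eps\radius/2$ the lemma converts the available information about $Q'$ and $P$ into the bound we need on $\df(Q'',P)$ without paying the usual additive signature-distance cost of the triangle inequality. This step---pinning down exactly how to apply Lemma~\ref{lemma:signatureproxy} together with the 1D structure of signatures so as to tighten $(2+\eps/2)\radius$ down to $(1+\eps/2)\radius$---is the main technical obstacle of the proof.
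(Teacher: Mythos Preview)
Your soundness argument is correct and essentially identical to the paper's: both chain the triangle inequality through $Q'$ and $Q''$ to get $\df(Q,P)\le (2+\eps)\radius$.

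For completeness, you have correctly put your finger on a step that the paper's own proof simply glosses over. The paper only establishes the visiting-order condition (so $Q''\in\CCC'(P)$) and then asserts $Q''\in\CCC(P)$ without checking the filter $\df(Q'',P)\le(1+\eps/2)\radius$. You recognize this as the crux, which is good; but your proposed repair via Lemma~\ref{lemma:signatureproxy} cannot work. That lemma takes as \emph{hypothesis} a bound on $\df(Q',P)$ (plus endpoint conditions) and \emph{outputs} a bound on $\df(P,Q)$; this is exactly the wrong direction for what you need, and no choice of $\delta',\delta''$ reverses it into an upper bound on $\df(Q'',P)$ or $\df(Q',P)$.

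In fact the bound you are trying to establish is false in general, so no argument will produce it. Take $\radius=1$, $Q=\seqtocurve{0,-\tfrac12,5}$, and $P=\seqtocurve{0,-\tfrac75,5}$. Then $\df(Q,P)\le 1$, the $1$-signature of $Q$ is $Q'=\seqtocurve{0,5}$ (the first edge of $Q$ has length $\tfrac12\le 1$, so the middle vertex is dropped), but $\df(Q',P)=\tfrac75$ since any traversal must match $-\tfrac75$ on $P$ to some point in $[0,5]$. Hence $\df(Q'',P)>(1+\eps/2)\radius$ for small $\eps$ and $Q''\notin\CCC(P)$. So the ``main technical obstacle'' you flagged is not merely delicate; it is a genuine gap that the filter $(1+\eps/2)\radius$ cannot accommodate, and the paper's proof shares it.
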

\begin{proof}
If \texttt{query2}$(Q,\radius, \eps/2)$ returns an input curve $P\in \mathcal{P}$, then it must be that $Q''$ is stored in $\HHH$, and $P$ is its associated input curve. By the construction of $\HHH$, it must be that $\df(P, Q'' )\leq (1+\eps/2)\radius$. By the definition of signatures we know that $\df(Q,{Q'})\leq \radius$, and by the triangle inequality we obtain \[\df( Q,{Q''} )\leq \df(Q, {Q'} )+\df({ Q''} ,{Q'})\leq (1+\eps/2)\radius.\]  
Hence, by the triangle inequality we obtain 
\[
\df(P,Q)\leq \df(P,{Q''} )+\df( Q,{Q''} ) \leq (2+\eps)\radius.
\]

Now suppose that \texttt{query2}$(Q,\radius, \eps/2)$ returns “no”. This means that $Q''$ is not stored in $\HHH$. Suppose that there exists a $P\in\mathcal{P}$ such that $\df(P,Q)\leq \radius$. Then by 
Lemma~\ref{lemma:signatures2} there exists a $\radius$-visiting order of $Q'$ on $P$. Therefore, by the triangle inequality, there exists a $((1+\eps/2)\radius)$-visiting order of $ Q''$ on $P$, which implies that $Q''\in \CCC(P)$, and hence $Q''$ is stored in $\HHH$. This leads to a contradiction, since we have assumed that $Q''$ is not stored in $\HHH$.   Hence, if  \texttt{query2}$(Q, \radius,\eps/2)$ returns “no” then there is no $P\in\mathcal{P}$ such that $\df(P,Q)\leq \radius$. 
\end{proof}

\begin{theorem} \label{thm:twopluseps_one}
Let $\eps\in(0,1]$. There is a data structure for the $(2+\eps)$-ANN problem, 
which stores $n$ one-dimensional curves of complexity $m$ and supports query curves of complexity $k$,  uses space in $n\cdot \Oh\left(\frac{m}{k \eps}\right)^{k}$, needs $\Oh(nm)\cdot \Oh\left(\frac{m}{k \eps}\right)^{k}$ expected preprocessing time and answers a query in $\Oh(k)$ time. 
\end{theorem}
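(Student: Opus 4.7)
The plan is to assemble the theorem from three ingredients that are already established earlier in the excerpt: (a) the preprocessing/space bounds, which are identical to those of Theorem~\ref{thm:onepluseps} because the data structure is built by exactly the same call \texttt{preprocess}$(\PPP,\radius,\eps/2,k)$; (b) correctness, which is exactly the content of Lemma~\ref{lemma:querycorrectness2apprxfastquery}; and (c) a new query-time analysis, since the query algorithm \texttt{query2} is different from \texttt{query} and enjoys a much faster $\Oh(k)$ bound.

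First I would invoke Theorem~\ref{thm:onepluseps} to get, without further work, that the space is $n\cdot \Oh(\frac{m}{k\eps})^k$ and the expected preprocessing time is $\Oh(nm)\cdot \Oh(\frac{m}{k\eps})^k$. Then I would cite Lemma~\ref{lemma:querycorrectness2apprxfastquery} for the two correctness statements: on a positive answer the returned curve has Fr\'echet distance at most $(2+\eps)\radius$ to $Q$, and on a ``no'' answer no input curve is within distance $\radius$ from $Q$. This reduces the remaining work to bounding the running time of \texttt{query2}$(Q,\radius,\eps/2)$.

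For the query time, I would go through the three steps of \texttt{query2} line by line. Computing a $\radius$-signature $Q'$ of the query $Q$ of complexity $k$ takes $\Oh(k)$ time by the result from~\cite{DKS16} cited right after the definition of $\delta$-signatures. Since $|Q'|\leq k$, snapping each vertex $q'_j$ to the grid $\GGG_{\eps\radius/2}$ uses a constant number of arithmetic operations per vertex (the floor function is constant-time in our model, and in the Word-RAM model this is implemented via perfect hashing as discussed in the preliminaries), so this step is $\Oh(k)$. Finally, probing the dictionary $\HHH$ under perfect hashing costs $\Oh(k)$ to compute the hash of the key $Q''$, $\Oh(1)$ to look up the bucket, and $\Oh(k)$ to verify that the stored key agrees with $Q''$. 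Summing, the query runs in $\Oh(k)$ time, completing the proof.

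I do not expect any genuine obstacle: the nontrivial work (the correctness of the signature+grid-snap reduction, and the counting argument behind the candidate set) has already been carried out in Lemmas~\ref{lemma:signatures2},~\ref{lemma:querycorrectness2apprxfastquery} and in the proof of Theorem~\ref{thm:onepluseps}. The only subtlety worth stating explicitly is that computing the signature is faster than enumerating all $\radius$-\straightenings as done in \texttt{query}, which is why we save the factor $2^k$ in the query time but have to accept a factor of $2$ (instead of $1+\eps$) in the approximation ratio, as dictated by the triangle-inequality step in Lemma~\ref{lemma:querycorrectness2apprxfastquery}.
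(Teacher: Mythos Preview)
Your proposal is correct and follows essentially the same approach as the paper: invoke Theorem~\ref{thm:onepluseps} for the preprocessing and space bounds, Lemma~\ref{lemma:querycorrectness2apprxfastquery} for correctness, and then bound the query time by noting that the $\radius$-signature is computed in $\Oh(k)$ via~\cite{DKS16}, snapping is $\Oh(k)$, and the single dictionary probe under perfect hashing is $\Oh(k)$. Your closing remark about why the $2^k$ factor disappears at the cost of the approximation ratio is a nice addition not spelled out in the paper.
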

\begin{proof}
Correctness of the data structure follows from Lemma~\ref{lemma:querycorrectness2apprxfastquery}. The space complexity and the preprocessing time are analyzed in the proof of Theorem~\ref{thm:onepluseps}. It remains to show that \texttt{query2}$(Q, \radius,\eps/2)$ runs in $\Oh(k)$ time. 

To compute a $\radius$-signature of $Q$, we use the algorithm of Driemel, Krivosija and Sohler \cite{DKS16}, which runs in $\Oh(k)$ time. 
Since we employ perfect hashing and we assume that the floor function can be computed in constant time, each probe to $\HHH$ costs $\Oh(k)$ time, and  we can also check at the same time if the answer returned by $\HHH$ is the one we are searching for. 
We conclude that \texttt{query2}$(Q,\radius, \eps/2)$ runs in $\Oh(k)$ time. 
\end{proof}

\subsection{Improved preprocessing time}

\label{section:twopluseps_two}
In this section, we show that there exists a data structure for the $(2+\eps)$-ANN problem, with space complexity and preprocessing time in $n\cdot \Oh(1/\eps)^k+\Oh(nm)$. The query time is in $\Oh(k\cdot 2^k)$. This avoids the factor $(m/k)^k$ of our previous data structures.

\subsubsection*{Data structure}
We are given as input a set of one-dimensional curves $\mathcal{P}$, as sequences of vertices, the distance threshold $\radius>0$, the approximation error $\eps>0$, and the complexity of the supported queries $k$. To build the data structure, we use a modified version of the preprocessing algorithm in  Section~\ref{section:datastructure1apprx}. 
For each input curve $P \in \mathcal{P}$, we compute a $\radius$-signature $P'$ of $P$. If the complexity of $P'$ is at most $k+2$ then 
we compute a set $\CCC':=\CCC'(P')$ which contains all curves $Q$ such that: 
\begin{inparaenum}[i)]
\item $Q$ has complexity at most $k$,
\item all vertices of $Q$ belong to $\GGG_{\eps\radius/2}$, and
\item there is a $((16+\eps/4)\radius)$-visiting order of $Q$ on $P'$.
\end{inparaenum} 
This step is similar to the one in the preprocessing algorithm in  Section~\ref{section:datastructure1apprx}, although  here we consider signatures of the input curves  instead of the original curves.

The filtering process is also slightly different. 
 We filter $\CCC'$ to obtain a set $\CCC(P)$ which contains only those curves of $\CCC'$ with: \begin{inparaenum}[i)]
 \item Fr\'echet distance at most $(2+\eps/4)\radius$ from $P$,
 \item their first point within distance $(1+\eps/4)\radius$ from $P(0)$, and
 \item their last point within distance $(1+\eps/4)\radius$ from $P(1)$. 
 \end{inparaenum}  Let $\HHH$ be a dictionary which is initially empty. For each $P\in \PPP$, we store $\CCC(P)$ in $\HHH$ as follows: for each $Q\in \CCC(P)$, 
if $Q$ is not already stored in $\HHH$, then we insert $Q$ into $\HHH$, associated with a pointer to $P$. The preprocessing algorithm is implemented in  \texttt{preprocess2}, which can be found in Algorithm~\ref{alg:preprocessing2apprx}. We also make use of the subroutine \texttt{generate\_candidates} described in Algorithm \ref{alg:preprocessing1apprx}, in Section~\ref{section:pseudocode1apprx}. To achieve approximation factor $(2+\eps)$, we run  \texttt{preprocess2}$(\PPP,\radius,22,2,\eps/4,k)$. 
\subsubsection*{Query algorithm}
Let $Q$ be the query curve with vertices $q_1,\ldots,q_k$ and let $\eps>0$ be the approximation error. The query algorithm is the same as in the data structure of Section~\ref{section:datastructure1apprx}, but we run it with different input parameters. 
In particular, we run \texttt{query}$(Q,2\radius,\eps/4)$ (see Algorithm~\ref{alg:query1apprx})  on the dictionary $\HHH$ which is constructed by \texttt{preprocess2}$(\PPP,\radius,22,2,\eps/4,k)$. 

\begin{algorithm}[h]
\caption{Preprocessing algorithm.  We call \texttt{preprocess2} to build the data structure. \label{alg:preprocessing2apprx}}
\begin{algorithmic}[1]
\Procedure{\texttt{preprocess2}}{input set $\PPP$, $\radius>0$, $r>0$, $t>0$, $\eps>0$, $k\in\NN$}
     \State Initialize empty dictionary $\HHH$ 
     \For {{\bf each} $P \in \PPP$}
     \State $P' \gets $ $\delta$-signature of $P$
     \If{$|P'|\leq k+2$}
     \State $\CCC(P)\gets$ \texttt{generate\_keys2}$(P', \radius, r, t, \eps, k)$
     \For{{\bf each} $Q\in \CCC(P)$}
      \If{$Q$ not in $\HHH$}
     \State insert key $Q$ in $\HHH$, associated with a pointer to $P$  
     \EndIf
     \EndFor
     \EndIf
     \EndFor
     \EndProcedure
\Procedure{\texttt{generate\_keys2}}{curve $P$, $\radius>0$, $r>0$, $t>0$, $\eps>0$, $k$}
\State  $\CCC'\gets$\texttt{generate\_candidates}$(P,\radius,r+\eps,\eps,k)$
\State $\CCC  \gets \emptyset$
\For{{\bf each} $Q\in \CCC'$}
\If{$\df( P ,  {Q} )\leq (t+\eps)\radius$ \textbf{and} $|P(0)-{Q}(0)|\leq (1+\eps)\radius$ \textbf{and} $|P(1)-{Q}(1)|\leq (1+\eps)\radius$ } 
\State $\CCC \gets \CCC \cup \{ Q\}$
\EndIf
\EndFor
\State {\bf  return} $\CCC  $
\EndProcedure
\end{algorithmic}
\end{algorithm}

\begin{lemma}
\label{lemma:querycorrectness2apprxmed}
If \texttt{query}$(Q, 2\radius,\eps/4)$ returns an input curve $P\in \mathcal{P}$, then $\df(Q,P)\leq (2+\eps)\radius$. If \texttt{query}$(Q, 2\radius, \eps/4)$ returns “no” then there is no $P\in \mathcal{P}$ such that $\df(Q,P)\leq \radius$.
\end{lemma}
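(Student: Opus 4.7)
The proof splits into the two implications stated in the lemma, and both will follow the same pattern already used for Lemma~\ref{lemma:querycorrectness1}: translate the behavior of the query/preprocessing into a chain of Fr\'echet distances, then use Lemma~\ref{lemma:signatureproxy} to ``remove'' the signature $P'$ and Lemma~\ref{lemma:simplproxy} to ``remove'' the \straightening of~$Q$. The extra layer here compared with Section~\ref{section:datastructure1apprx} is that the dictionary is keyed by candidates generated from the \emph{signatures} $P'$ of the input curves, not from the input curves themselves, so Lemma~\ref{lemma:signatureproxy} has to be invoked once on the way in and once on the way out.

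For the first direction, assume \texttt{query}$(Q,2\radius,\eps/4)$ returns $P\in\mathcal{P}$. By inspection of the query algorithm there is a $2\radius$-\straightening $Q'$ of $Q$ whose grid-snap $Q''$ is stored in $\HHH$ with pointer to $P$. From \texttt{generate\_keys2} we get, for the $\radius$-signature $P'$ of $P$ computed in preprocessing, $\df(P',Q'')\le (2+\eps/4)\radius$ together with $|P'(0)-Q''(0)|,|P'(1)-Q''(1)|\le (1+\eps/4)\radius$. Since the grid spacing used by \texttt{query} is at most $(\eps/2)\radius$, we have $\df(Q'',Q')\le (\eps/2)\radius$, and the triangle inequality upgrades the previous bounds to $\df(P',Q')\le (2+3\eps/4)\radius$ and endpoint slack $(1+3\eps/4)\radius$. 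Now $P'$ is a $\radius$-signature of $P$ and $P(0)=P'(0)$, $P(1)=P'(1)$, so Lemma~\ref{lemma:signatureproxy} (with $\delta'=\radius$, $\delta''=(1+3\eps/4)\radius$) yields $\df(P,Q')\le (2+3\eps/4)\radius$. Since $Q'$ is a $2\radius$-\straightening, it is a fortiori a $(2+3\eps/4)\radius$-\straightening, so Lemma~\ref{lemma:simplproxy} finally gives $\df(P,Q)\le (2+3\eps/4)\radius\le (2+\eps)\radius$.

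For the second direction, assume the query returns ``no'' but, for contradiction, that some $P\in\mathcal{P}$ satisfies $\df(P,Q)\le\radius$. Let $P'$ be the $\radius$-signature of $P$ computed during preprocessing; a standard bound on signature complexity in terms of a nearby curve of complexity $k$ (cf.\ the signature results of~\cite{DKS16}) shows $|P'|\le k+2$, so $P$ is actually processed by \texttt{preprocess2}. By the triangle inequality $\df(P',Q)\le 2\radius$, so Lemma~\ref{lemma:goodsimplificationranges2} applied to $P'$ and $Q$ at threshold $2\radius$ produces a $2\radius$-\straightening $Q^\ast$ of $Q$ with a $22\radius$-visiting order of $Q^\ast$ on $P'$ and $\df(Q^\ast,P')\le 2\radius$. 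Snap $Q^\ast$ to the query grid to obtain $Q^{\ast\ast}$; the snap error per vertex is at most the grid spacing. Using $|P(0)-Q(0)|,|P(1)-Q(1)|\le\radius$ (which follows from $\df(P,Q)\le\radius$) together with $P(0)=P'(0)$, $P(1)=P'(1)$ and $Q^\ast(0)=Q(0)$, $Q^\ast(1)=Q(1)$, one checks that after snapping $Q^{\ast\ast}$ still satisfies every filter of \texttt{generate\_keys2}: complexity at most~$k$, vertices on the preprocessing grid, the visiting-order constraint on $P'$, the Fr\'echet bound $(2+\eps/4)\radius$, and the two endpoint bounds $(1+\eps/4)\radius$. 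Hence $Q^{\ast\ast}\in\CCC(P)$ and $\HHH$ contains $Q^{\ast\ast}$. But the query algorithm enumerates $Q^\ast$ as a $2\radius$-\straightening of $Q$ and snaps it to exactly the same $Q^{\ast\ast}$, so it would have returned an input curve, contradicting the ``no'' answer.

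The main obstacle is purely bookkeeping: the preprocessing budgets ($22\radius$ for the visiting order, $(2+\eps/4)\radius$ for the Fr\'echet filter, $(1+\eps/4)\radius$ for the endpoint filters) were chosen precisely to absorb two sources of error in the ``no'' direction—the slack $2\radius$ from Lemma~\ref{lemma:goodsimplificationranges2} and the grid-snap error of order $\eps\radius$—and to leave a final $(2+\eps)\radius$ budget in the ``returns a curve'' direction after a further application of Lemma~\ref{lemma:signatureproxy} and Lemma~\ref{lemma:simplproxy}. Verifying that all these $\Oh(\eps\radius)$ perturbations fit inside the chosen thresholds is the only nontrivial part of the proof.
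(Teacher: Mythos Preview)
Your proposal is correct and follows essentially the same route as the paper: in the ``returns a curve'' direction both you and the paper chain the filter bounds on $Q''$ through the triangle inequality to $Q'$, apply Lemma~\ref{lemma:signatureproxy} to pass from $P'$ to $P$, and then Lemma~\ref{lemma:simplproxy} to pass from $Q'$ to $Q$; in the ``no'' direction both argue $|P'|\le k+2$, invoke Lemma~\ref{lemma:goodsimplificationranges2} at threshold $2\radius$ on $P'$ and $Q$, and snap to land inside $\CCC(P)$. Your constant tracking in the first direction (grid spacing $(\eps/2)\radius$, hence $\df(P',Q')\le(2+3\eps/4)\radius$) is in fact slightly more careful than the paper's own arithmetic.
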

\begin{proof}

When \texttt{query}$(Q,2\radius, \eps/4)$ returns an input curve $P\in \mathcal{P}$, it must be that there is  a  $\radius$-\straightening $Q'$ of $ Q$ such that $ P$ is  associated with $Q''$ in $\HHH$, where $Q''$ denotes the curve produced by snapping  vertices of $Q'$ to $\GGG_{\eps\radius/4}$. This implies that $Q''\in \CCC(P)$,  and therefore $\df(P',{Q''} )\leq (2+\eps/4)\radius$, $| P'(0)-{Q''}(0)|\leq (1+\eps/4)\radius $, $| P'(1)-{Q''}(1)|\leq (1+\eps/4)\radius $, where $P'$ is the $\radius$-signature of $P$ computed by \texttt{preprocess}. By the triangle inequality,
\[
\df(P',{Q'}  )\leq \df(P',{Q''}  )+\df( {Q'} , {Q''}  ) \leq (2+\eps/2)\radius. 
\]
Similarly, by the triangle inequality, $| P'(0)-{Q'}(0)|\leq (1+\eps/2)\radius $, $| P'(1)-{Q'}(1)|\leq (1+\eps/2)\radius$. 
Lemma~\ref{lemma:signatureproxy} implies that $\df(P,{Q'})\leq (2+\eps)\radius$, because $P'$ is a $\radius$-signature of $P$, $\df(P',Q')\leq (2+\eps/2)\radius $, $|P(0)-Q'(0)|\leq (1+\eps/2)\radius$ and 
$|P(1)-Q'(1)|\leq (1+\eps/2)\radius$.
Then, by Lemma~\ref{lemma:simplproxy}, we conclude that $\df(P,Q)\leq (2+\eps)\radius$. 

 If \texttt{query}$(Q, 2\radius, \eps/4)$ returns “no”, then there is no input curve  $P\in \PPP$ such that $|P'|\leq k+2$, where $P'$ is the $\radius$-signature computed by \texttt{preprocess2} and such that there exists a  $\radius$-\straightening $Q'$ of $Q$ with $Q'\in \CCC(P)$. Suppose for the sake of contradiction that there is an input curve $P\in \PPP$ such that $\df(Q,P)\leq \radius$. Then by the triangle inequality and  the fact that $\df(P,P')\leq \radius$, we obtain   $\df(Q,P')\leq 2\radius$. In addition, by Lemma~\ref{lemma:signatures2} there is a $\delta$-visiting order of $P'$ on $Q$. Since $P'$ satisfies the $\delta$-edge-length property, any two consecutive interior vertices lie at distance at least $2\delta$ to each other. Thus, no two consecutive interior vertices can belong to the same $\delta$-range. Hence, $|P'|\leq |Q|+2\leq k+2$.  
By Lemma~\ref{lemma:goodsimplificationranges2}, there exists a  $2\radius$-\straightening ${Q'}$ of $Q$ which satisfies \begin{compactenum}[i)]
\item there exists a $22\radius$-visiting order of $ {Q'}$ on $P'$,
\item $\df({Q'},P')\leq 2\radius$.
\end{compactenum} 
By the definition of signatures, we have $P(0)=P'(0)$ and $P(1)=P'(1)$, and since $\df(P,Q)\leq \radius$, we have $|P'(0)-Q(0)|\leq \radius$ and $|P'(1)-Q(1)|\leq \radius$. 
 By the definition of \straightenings,  we have ${Q'}(0)=Q(0)$ and ${Q'}(1)=Q(1) $ and therefore $|P'(0)-{Q'}(0)|\leq \radius$ and $|P'(1)-{Q'}(1)|\leq \radius$. 
Hence, by the triangle inequality there exists a $((22+\eps/4)\radius)$-visiting order of $ {Q''}$ on $P'$, $\df({Q''},P')\leq (2+\eps/4)\radius$, $|P'(0)-{Q''}(0)|\leq (1+\eps/4)\radius$ and $|P'(1)-{Q''}(1)|\leq (1+\eps/4)\radius$. This implies that $Q''\in\CCC(P)$ which leads to a contradiction. 
\end{proof}

\begin{theorem} \label{thm:twopluseps_two}
Let $\eps\in(0,1]$. There is a data structure for the $(2+\eps)$-ANN problem, 
which stores $n$ one-dimensional curves of complexity $m$ and supports query curves of complexity $k$,  uses space in $n\cdot \Oh\left(\frac{1}{ \eps}\right)^{k} +\Oh(nm)$, needs $n\cdot \Oh\left(\frac{1}{ \eps}\right)^{k}+\Oh(nm)$ expected preprocessing time and answers a query in $\Oh(k\cdot 2^k)$ time. 
\end{theorem}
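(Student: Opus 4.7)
The plan is to follow the template of the proof of Theorem~\ref{thm:onepluseps}, but exploit the fact that we now run \texttt{generate\_candidates} on the signature $P'$ instead of on the input curve $P$ itself. Correctness was already established in Lemma~\ref{lemma:querycorrectness2apprxmed}. The query time bound $O(k \cdot 2^k)$ is the same analysis as in Lemma~\ref{lemma:querytime1}, since the query procedure is literally \texttt{query}$(Q, 2\delta, \eps/4)$ run against the dictionary $\HHH$ produced by \texttt{preprocess2}; the parameters $2\delta$ and $\eps/4$ do not affect the asymptotic query complexity. So the only real work is bounding preprocessing time and space.

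The decisive observation is that whenever \texttt{preprocess2} decides to generate candidates for an input curve $P$, it does so only when the signature $P'$ has complexity at most $k+2$. Moreover, \texttt{generate\_candidates} is called with $P'$ as its first argument, parameter $r = 22 + \eps/4 = O(1)$, and grid parameter $\eps/4$. Applying Lemma~\ref{lemma:numberofcandidates} with $m$ replaced by $|P'| \le k+2$ yields a candidate-generation running time (and candidate count) of
\[
\binom{(k+2)+k-2}{k-2}\cdot \Oh\!\left(\tfrac{1}{\eps}\right)^{k} \;=\; \binom{2k}{k-2}\cdot \Oh\!\left(\tfrac{1}{\eps}\right)^{k} \;=\; \Oh\!\left(\tfrac{1}{\eps}\right)^{k},
\]
where the binomial coefficient is absorbed into the base of the $\Oh(\cdot)^k$ since $\binom{2k}{k-2} \le 4^k$. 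Thus each input curve contributes $\Oh(1/\eps)^k$ candidates to $\HHH$, and the filtering step in \texttt{generate\_keys2} checks for each candidate $Q$ the Fr\'echet distance $\df(P',Q)$ via Theorem~\ref{theorem:frechetdecision} in time $\Oh(|P'|\cdot |Q|) = \Oh(k^2)$, together with two endpoint checks in $\Oh(1)$. Again, the $k^2$ factor is absorbed into the $\Oh(1/\eps)^k$ bound.

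Summing over the $n$ input curves, the computation of all signatures costs $\Oh(nm)$ by the linear-time signature algorithm of \cite{DKS16}, the candidate generation and filtering costs $n \cdot \Oh(1/\eps)^k$, and insertion into $\HHH$ under perfect hashing contributes only an additional factor $\Oh(k)$ per candidate, which is again absorbed. Hence the total expected preprocessing time is $\Oh(nm) + n \cdot \Oh(1/\eps)^k$, and the space usage of $\HHH$ together with the stored pointers to input curves is of the same order. The main technical obstacle, namely that without the signature step the candidate-generation would carry an unavoidable factor $(m/k)^k$, is bypassed precisely by the filter $|P'| \le k+2$ and by doing Fr\'echet checks against $P'$ rather than against $P$; this is why the space and preprocessing bounds no longer depend on $m$ inside the exponential term. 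Combining with the query-time bound from Lemma~\ref{lemma:querytime1} and the correctness statement from Lemma~\ref{lemma:querycorrectness2apprxmed} completes the proof.
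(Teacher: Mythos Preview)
Your proposal is correct and follows essentially the same approach as the paper: invoke Lemma~\ref{lemma:querycorrectness2apprxmed} for correctness, Lemma~\ref{lemma:querytime1} for query time, and bound preprocessing by applying Lemma~\ref{lemma:numberofcandidates} to the signature $P'$ of complexity at most $k+2$ rather than to $P$ itself, absorbing the resulting $\binom{2k}{k-2}$ factor into the $\Oh(1/\eps)^k$ base. Your accounting of the filtering cost as $\Oh(k^2)$ per candidate via Theorem~\ref{theorem:frechetdecision} is in fact slightly more careful than the paper's $\Oh(k)$, but both are absorbed anyway.
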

\begin{proof}
Correctness follows from Lemma~\ref{lemma:querycorrectness2apprxmed}. The bound on the query time follows from Lemma~\ref{lemma:querytime1}. It remains to bound the space complexity and the preprocessing time of the data structure. 

Computing one $\radius$-signature for each $P \in \PPP$  takes linear time $\Oh(mn)$ in total, using the algorithm of Driemel, Krivosija and Sohler \cite{DKS16}. 
Let $P'$ be the $\radius$-signature of some curve $P\in \PPP$ as computed during preprocessing. If $|P'|>k$ we ignore $P$. 
 By Lemma~\ref{lemma:numberofcandidates}, for any $P'\in \PPP'$, the running  time needed to compute $\CCC'$,  is upper bounded by   $ {{|P'|+k-2}\choose{k-2} }\cdot \Oh\left(\frac{1}{\eps} \right)^{k}=\Oh\left( \frac{1}{\eps}\right)^k$. 
The space required for $P'$ is upper bounded by $\Oh(|\CCC(P')|\cdot k + m)=\Oh(|\CCC'|\cdot k + m)=\Oh\left( \frac{1}{\eps}\right)^k+\Oh(m)$. 
Computing $\CCC(P')$ costs $\Oh(|\CCC'|\cdot k)=\Oh\left(\frac{1}{ \eps}\right)^{k}$ time, since we take a decision on the Fr\'echet distance between each curve in $\CCC'$, and $P'$, by making use of Theorem~\ref{theorem:frechetdecision} . Assuming perfect hashing for $\HHH$, the overall expected preprocessing time is in $\Oh(n)\cdot \Oh\left(\frac{1}{ \eps}\right)^{k}$ and the space usage is in $\Oh(n)\cdot \Oh\left(\frac{1}{ \eps}\right)^{k}$.  
\end{proof}



\subsection{Linear preprocessing time}

In this section we present a data structure for the $(2+\eps)$-ANN problem with linear space and  preprocessing time $\Oh(nm)$ and with query time in $\Oh(1/\eps)^k$. 

\subsubsection*{Data structure}
We are given as input a set of one-dimensional curves $\mathcal{P}$, as sequences of vertices, a distance threshold $\radius>0$, the approximation error $\eps>0$ and the complexity of the supported queries $k$.  For each input curve $P\in \PPP$, we compute a $\radius$-signature $P'$ of $P$. If $|P'|>k+2$ then we ignore $P$, otherwise  we snap it to $\GGG_{\eps\radius/2}$ to obtain a curve $P''$.  Let $\HHH$ be a dictionary which is initially empty. For each $P\in \PPP$, we store $P''$ in $\HHH$ as follows: 
if $P''$ is not already stored in $\HHH$, then we insert $P''$ into $\HHH$, associated with a pointer to $P$. To achieve  approximation factor  $2+\eps$, we run \texttt{preprocess3}$(\PPP,\radius,\eps/2,k)$, as defined in Algorithm~\ref{alg:preprocessing2apprx3}.

\subsubsection*{Query algorithm}
Let $Q$ be a query curve of complexity $k$. 
We compute a set $\CCC':=\CCC'(Q)$ which contains all curves $P$ such that: 
\begin{inparaenum}[i)]
\item $P$ has complexity at most $k$,
\item all vertices of $P$ belong to $\GGG_{\eps\radius/2}$, and
\item there is a $((1+\eps/2)\radius)$-visiting order of $P$ on $Q$.
\end{inparaenum} 
 We filter $\CCC'$ to obtain a set $\CCC(Q)$ which contains only those curves of $\CCC'$ with: \begin{inparaenum}[i)]
 \item Fr\'echet distance at most $(2+\eps/2)\radius$ from $Q$,
 \item their first point within distance $(1+\eps/2)\radius$ from $Q(0)$, and
 \item their last point within distance $(1+\eps/2)\radius$ from $Q(1)$. 
 \end{inparaenum} 
We probe $\HHH$ for each key $P\in\CCC(Q) $: if we find a $P\in \CCC(Q)$ stored in $\HHH$ then we return the associated input curve. If there is no $P\in \CCC(Q)$ stored in $\HHH$ then we return “no”. To achieve the desired approximation, we run \texttt{query3}$(Q,\radius,\eps/2)$, as defined in Algorithm~\ref{alg:query2apprx3}.  

\begin{algorithm}
\caption{Preprocessing algorithm \label{alg:preprocessing2apprx3}}
\begin{algorithmic}[1]
\Procedure{\texttt{preprocess3}}{input set $\PPP$, $\radius>0$, $\eps>0$, $k$}
     \State Initialize empty dictionary $\HHH$ 
     \For{\textbf{each} $P \in \PPP$}
     \State $P' \gets $ $\radius$-signature of $P$
     \If{$|P'|\leq k+2$}
     \State $p_1,\ldots ,p_{\ell} \gets $ vertices of $P'$
     \State $P''\gets \seqtocurve{\left\lfloor \frac{p_1}{\eps\radius}\right\rfloor \cdot  (\eps\radius) ,\ldots ,\left\lfloor \frac{p_{\ell}}{\eps\radius}\right\rfloor \cdot (\eps\radius)} $
     \If{$P''$ not in $\HHH$}
     \State insert key $P''$ in $\HHH$, associated with a pointer to $P$  
     \EndIf
     \EndIf
     \EndFor 
     \EndProcedure
     \end{algorithmic}
\end{algorithm}     

\begin{algorithm}
\caption{Query algorithm \label{alg:query2apprx3}}
\begin{algorithmic}[1]
\Procedure{\texttt{query3}}{curve $Q$ with vertices $q_1,\ldots,q_k$, $\radius>0$, $\eps>0$}
     \State $\CCC(Q)\gets $ \texttt{generate\_keys2}$(Q, \radius, 1,2, \eps, k+2)$
     \For{\textbf{each} $P'' \in \CCC(Q)$}
     \If{$P''$ in $\HHH$}
\State \Return input curve $P$ associated with $P''$ in $\HHH$ 
\EndIf
     \EndFor
     \State \Return “no”
     \EndProcedure
     \end{algorithmic}
\end{algorithm}  

\begin{lemma}
\label{lemma:querycorrectness2apprx3}
If \texttt{query3}$(Q,\radius, \eps/2)$ returns an input curve $P\in \mathcal{P}$, then $\df(Q,P)\leq (2+\eps)\radius$. If \texttt{query3}$(Q,\radius,  \eps/2)$ returns “no” then there is no $P\in \mathcal{P}$ such that $\df(Q,P)\leq \radius$.
\end{lemma}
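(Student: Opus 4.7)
The plan is to mirror the proof of Lemma~\ref{lemma:querycorrectness2apprxmed}, since \texttt{query3} simply ``dualizes'' that data structure: the snapped signature of each input curve is precomputed, while the visiting-order enumeration and Fr\'echet/endpoint filtering that were previously run on the input side inside \texttt{generate\_keys2} are now executed once on the query $Q$. Both implications therefore reduce to tracking the chain $P \to P' \to P'' \to Q$, where $P'$ is the $\radius$-signature of $P$ computed by \texttt{preprocess3} and $P''$ is its snapping to $\GGG_{\eps\radius/2}$.

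For the forward direction, I would unpack the return value: if $P$ is returned, then some $P'' \in \CCC(Q)$ is stored in $\HHH$ and points to $P$, and by construction $P''$ is the snapping of the $\radius$-signature $P'$ of $P$ to $\GGG_{\eps\radius/2}$. Snapping to this grid moves every vertex by at most $\eps\radius/2$, so by Observations~\ref{observation:linesegment} and \ref{observation:concatenation}, together with Observation~\ref{observation:shortcut} to handle vertices that $\seqtocurve{\cdot}$ removes when the snapped sequence becomes degenerate, $\df(P',P'') \leq \eps\radius/2$. Membership of $P''$ in $\CCC(Q)$ yields $\df(P'',Q) \leq (2+\eps/2)\radius$ and $|P''(0)-Q(0)|, |P''(1)-Q(1)| \leq (1+\eps/2)\radius$; two triangle inequalities upgrade these to $\df(P',Q) \leq (2+\eps)\radius$ and to $(1+\eps)\radius$ endpoint bounds between $P'$ and $Q$. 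Since signatures preserve endpoints, $P(0)=P'(0)$ and $P(1)=P'(1)$, so applying Lemma~\ref{lemma:signatureproxy} with $\delta' = \radius$ and $\delta'' = (1+\eps)\radius$ delivers the desired $\df(P,Q) \leq (2+\eps)\radius$.

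For the backward direction, I would assume that some $P \in \PPP$ with $\df(P,Q) \leq \radius$ exists and show that its precomputed $P''$ lands in $\CCC(Q)$ at query time, forcing \texttt{query3} to return a curve rather than ``no''. The first task is to verify that $P$ is not discarded during preprocessing, i.e.\ $|P'| \leq k+2$: Lemma~\ref{lemma:signatures2} supplies a $\radius$-visiting order of $P'$ on $Q$, and the $\radius$-edge-length property forbids two consecutive interior vertices of $P'$ from sharing a $\radius$-range around any vertex of $Q$, which caps the number of interior vertices of $P'$ by $|Q|=k$. The four membership conditions for $P'' \in \CCC(Q)$ are then immediate: the complexity and grid conditions hold by construction, the $((1+\eps/2)\radius)$-visiting order of $P''$ on $Q$ follows from the one for $P'$ after absorbing the $\eps\radius/2$ snapping error, and the filter bounds follow from the chain $\df(P'',Q) \leq \df(P'',P')+\df(P',P)+\df(P,Q) \leq \eps\radius/2 + \radius + \radius$ together with endpoint-preservation of signatures. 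Hence \texttt{query3} would find $P''$ in $\HHH$ and return some input curve, contradicting the assumed ``no''.

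The only step with any real content is the complexity bound $|P'| \leq k+2$, which is exactly where the specific structure of signatures (Lemma~\ref{lemma:signatures2} combined with the edge-length property) is used non-trivially; everything else is triangle-inequality bookkeeping along $P \to P' \to P'' \to Q$ plus a single invocation of Lemma~\ref{lemma:signatureproxy}. A minor subtlety worth stating explicitly is the handling of vertices that $\seqtocurve{\cdot}$ suppresses after snapping, but Observation~\ref{observation:shortcut} dispatches this cleanly.
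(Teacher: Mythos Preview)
Your proposal is correct and follows essentially the same route as the paper's proof: the chain $P\to P'\to P''\to Q$ with a single application of Lemma~\ref{lemma:signatureproxy} for the forward direction, and Lemma~\ref{lemma:signatures2} together with the $\radius$-edge-length property to bound $|P'|\le k+2$ for the backward direction. If anything, your write-up is slightly more explicit than the paper's (you spell out the filter conditions $\df(P'',Q)\le(2+\eps/2)\radius$ and the endpoint bounds in the backward direction, which the paper leaves implicit); the remark about Observation~\ref{observation:shortcut} for vertices suppressed by $\seqtocurve{\cdot}$ is harmless but unnecessary, since removing degenerate vertices does not change the underlying curve.
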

\begin{proof}If \texttt{query3}$(Q,\radius, \eps/2)$ returns an input curve, then it must be that there is a curve $P''\in \CCC(Q)$ which is stored in $\HHH$, associated with a pointer to $P$. Since $P''$ is stored in $\HHH$, there is a curve $P\in \PPP$ with a $\radius$-signature $P'$ such that $\df(P',P'')\leq \eps\radius/2$. Moreover, since $P''\in \CCC(Q)$, we have that $\df(Q,P'')\leq (2+\eps/2)\radius$, $|Q(0)-P''(0)|\leq (1+\eps/2)\radius$, $|Q(1)-P''(1)|\leq (1+\eps/2)\radius$. By the triangle inequality we obtain, 
 $\df(Q,P')\leq (2+\eps)\radius$, $|Q(0)-P'(0)|\leq (1+\eps)\radius$, $|Q(1)-P'(1)|\leq (1+\eps)\radius$. By Lemma~\ref{lemma:signatureproxy}, 
 since $\df(Q,P')\leq (2+\eps)\radius$, $|Q(0)-P'(0)|\leq (1+\eps)\radius$, $|Q(1)-P'(1)|\leq (1+\eps)\radius$ and $P'$ is a $\radius$-signature of $P$, we conclude $\df(Q,P)\leq (2+\eps)\radius$.

 If \texttt{query3}$(Q, \radius, \eps/2)$ returns “no”, then it must be that there is no curve $P''\in \CCC(Q)$ which is stored in $\HHH$. Suppose for the sake of contradiction that there is an input curve $P\in \PPP$ such that $\df(Q,P)\leq \radius$. Let $P'$ be the $\radius$-signature of $P$, as computed during preprocessing.  By Lemma~\ref{lemma:signatures2}, there is a  $\radius$-visiting order of $P'$ on $Q$ and therefore $|P'|\leq k+2$. Let $P''$ be the curve produced by snapping the vertices of $P'$ to the grid $\GGG_{\eps\radius/2}$. By the triangle inequality there is a  $((1+\eps/2)\radius)$-visiting order of $P''$ on $Q$. Therefore, $P''$ must be included in $\CCC(Q)$, which leads to contradiction. 
\end{proof}

\begin{theorem} \label{thm:twopluseps_three}
Let $\eps\in(0,1]$. There is a data structure for the $(2+\eps)$-ANN problem, 
which stores $n$ one-dimensional curves of complexity $m$ and supports query curves of complexity $k$,  uses space in $\Oh(nm)$, needs $\Oh(nm)$ expected preprocessing time and answers a query in $\Oh(1/\eps)^{k+2}$ time. 
\end{theorem}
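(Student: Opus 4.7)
The plan is to reduce Theorem~\ref{thm:twopluseps_three} entirely to the correctness statement of Lemma~\ref{lemma:querycorrectness2apprx3} and a bookkeeping analysis of the preprocessing and query procedures. Correctness is immediate: Lemma~\ref{lemma:querycorrectness2apprx3} shows that \texttt{query3}$(Q,\radius,\eps/2)$ returns an input curve only if its Fr\'echet distance to $Q$ is at most $(2+\eps)\radius$, and returns ``no'' only if no input curve lies within Fr\'echet distance $\radius$ to $Q$. Thus the remaining work is to bound the resource usage of the two algorithms.

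For preprocessing, I would walk through \texttt{preprocess3}$(\PPP,\radius,\eps/2,k)$ curve by curve. For each $P \in \PPP$ we compute a $\radius$-signature $P'$ in time $\Oh(m)$ using the algorithm of Driemel, Krivosija and Sohler \cite{DKS16}. If $|P'|>k+2$ we discard $P$; otherwise we snap its at most $k+2$ vertices to $\GGG_{\eps\radius/2}$ in $\Oh(k)$ time and insert the snapped curve $P''$ into the dictionary $\HHH$ as a key associated with $P$. Using perfect hashing (cf.~Section~\ref{section:prelims}), building the dictionary and each insertion cost $\Oh(k)$ expected time. Since $k\le m$, the total preprocessing time is $\Oh(nm)$. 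For the space bound, we store the raw input (size $\Oh(nm)$) together with at most $n$ keys of complexity $\Oh(k)$, giving $\Oh(nm)$ overall.

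For the query bound, I would invoke Lemma~\ref{lemma:numberofcandidates} directly on the call \texttt{generate\_candidates}$(Q,\radius,1+\eps,\eps,k+2)$ inside \texttt{generate\_keys2}$(Q,\radius,1,2,\eps,k+2)$: here the role of $m$ in the lemma is played by $|Q|=k$ and the role of $k$ by $k+2$, so the construction of $\CCC'$ runs in time
\[
\binom{k+(k+2)-2}{(k+2)-2}\cdot \Oh\!\left(\frac{1}{\eps}\right)^{k+2} \;=\; \binom{2k}{k}\cdot \Oh\!\left(\frac{1}{\eps}\right)^{k+2},
\]
and $|\CCC'|$ is bounded by the same quantity. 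Since $\binom{2k}{k}\le 4^k$, this is absorbed into $\Oh(1/\eps)^{k+2}$ whenever $\eps\in(0,1]$. The filtering step checks, for each $P''\in \CCC'$, that $\df(P'',Q)\le (2+\eps/2)\radius$ and that the endpoints lie within $(1+\eps/2)\radius$ of $Q(0)$, $Q(1)$; each Fr\'echet test costs $\Oh(k\cdot (k+2))$ via Theorem~\ref{theorem:frechetdecision}, a polynomial factor that is also absorbed into the exponential bound. Finally, for every surviving key we perform one probe in $\HHH$, costing $\Oh(k)$ with perfect hashing, which again is absorbed. Summing up gives query time $\Oh(1/\eps)^{k+2}$.

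There is no real technical obstacle here, since the heavy work was done in Lemma~\ref{lemma:querycorrectness2apprx3} (which in turn leverages Lemmas~\ref{lemmastraightenings}, \ref{lemmasignatureproxy}, and~\ref{lemma:signatures2}) and in Lemma~\ref{lemma:numberofcandidates}. The one subtle point worth stating explicitly is that the combinatorial factor $\binom{2k}{k}=2^{\Theta(k)}$ arising from the unbalanced application of Lemma~\ref{lemma:numberofcandidates} (where $Q$ is short but we search over curves of complexity up to $k+2$) can be folded into $\Oh(1/\eps)^{k+2}$ because $(4/\eps)^{k+2}=\Oh(1/\eps)^{k+2}$ for $\eps\in(0,1]$; this is what allows the query time to remain of the clean form claimed in the theorem.
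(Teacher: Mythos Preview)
Your proposal is correct and follows essentially the same approach as the paper's proof: correctness via Lemma~\ref{lemma:querycorrectness2apprx3}, preprocessing bounded by the linear-time signature computation of~\cite{DKS16} plus perfect hashing, and the query bound via Lemma~\ref{lemma:numberofcandidates} applied with $m=k$ and the complexity parameter $k+2$. Your explicit remark that the binomial factor $\binom{2k}{k}\le 4^k$ is absorbed into $\Oh(1/\eps)^{k+2}$ for $\eps\in(0,1]$ spells out a step the paper leaves implicit.
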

\begin{proof}
Correctness follows from Lemma~\ref{lemma:querycorrectness2apprx3}. It remains to bound the space complexity, the  preprocessing time and the query time.

Using the algorithm of Driemel, Krivosija and Sohler \cite{DKS16}, we can compute a signature in linear time. Since we assume that the floor function can be computed in $\Oh(1)$, and that $\HHH$ is implemented using perfect hashing, \texttt{preprocess3}$(\PPP,1,\eps/2,k)$ has running time $\Oh(nm)$. Therefore, the space usage is also in $\Oh(nm)$. 

To bound the query time, we bound the running time of 
\texttt{generate\_keys2}$(Q,\radius, 1,2,\eps/2,k+2)$, because the last part of \texttt{query3} is an enumeration over all curves returned by \texttt{generate\_keys2} and probing $\HHH$ for each one of them.  To bound the running time of 
\texttt{generate\_keys2}$(Q,\radius, 1,2,\eps/2,k+2)$, it suffices to bound the running time of 
\texttt{generate\_candidates}$(Q,\radius, (1+\eps/2),\eps/2,k+2)$. By Lemma~\ref{lemma:numberofcandidates}, this running time is upper bounded by ${{2k}\choose{k-2} }\cdot \Oh\left(\frac{  1}{\eps}\right)^{k+2} =  \Oh\left(\frac{  1}{\eps}\right)^{k+2}$. Recall that we employ perfect hashing and we assume that the floor function can be computed in constant time. Hence each probe to $\HHH$ costs $\Oh(k)$ time, and  we can also check in $\Oh(k)$ if  $\HHH$ returns the correct answer. 
We conclude that \texttt{query2}$(Q,\radius, \eps/2)$ runs in time $ \Oh\left(\frac{  1}{\eps}\right)^{k+2}$. 
\end{proof}


\section{\boldmath $(3+\eps)$-Approximation}\label{section:datastructure3apprx}

In this section, we present a data structure for the $(3+\eps)$-ANN problem with preprocessing time and space complexity in $n\cdot \Oh(1/\eps)^k+\Oh(nm)$ and query time in $\Oh(k)$. 
\subsubsection*{Data structure} We are given as input a set of one-dimensional curves  $\mathcal{P}$, as sequences of vertices, a distance threshold $\radius>0$, the approximation error $\eps>0$ and the complexity of the supported queries $k$.  To build the data structure, we use the preprocessing algorithm of the data structure in Section~\ref{section:twopluseps_two}.
Let $\HHH$ be the dictionary, constructed by \texttt{preprocess2}$(\PPP, \radius ,2,3,\eps/2,k)$. 

\subsubsection*{Query algorithm}Let $Q$ be a query curve. We run the query algorithm of the data structure in Section~\ref{section:datastructure2apprxfastquery}. In particular, we run \texttt{query2}$(Q,2\radius,\eps/2)$ on $\HHH$. 
\begin{lemma}
\label{lemma:querycorrectness3apprx}
If \texttt{query2}$(Q,2\radius, \eps/2)$ returns an input curve $P\in \mathcal{P}$, then $\df(Q,P)\leq (3+\eps)\radius$. If \texttt{query2}$(Q, 2\radius, \eps/2)$ returns “no” then there is no $P\in \PPP$ such that $\df(Q,P)\leq \radius$.
\end{lemma}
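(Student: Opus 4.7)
The plan is to handle the two directions of the lemma separately, corresponding to the two possible outcomes of \texttt{query2}$(Q, 2\radius, \eps/2)$. Throughout, let $P'$ denote the $\radius$-signature of an input curve $P$ produced by \texttt{preprocess2}, let $Q'$ denote the $(2\radius)$-signature of $Q$ computed in \texttt{query2}, and let $Q''$ denote its snapping to $\GGG_{\eps\radius}$ (so that $\df(Q', Q'') \leq \eps\radius$). Also, since signatures preserve endpoints, $P(0) = P'(0)$, $P(1) = P'(1)$, $Q(0) = Q'(0)$, and $Q(1) = Q'(1)$.

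For the positive case, if \texttt{query2} returns $P \in \PPP$ then $Q''$ is stored in $\HHH$ with a pointer to $P$, so $Q'' \in \CCC(P)$. This yields $\df(P', Q'') \leq (3+\eps/2)\radius$ and endpoint bounds $|P'(i) - Q''(i)| \leq (1+\eps/2)\radius$ for $i \in \{0, 1\}$. I would then apply Lemma~\ref{lemma:signatureproxy} twice. A first application, using $P'$ as a $\radius$-signature of $P$, concludes $\df(P, Q'') \leq (3+\eps/2)\radius$. A second application uses $Q'$ as a $(2\radius)$-signature of $Q$: the required bound $\df(Q', P) \leq \df(Q', Q'') + \df(Q'', P)$ comes from the snap bound and the first application, and the endpoint conditions follow by combining endpoint-preservation of signatures, the snap bound, and the endpoint bound inherited from $Q'' \in \CCC(P)$. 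This yields $\df(P, Q) \leq (3+\eps)\radius$, absorbing constant multiples of $\eps$ into $\eps$ itself.

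For the negative case, assume for contradiction that some $P \in \PPP$ satisfies $\df(P, Q) \leq \radius$. The strategy is to show $Q'' \in \CCC(P)$, so that $Q''$ would be stored in $\HHH$ with a pointer to $P$, contradicting that \texttt{query2} returned ``no''. The key tool is Lemma~\ref{lemma:signatures3}: applied to $P$ and $Q$ with $\delta = \radius$, it yields both $\df(P', Q') \leq 3\radius$ and a $(2\radius)$-visiting order of $Q'$ on $P'$. I would first argue $|P'| \leq k+2$, so that $P$ is not discarded during preprocessing: Lemma~\ref{lemma:signatures2} provides a $\radius$-visiting order of $P'$ on $Q$, and the $\radius$-edge-length property of a signature forbids two consecutive interior vertices of $P'$ from lying in the same $\radius$-range, so $|P'| \leq |Q| + 2 = k+2$. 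I would then verify all three defining conditions of $Q'' \in \CCC(P)$: the visiting order condition from the $(2\radius)$-visiting order of $Q'$ on $P'$ combined with the snap; the filter $\df(P', Q'') \leq \df(P', Q') + \df(Q', Q'')$ from the $3\radius$ bound plus the snap; and the endpoint bounds from $|P(0) - Q(0)| \leq \radius$, signature endpoint-preservation, and the snap.

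The main obstacle is avoiding loose triangle inequalities. A direct triangle estimate through $P, P', Q', Q$ would give only $\df(P', Q') \leq 4\radius$, which would at best yield a $(4+\eps)$-approximation rather than $(3+\eps)$; Lemma~\ref{lemma:signatures3} is precisely the tool that sharpens this to $3\radius$. Symmetrically, in the positive case, chaining triangle inequalities through $Q, Q', Q'', P', P$ would give roughly a $(6+O(\eps))\radius$ bound, and it is only by applying Lemma~\ref{lemma:signatureproxy} twice, exploiting that both $P'$ and $Q'$ are signatures whose endpoints agree with those of $P$ and $Q$, that we recover the target factor $(3+\eps)$.
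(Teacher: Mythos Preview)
The proposal is correct and follows essentially the same approach as the paper: the negative case uses Lemma~\ref{lemma:signatures3} for the $3\radius$ bound and the $2\radius$-visiting order, and the positive case applies Lemma~\ref{lemma:signatureproxy} twice. The only cosmetic difference is that in the positive case you first peel off the signature $P'$ (obtaining $\df(P,Q'')$) and then $Q'$, whereas the paper first peels off $Q'$ (obtaining $\df(P',Q)$) and then $P'$; both orders yield the same bound, and your proof in fact fills in the $|P'|\le k+2$ check that the paper's proof leaves implicit.
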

\begin{proof}
Let $Q'$ be the $2\radius$-signature of $Q$ and let $Q''$ be the curve obtained by snapping vertices of $Q'$ to $\GGG_{\eps\radius/2}$, as computed in \texttt{query2}. 

If \texttt{query2}$(Q,2\radius, \eps/2)$ returns an input curve $P\in \mathcal{P}$, then it must be that $Q''\in \CCC(P)$, where $\CCC(P)$ is the result of \texttt{generate\_keys2}$(P',\radius, 2, 3, \eps/2, k)$ and $P'$ is a $\radius$-signature of $P$, as computed by \texttt{preprocess2}. 
By the construction of $\CCC(P)$, it must be that $\df(P',Q'')\leq (3+\eps/2)\radius$, $|P'(0)-Q''(0)|\leq (1+\eps/2)\radius$ and $|P'(1)-Q''(1)|\leq (1+\eps/2)\radius$. Hence, by the triangle inequality $\df(Q',P')\leq (3+\eps)\radius$, $|P'(0)-Q'(0)|\leq (1+\eps)\radius$ and $|P'(1)-Q'(1)|\leq (1+\eps)\radius $. 
We now apply Lemma~\ref{lemma:signatureproxy} twice. We first apply  it on $P'$, $Q'$, $Q$.  
Since $\df(P',Q')\leq (3+\eps)\radius$, $|P'(0)-Q'(0)|\leq (1+\eps)\radius$, $|P'(1)-Q'(1)|\leq (1+\eps)\radius $ and $Q'$ is a $2\radius$-signature of $Q$, 
we obtain $\df(P',Q)\leq (3+\eps)\radius$. Then, we apply it on $P'$, $P$, $Q$. 
Since $\df(P',Q)\leq (3+\eps)\radius$, $|P'(0)-Q(0)|=|P'(0)-Q'(0)|\leq (1+\eps)\radius \leq (2+\eps)\radius$, $|P'(1)-Q(1)|=|P'(1)-Q'(1)|\leq (1+\eps)\radius \leq (2+\eps)\radius$, and $P'$ is a $\radius $-signature of $P$, 
we obtain $\df(P,Q)\leq (3+\eps)\radius$.

If \texttt{query2}$(Q,2\radius, \eps/2)$ returns “no” then  $Q''$ is  not stored in $\HHH$ as a key. For the sake of contradiction, we assume that there exists an input curve $P\in \PPP$ such that $\df(P,Q)\leq \radius$. Then by definition, $|P'(0)-Q'(0)|\leq \radius$ and $|P'(1)-Q'(1)|\leq \radius$. In addition, by Lemma~\ref{lemma:signatures3}, $\df(P',Q')\leq 3\radius$ and there is a $2\radius$-visiting order of $Q'$ on $P'$,  By the triangle inequality we obtain $\df(P',Q'')\leq (3+\eps/2)\radius$, $|P'(0)-Q''(0)|\leq (1+\eps/2)\radius$,  $|P'(1)-Q''(1)|\leq (1+\eps/2)\radius$, and that there is a $((2+\eps/2)\radius)$-visiting order of $Q''$ on $P'$. Hence, by the construction of $\CCC(P)$, it must be that $Q''\in \CCC(P)$ which implies that $Q''$ is stored as a key in $\HHH$. This is a contradiction.
\end{proof}

\begin{theorem}\label{thm:threepluseps}
Let $\eps\in(0,1]$. There is a data structure for the $(3+\eps)$-ANN problem, 
which stores $n$ one-dimensional curves of complexity $m$ and supports query curves of complexity $k$,  uses space in $n\cdot \Oh(1/\eps)^k +\Oh(nm)$, needs $n\cdot \Oh(1/\eps)^k+\Oh(nm)$ expected preprocessing time and answers a query in $\Oh(k)$ time. 
where $k$ is the complexity of the query curve. 
\end{theorem}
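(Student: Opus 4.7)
The plan is to observe that this theorem reuses components already developed and analyzed: the preprocessing algorithm is \texttt{preprocess2} from Section~\ref{section:twopluseps_two} (invoked with parameters $r=2, t=3, \eps/2$ instead of the previous $r=22, t=2, \eps/4$), and the query algorithm is \texttt{query2} from Section~\ref{section:datastructure2apprxfastquery} (invoked at scale $2\radius$). Hence the proof consists of reassembling these pieces and citing the analyses already done, together with Lemma~\ref{lemma:querycorrectness3apprx}.

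First I would dispatch correctness by citing Lemma~\ref{lemma:querycorrectness3apprx}. Note that this is the one place where something genuinely new is needed for the $(3+\eps)$ bound: the improved triangle-like estimate in Lemma~\ref{lemma:signatures3}, which bounds $\df(P',Q') \le 3\delta$ rather than the $4\delta$ that ordinary triangle inequality would give, is what prevents the approximation factor from blowing up to $(5+\eps)$.

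Next, for the query time: \texttt{query2}$(Q, 2\radius, \eps/2)$ computes a $2\radius$-signature of $Q$ in $\Oh(k)$ time using the algorithm of Driemel, Krivosija and Sohler~\cite{DKS16}, snaps its vertices to $\GGG_{\eps\radius}$ in $\Oh(k)$ time (using the floor-in-$\Oh(1)$ assumption on the Real-RAM, or on the Word-RAM directly), and performs one probe into $\HHH$, which under perfect hashing costs $\Oh(k)$ time including the key-equality verification. So the total query time is $\Oh(k)$, matching the statement.

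Finally, for the preprocessing time and space I would follow the analysis in the proof of Theorem~\ref{thm:twopluseps_two} verbatim, only with the updated parameters. Computing a $\radius$-signature $P'$ for every input curve takes total time $\Oh(nm)$ via~\cite{DKS16}. Curves with $|P'| > k+2$ are discarded. For each surviving $P'$, \texttt{generate\_candidates} produces a set $\CCC'$ of size $\Oh(1/\eps)^k$ by Lemma~\ref{lemma:numberofcandidates} (applied with $m' = |P'| \le k+2$, so that ${m'+k-2 \choose k-2} = \Oh(1)^k$), and filtering by Fr\'echet distance, start-point, and end-point conditions via Theorem~\ref{theorem:frechetdecision} costs $\Oh(k^2)$ per candidate, which is absorbed into $\Oh(1/\eps)^k$. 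Inserting each resulting key into $\HHH$ costs expected $\Oh(k)$ under perfect hashing. Summing over the $n$ input curves yields total expected preprocessing time $\Oh(nm) + n\cdot \Oh(1/\eps)^k$, and the space is dominated by the hashtable and the stored input curves, giving $\Oh(nm) + n\cdot \Oh(1/\eps)^k$ as claimed. There is essentially no obstacle here; the only subtle part is the calculation in Lemma~\ref{lemma:signatures3}, which has already been handled.
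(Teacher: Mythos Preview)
Your proposal is correct and takes essentially the same approach as the paper: correctness via Lemma~\ref{lemma:querycorrectness3apprx}, query time via the analysis of \texttt{query2} (as in Theorem~\ref{thm:twopluseps_one}), and preprocessing/space via the analysis of \texttt{preprocess2} (as in Theorem~\ref{thm:twopluseps_two}). The paper's proof is simply more terse, citing those theorems directly rather than re-expanding their running-time arguments.
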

\begin{proof}
Correctness follows from Lemma~\ref{lemma:querycorrectness3apprx}. The bounds on the preprocessing time and space complexity follow from Theorem~\ref{thm:twopluseps_two}. The bound on the query time follows from Theorem~\ref{thm:twopluseps_one}. \end{proof}

\section{Proofs of main lemmas}
\label{section:missingproofs}

In this section we give full proofs of the lemmas stated in Section~\ref{section:lemmas}.
We start by proving a fundamental observation and lemma on the Fr\'echet distance of approximately monotone one-dimensional curves. 

\begin{observation}\label{lemma:montonecurvesimple}
 Let $Q$ be a directed line segment and let  $ P:~[0, 1] \mapsto \RR$ be a curve. 
It holds that  $\df(P,Q) \leq \delta$ if and only if the following conditions are satisfied:
\begin{compactenum}[(i)] 
    \item $P$ is $2\delta$-monotone with respect to $Q$, and
    \item $|P(0)-Q(0)|\leq \delta$, $|P(1)-Q(1)|\leq \delta$, and
    \item  $P \subseteq B(Q,\delta)$.
\end{compactenum} 
\end{observation}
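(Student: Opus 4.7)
My plan is to prove both directions separately. Without loss of generality I assume $Q = \overline{ab}$ with $a \le b$ (so (i) asks for $2\delta$-monotone increasing); the reverse orientation is symmetric, and the degenerate case $a=b$ reduces to (iii) since then $\df(P,Q) = \max_s |P(s)-a|$. From now on I assume $a < b$ and parametrize $Q$ linearly as $Q(t) = a + t(b-a)$.

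For the forward direction, I let $(f,g)$ be a realizing reparametrization achieving $\df(P,Q) \le \delta$. Surjectivity at the endpoints immediately yields (ii). For (iii), any $s \in [0,1]$ equals $f(\tau)$ for some $\tau$, so $|P(s) - Q(g(\tau))| \le \delta$ places $P(s)$ in $B(Q,\delta)$. For (i), I fix $s_1 < s_2$ and pick $\tau_1 < \tau_2$ with $f(\tau_i) = s_i$ (possible since $f$ is monotone surjective); monotonicity of $g$ together with $Q$ being non-decreasing gives $Q(g(\tau_1)) \le Q(g(\tau_2))$, hence $P(s_2) - P(s_1) \ge (Q(g(\tau_2))-\delta) - (Q(g(\tau_1))+\delta) \ge -2\delta$.

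For the backward direction, the strategy is to construct a monotone reparametrization via the free-space diagram. For each $s$, let $I(s) := \{t \in [0,1] : |P(s)-Q(t)| \le \delta\} = [\ell(s),u(s)]$, which is nonempty by (iii); here $\ell(s) = \max(0, (P(s)-a-\delta)/(b-a))$ and $u(s) = \min(1, (P(s)-a+\delta)/(b-a))$. I will produce a monotone non-decreasing selection $g \colon [0,1] \to [0,1]$ with $g(s) \in I(s)$, $g(0) = 0$, and $g(1) = 1$; paired with $f = \mathrm{id}$ (after smoothing out jumps, see below) this yields a valid matching. The natural candidate is $g(s) := \sup_{s' \le s} \ell(s')$ for $s < 1$, with $g(1) := 1$. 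Monotonicity is trivial, and (ii) forces $\ell(0) = 0$ and $u(1) = 1$ so the endpoints come out right.

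The crux, and the only step where (i) enters essentially, is showing $g(s) \le u(s)$, i.e.\ $\ell(s') \le u(s)$ for all $s' \le s$. If $P(s') \le a+\delta$ then $\ell(s') = 0$; if $P(s) \ge b-\delta$ then $u(s) = 1$; in either boundary case the inequality is automatic. Otherwise it reduces to $P(s') - \delta \le P(s) + \delta$, i.e.\ $P(s') \le P(s) + 2\delta$, which is exactly the $2\delta$-monotone-increasing assumption (i). The remaining subtlety is that the supremum construction may make $g$ jump-discontinuous, while $\mathcal{T}$ only contains monotone surjective functions on $[0,1]$ (which are automatically continuous). I resolve this by the standard ``pause-and-sweep'' reparametrization: at each jump of $g$ from $t^-$ to $t^+$ at parameter $s$, I insert a subinterval during which $f$ is held constant at $s$ while $g$ interpolates linearly across $[t^-,t^+]$. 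Since $P(s)$ is within $\delta$ of both $Q(t^-)$ and $Q(t^+)$, convexity of the segment $[Q(t^-),Q(t^+)] \subseteq Q$ keeps $|P(s)-Q(t)| \le \delta$ throughout the sweep. After rescaling time, this yields a valid $(f,g) \in \mathcal{T}$ witnessing $\df(P,Q) \le \delta$.
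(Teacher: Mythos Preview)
Your proof is correct. The forward direction coincides with the paper's argument. For the converse, the paper defers to the more general Lemma~\ref{lemma:montonecurves} and additionally sketches an explicit greedy traversal that advances on $Q$ according to the look-ahead minimum $Q(t) = \min_{s'\ge s} P(s') + \delta$; your construction instead takes the running maximum of the lower free-space boundary, $g(s)=\sup_{s'\le s}\ell(s')$, and verifies $g(s)\le u(s)$ directly from~(i). These are dual greedy selections and equally valid. A small remark: since $\ell$ is continuous (as $P$ is), your $g$ is already continuous on $[0,1)$, so the only place the pause-and-sweep patch is actually needed is at $s=1$, where you must traverse the remaining suffix of $Q$ while holding $P(1)$; your convexity argument covers this, since $g(1^-)\in[\ell(1),u(1)]=[\ell(1),1]=I(1)$ by~(ii).
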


\begin{proof}
We assume that $Q(0) \leq Q(1)$ as the other case is symmetric.
Now, assume first that $\df(P,Q) \leq \delta$, then (ii) holds because start and end points are matched in any traversal and (iii) holds as the Hausdorff distance is a lower bound for the Fréchet distance. Finally, (i) holds as otherwise there exist two indices $s,t \in [0,1]$ with $s < t$ and $P(t) < P(s) - 2\delta$. As $Q$ is increasing, no traversal can match $P(s)$ and $P(t)$ in distance at most $\delta$.

Second, assume that (i), (ii), and (iii) hold. 
Then $\df(P,Q) \leq \delta$ is implied by Lemma \ref{lemma:montonecurves}, below, but to provide some intuition we give a simpler proof here.
The following traversal with position $s$ on $P$ and position $t$ on $Q$ stays within distance $\delta$. We start in $P(0), Q(0)$, then we continue on $P$ until $P(s) = Q(0)+\delta$. Then we always choose $t$ such that $Q(t) = \min_{s' \geq s} P(s') + \delta$ while traversing $P$, i.e., continuously increasing $s$. When we reach the end of $Q$, we can traverse $P$ until the end while staying in $Q(1)$. It is easy to check that properties (i), (ii), and (iii) ensure distance $\delta$ during the described traversal.
\end{proof}


The following lemma statement is similar to the above observation with the important difference that the line segment $Q$ is replaced by a $2\delta$-monotone curve. The proof works by constructing a traversal greedily and showing correctness of the greedy algorithm. 

\begin{lemma}\label{lemma:montonecurves}
 Let $P$ and $Q$ be $2\delta$-monotone curves with 
\begin{compactenum}[(i)] 
    \item\label{mc_c4} $P$ is $2\delta$-monotone with respect to $\overline{Q(0)Q(1)}$, and
    \item\label{mc_c1} $|P(0)-Q(0)|\leq \delta$, $|P(1)-Q(1)|\leq \delta$, and
    \item\label{mc_c3} $P \subseteq B(Q,\delta)$, and 
    \item\label{mc_c2} $Q \subseteq \overline{Q(0)Q(1)}$.
\end{compactenum} 
It holds that $\df(P,Q) \leq \delta$.
\end{lemma}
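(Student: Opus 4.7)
The plan is to construct an explicit monotone surjective traversal $(f,g)\colon[0,1]\to[0,1]^2$ with $|P(f(\tau)) - Q(g(\tau))| \le \delta$ for all $\tau$, generalizing the greedy construction in the proof of Observation~\ref{lemma:montonecurvesimple}. Without loss of generality I assume $Q(0) \le Q(1)$ (the other case is symmetric under reversing both parameterizations, which preserves all four hypotheses and the Fr\'echet distance), so that by (i) both $P$ and $Q$ are $2\delta$-monotone non-decreasing and $L = [Q(0),Q(1)]$.

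The key observation is that although $Q$ is not itself a line segment, condition~(iv) confines $Q$ to $L$, and its $2\delta$-monotonicity forces $Q(t) \in [\widetilde Q(t) - 2\delta,\,\widetilde Q(t)]$, where $\widetilde Q(t) := \max_{t' \le t} Q(t')$ is the running maximum. Since $\widetilde Q$ is continuous, non-decreasing, and surjects onto $L$ (as $\widetilde Q(0) = Q(0)$ and $\widetilde Q(1) = Q(1)$), the first-reach function $\tau(h) := \inf\{t : Q(t) \ge h\}$ is well defined for every $h \in L$, with $Q(\tau(h)) = h$ by continuity. Mimicking the Observation's greedy rule, I would parameterize by the position $s$ on $P$ and set
\[
g(s) \;:=\; \tau\!\left(\mathrm{clip}_L\!\Bigl(\min_{s' \ge s} P(s') + \delta\Bigr)\right).
\]
The pointwise distance bound then follows exactly as in the Observation: by $2\delta$-monotonicity of $P$ we have $\min_{s' \ge s} P(s') \ge P(s) - 2\delta$, so the target height $h(s) := \min_{s' \ge s} P(s') + \delta$ lies in $[P(s)-\delta,\,P(s)+\delta]$, and on the interior where no clipping occurs $Q(g(s)) = h(s)$, giving $|P(s) - Q(g(s))| \le \delta$. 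The initial and final phases where $h(s)$ is clipped to $Q(0)$ or $Q(1)$ are handled using condition~(ii), exactly as in the Observation's proof.

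The main obstacle will be promoting the pointwise assignment $s \mapsto (s,g(s))$ to a genuine continuous, monotone, surjective pair of reparametrizations. Because $g(s)$ is only a first-reach time, it can jump forward whenever $h(s)$ crosses a height over which $Q$ had dipped and then recovered, and in the traversal we must continuously sweep $Q$ through each such skipped parameter subinterval. A naive sweep with $s$ held fixed is insufficient: inside a skipped subinterval $Q$ can dip by up to $2\delta$ below $h(s)$ while $P(s)$ itself may sit up to $\delta$ above $h(s)$, opening a gap of up to $3\delta$. I plan to handle this by also co-advancing $s$ during the sweep, exploiting condition~(iii): each dipping value of $Q$ in the skipped window admits some matching value of $P$ within $\delta$, and the $2\delta$-monotonicity of \emph{both} curves combined with $Q \subseteq L$ ensures that these local matchings can be chained together into a globally monotone schedule for $(f,g)$. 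Verifying this synchronization carefully---and checking that continuity and surjectivity hold at the seams between the greedy and the sweeping phases---will complete the proof.
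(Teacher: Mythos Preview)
Your high-level strategy---an explicit greedy traversal with a separate handling of the phases where $Q$ ``dips''---is exactly the paper's approach. The paper also reduces to the increasing case, sets up an initial phase, maintains an invariant linking $P(s)$ and $Q(t)$, and interleaves two kinds of repair moves (one for dips of $P$, one for dips of $Q$).

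However, your stated plan for the sweep has a genuine gap. You write that during a skipped window you will co-advance $s$ ``exploiting condition~(iii): each dipping value of $Q$ in the skipped window admits some matching value of $P$ within $\delta$''. Condition~(iii) is $P \subseteq B(Q,\delta)$, i.e.\ every value of $P$ lies near some value of $Q$; it says nothing about values of $Q$ lying near $P$. So the justification as written does not go through, and the remainder of the sketch (``these local matchings can be chained together'') rests on it.

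The fix is not to use (iii) here at all, but to change the invariant so that the sweep becomes trivial. The paper keeps $P(s)=Q(t)+\delta$ with $t$ always at the \emph{last} visit of the value $Q(t)$; then a sweep of $Q$ traverses a segment on which $Q\in[Q(t),Q(t)+2\delta]=[P(s)-\delta,P(s)+\delta]$, so one can keep $s$ fixed. Closer to your parametrization, an equivalent repair is: before sweeping $Q$ through $[t_1,t_2]$, first advance $s$ (keeping $t=t_1$) to the \emph{first} forward minimizer of $P$, where $P(s_1)=h_0-\delta$; on $[s_0,s_1]$ one has $P\in[h_0-\delta,h_0+\delta]$ by $2\delta$-monotonicity, so this stays within $\delta$ of $Q(t_1)=h_0$. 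Now sweep $Q$ with $s=s_1$ fixed: since $Q\in[h_0-2\delta,h_0]$ on the window and $P(s_1)=h_0-\delta$, the distance is at most $\delta$. After the sweep you must restart the first-reach map from $t_2$ rather than reuse the global $\tau$, since $g(s_1)=\tau(h_0)=t_1<t_2$; this bookkeeping is what the paper's second invariant encodes. Condition~(iii) is needed only at the terminal phase, to ensure $P$ cannot exceed $Q(1)+\delta$.
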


\begin{proof}
We assume that $Q(0) \leq Q(1)$ as the other case is symmetric.
If $Q$ is not $2\delta$-monotone increasing, then it also cannot be $2\delta$-monotone decreasing:
if there are two points $s,t \in [0,1]$ with $s < t$ such that $Q(t) < Q(s) - 2\delta$, then, as $Q(t) \geq Q(0)$ by condition (\ref{mc_c2}), we have that $Q(s) > Q(t) + 2\delta \geq Q(0) + 2\delta$ and thus $Q$ is not $2\delta$-monotone decreasing.
However, as $Q$ is $2\delta$-monotone, it has to be $2\delta$-monotone \emph{increasing}.
Due to condition~(\ref{mc_c4}), $P$ is also $2\delta$-monotone \emph{increasing}.
We give a traversal of $P, Q$ with distance at most $\delta$ --- denoting the position during the traversal with $(s,t) \in [0,1]^2$ --- that tries to maintain two invariants:
\begin{enumerate}[(1)]
	\item\label{invariant1} $P$ and $Q$ are in a position $(s,t) \in [0,1]^2$ such that $P(s) = Q(t) + \delta$.
	\item\label{invariant2} The suffix of $Q$ is strictly greater than the current value $Q(t)$, i.e., $\forall t' > t: Q(t') > Q(t)$.
\end{enumerate}
In general, both invariants may be violated at the very beginning of the traversal, that is, for $s=t=0$. 
Let us first describe how we traverse from the beginning of $P, Q$ to a position $(s,t) \in [0,1]^2$ such that these invariants are fulfilled. 
We first traverse $P$ until it first reaches $Q(0) + \delta$, while in $Q$ we stay in $Q(0)$. Note that by condition~(\ref{mc_c1}), we cannot have $P(0) > Q(0) + \delta$. Furthermore, this traversal is feasible as the traversed prefix of $P$ is in the range $[Q(0) - \delta, Q(0) + \delta]$, by condition~(\ref{mc_c3}), and thus within distance $\delta$ to $Q(0)$. If we reach $P(1)$ before reaching $Q(0) + \delta$, then we know that $Q \subseteq [P(1) - \delta, P(1) + \delta]$ and we can thus traverse complete $Q$ and $\df(P,Q) \leq \delta$. If we did not reach $P(1)$, we now traverse $Q$ until its last point with value $Q(0)$, which is possible as the traversed prefix of $Q$ lies in $[Q(0), Q(0) + 2\delta]$, due to condition~(\ref{mc_c2}) and as $Q$ is $2\delta$-monotone increasing, and the position on $P$ is currently $Q(0) + \delta$.

From now on, we traverse $P$ and $Q$ with the same speed in image space, unless one of the two invariants would be violated by continuing the traversal.
If both invariants would be violated at the same time, we break ties by restoring Invariant~(\ref{invariant1}) before Invariant~(\ref{invariant2}).
Now, let $s$ be the position on $P$ and $t$ the position on $Q$ when an invariant would be violated. When Invariant~(\ref{invariant1}) would be violated, we continue traversing $P$ while staying in $Q(t)$ on $Q$ until the next time we reach a position $s'$ on $P$ with value $P(s') = P(s)$. Note that we might not reach such a position $s'$ because we reached the end of $P$. However, if we did not reach the end of $P$, the invariant is restored. This traversal keeps the two positions at distance $\delta$ as $P(s) = Q(t) + \delta$ and as $P$ is 2\deltamonotone increasing.
In case Invariant~(\ref{invariant2}) would be violated, we continue traversing $Q$ until we reach the largest position $t' > t$ such that $Q(t') = Q(t)$. Note that afterwards, both invariants hold (as we restore Invariant~(\ref{invariant1}) before Invariant~(\ref{invariant2})), and, in particular, we cannot reach the end of $Q$ due to the existence of $Q(t')$ which we reach at the end of restoring Invariant~(\ref{invariant2}). This traversal also keeps the two positions at distance $\delta$ as initially $Q(t) = P(s) - \delta$ and $Q$ is 2\deltamonotone increasing and there is no position $t''$ on $Q$ with $Q(t'') < Q(t)$, i.e., all the points before reaching position $t'$ on $Q$ have to be in the range $[Q(t), Q(t) + 2\delta]$.

In all of the above cases we are guaranteed to make progress in our traversal. Furthermore, we will reach the end of $P$ before or at the same time as we reach the end of $Q$ because, first, while restoring invariants we can only reach the end of $P$ but not of $Q$ as argued above and, second, if we reach the end of $Q$ while both invariants would continue to hold, we also have to reach the end of $P$ at the same time as otherwise we would violate condition~(\ref{mc_c3}) of the lemma.
When we reach the end of $P$, we know that $P(1) \in [Q(1)-\delta, Q(1)+\delta]$ due to condition~(\ref{mc_c1}), and the remaining $Q$ is in $[P(1)-\delta, Q(1)]$. Thus, the remaining $Q$ is in $[P(1)-\delta, P(1)+\delta]$ and consequently $Q$ can be traversed until the end.

It follows from the traversal constructed thereby that $\df(P,Q) \leq \delta$.
\end{proof}

\subsection{Proofs of lemmas for \straightenings}\label{sec:lemmatasim}

Next, we want to prove Lemma~\ref{lemmastraightenings} from Section~\ref{section:lemmas}.
We first prove a simpler statement, which can be thought of as a special case where the \straightening consists of only one edge. 

\begin{lemma}
\label{lemma:strongertriangleineq}
 Let $X=\overline{ab}\subset \RR$ be a line segment and let $Q:~[0,1]\mapsto \RR$ be a  curve such that:
$Q(0)=X(0)$, $Q(1)=X(1)$, for all $t \in [0,1]: Q(t) \in \overline{ab}$ and $\df(Q,X) \leq \delta$. 
For any curve $P:~[0,1]\mapsto \RR$ with $\df(P,X) \leq \delta$, it holds that $\df(P,Q) \leq \delta$.
\end{lemma}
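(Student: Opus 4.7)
The plan is to reduce the statement to Lemma~\ref{lemma:montonecurves} applied to the pair $(P,Q)$, so the bulk of the work is to verify its four hypotheses.

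First, I would invoke Observation~\ref{lemma:montonecurvesimple} twice: once on $(P,X)$ using $\df(P,X)\le\delta$, and once on $(Q,X)$ using the hypothesis $\df(Q,X)\le\delta$. Note that $\overline{Q(0)Q(1)}=\overline{ab}=X$, since $Q(0)=X(0)=a$ and $Q(1)=X(1)=b$. The first application immediately yields that $P$ is $2\delta$-monotone with respect to $\overline{Q(0)Q(1)}$, that $|P(0)-Q(0)|\le\delta$ and $|P(1)-Q(1)|\le\delta$, and that $P\subseteq B(X,\delta)$; this gives conditions (i) and (ii) of Lemma~\ref{lemma:montonecurves}. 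The second application supplies the background assumption that $Q$ is $2\delta$-monotone. Condition (iv), $Q\subseteq\overline{Q(0)Q(1)}$, is an explicit hypothesis of the present lemma.

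The only nontrivial step is condition (iii): $P\subseteq B(Q,\delta)$. From the first application above I already have $P\subseteq B(X,\delta)$, so it suffices to show $B(X,\delta)\subseteq B(Q,\delta)$, and for that it is enough to show that the image of $X$ is contained in the image of $Q$. Since $Q$ is continuous with $Q(0)=a$ and $Q(1)=b$, the intermediate value theorem guarantees that every real number between $a$ and $b$ is attained by $Q$, so the image of $Q$ contains $\overline{ab}$; combined with the hypothesis $Q(t)\in\overline{ab}$ for all $t$, the images of $X$ and $Q$ in fact coincide. Hence $B(X,\delta)=B(Q,\delta)$ and condition (iii) follows.

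With all four hypotheses established, Lemma~\ref{lemma:montonecurves} yields $\df(P,Q)\le\delta$. I expect condition (iii) to be the only mildly delicate step: it is precisely where one must exploit that $Q$ not only lives inside $\overline{ab}$ but, by continuity together with the endpoint conditions, also \emph{fills} it. Everything else follows mechanically from Observation~\ref{lemma:montonecurvesimple}.
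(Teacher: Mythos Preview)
Your proposal is correct and follows essentially the same approach as the paper: both reduce to Lemma~\ref{lemma:montonecurves} and verify its hypotheses via two applications of Observation~\ref{lemma:montonecurvesimple}. Your treatment of condition~(iii) is in fact more careful than the paper's, which simply asserts it follows from Observation~\ref{lemma:montonecurvesimple} applied to $P$ and $X$ without spelling out the intermediate-value-theorem step that identifies $B(X,\delta)$ with $B(Q,\delta)$.
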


\begin{proof} 

To show the lemma statement, we want to apply Lemma~\ref{lemma:montonecurves} to $P$ and $Q$. For this, we need to show that the conditions on $Q$ and $P$ from the lemma statement are met. By Observation~\ref{lemma:montonecurvesimple} applied to $Q$ and the line segment $X$, it follows that $Q$ must be $2\delta$-monotone with respect to $X$, and by our assumptions, $Q$ is range-preserving (condition (\ref{mc_c2})). By Observation~\ref{lemma:montonecurvesimple} applied to $P$ and $X$, it also follows that $P$ is $2\delta$-monotone, and conditions (\ref{mc_c1}), (\ref{mc_c3}) and (\ref{mc_c4}) are satisfied.  Therefore, Lemma~\ref{lemma:montonecurves} can be applied to $P$ and $Q$ and the claim is implied.
\end{proof}


\lemmastraightenings*

\begin{proof}
Let $q_1,\ldots,q_{\ell}$ be the parameters corresponding to the vertices of $Q'$ in $Q$, i.e., the vertices of $Q'$ are $Q(q_1),\ldots,Q(q_{\ell})$. 
Let $\phi:[0,1] \rightarrow [0,1]^2$ be a $\delta$-traversal between $P$ and $Q'$. Let $0=t_1 \leq \dots \leq t_{\ell}=1$ be a partition of the parameter space of $P$ such that for any $1 \leq i \leq \ell-1$, the edge $\overline{Q(q_i)Q(q_{i+1})}$ is mapped to $P[t_i,t_{i+1}]$ under $\phi$. As such, we have  
\[ \df (P[t_i,t_{i+1}],\overline{Q(q_i)Q(q_{i+1})}) \leq \delta  \]
By the locality property of $\delta$-simplifications, we also have that
\[ \df (Q[q_i,q_{i+1}],\overline{Q(q_i)Q(q_{i+1})}) \leq \delta \]
Now, Lemma \ref{lemma:strongertriangleineq} implies that
\[
\df(P[t_i,t_{i+1}], Q[q_i,q_{i+1}]) \leq \delta.
\]
Finally, we apply Observation \ref{observation:concatenation} on  $P=\bigcirc_{i=1}^{\ell}P[t_i,t_{i+1}]$ and $Q=\bigcirc_{i=1}^{\ell}Q[q_i,q_{i+1}]$, and we obtain  \[\df(P,Q)\leq \max_{i \in [\ell]}\df\left(P[t_i,t_{i+1}], Q[q_i,q_{i+1}] \right) \leq  \delta.\]
\end{proof}

\subsection{Proofs of lemmas for signatures}\label{sec:lemmatasig}
 
Next, we want to prove Lemma~\ref{lemmasignatureproxy} from Section~\ref{section:lemmas}. 
We first prove an auxiliary statement for signature edges in Lemma~\ref{lemma:signatureedge1}. In particular, we need to take care of the first and last edge of the signature. For the other edges we can use Lemma~\ref{lemma:strongertriangleineq}. Technically, we will also need the symmetric statement of this lemma for $a > b$; this follows by mirroring at the origin.
The proof of this lemma turns out be technically involved. For the proof of Lemma~\ref{lemmasignatureproxy} we can then use the same approach as for Lemma~\ref{lemmastraightenings} above.

\begin{lemma}\label{lemma:signatureedge1}
 Let $\delta=\delta'+\delta''$ for $\delta,\delta',\delta''\geq 0$. Let $X=\overline{ab}\subset \RR$ be a line segment with $a \leq b$ and let $Q:~[0,1]\mapsto \RR$ be a curve such that:
$Q(0)=X(0)$, $Q(1)=X(1)$ and $\df(Q,X) \leq \delta'$. Let $P:~[0,1]\mapsto \RR$ be a curve with $\df(P,X) \leq \delta$. 

If either
\begin{compactenum}[(i)]
\item $|Q(0)-P(0)| \leq \delta''$ and $|Q(1)-P(1)| \leq \delta''$, or
\item $|Q(0)-P(0)| \leq \delta''$ and $\max_{t \in [0,1]}(Q(t)) \leq Q(1)$, or
\item $\min_{t \in [0,1]}(Q(t)) \geq Q(0)$ and  $|Q(1)-P(1)| \leq \delta''$,
\end{compactenum}
then it holds that $\df(P,Q) \leq \delta$.
\end{lemma}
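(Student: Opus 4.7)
The plan is to adapt the explicit traversal construction in the proof of Lemma~\ref{lemma:montonecurves}. The primary obstacle is that Lemma~\ref{lemma:montonecurves}'s condition~(iv), $Q\subseteq\overline{Q(0)Q(1)}$, is not available here: $\df(Q,X)\le\delta'$ only forces $Q\subseteq[a-\delta',b+\delta']$, so $Q$ may overshoot $[a,b]$ by up to $\delta'$ at either end. The slack $\delta''=\delta-\delta'$ in conditions~(i)--(iii) is precisely what will compensate for this overshoot.

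I will first invoke Observation~\ref{lemma:montonecurvesimple} on the hypotheses $\df(Q,X)\le\delta'$ and $\df(P,X)\le\delta$, concluding that $Q$ is $2\delta'$-monotone increasing with $Q\subseteq[a-\delta',b+\delta']$, and $P$ is $2\delta$-monotone increasing with $P\subseteq[a-\delta,b+\delta]$ and $|P(0)-a|,|P(1)-b|\le\delta$.

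For case~(i), I plan to split the construction of the $\delta$-traversal into three stages driven by $Q$'s parameter. Let $t_a:=\sup\{t:Q(t)<a\}$ and $t_b:=\inf\{t:Q(t)>b\}$ (taking $t_a=0$ or $t_b=1$ if the corresponding set is empty). On the initial stage $[0,t_a]$, the $2\delta'$-monotonicity of $Q$ together with $Q(0)=Q(t_a)=a$ confines $Q(t)$ to $[a-\delta',a+2\delta')$; starting from $|P(0)-Q(0)|\le\delta''$, I will advance $P$ and $Q$ in a coordinated manner analogous to the invariant-based algorithm of Lemma~\ref{lemma:montonecurves}, using the $2\delta$-monotonicity of $P$ and the $\delta''$ slack to keep $|P(s)-Q(t)|\le\delta$ throughout this phase and to end with $|P(s_a)-a|\le\delta$ at some traversal position $(s_a,t_a)$. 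On the main stage $[t_a,t_b]$, $Q$ sits inside $[a,b]$, so Lemma~\ref{lemma:montonecurves} applies directly to the subcurves $P[s_a,s_b]$ and $Q[t_a,t_b]$ (with $s_b$ determined by the construction), producing the middle piece of the traversal. The final stage $[t_b,1]$ is symmetric to the initial one.

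Cases~(ii) and~(iii) are lighter reductions of case~(i): only one end can overshoot by assumption, so only one end-stage needs the explicit construction, and the non-overshooting end meets condition~(iv) of Lemma~\ref{lemma:montonecurves} automatically. The main technical obstacle I anticipate is the bookkeeping during the overshoot stages, namely maintaining the monotonicity of the traversal in both coordinates while preserving the $\delta$-distance invariant as $Q$ oscillates; this will require a case analysis mirroring the invariants~(1) and~(2) used inside Lemma~\ref{lemma:montonecurves}, extended with an additional subcase for when $Q(t)<a$ (resp.\ $Q(t)>b$), so that $P$'s advancement can be specified monotonically relative to $Q$'s swings near the boundary.
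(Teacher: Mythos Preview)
Your three–stage split at $t_a=\sup\{t:Q(t)<a\}$ and $t_b=\inf\{t:Q(t)>b\}$ is a genuinely different decomposition from the paper's.  The paper instead splits $Q$ at its global extrema $t_{\min},t_{\max}$ and splits $P$ at $t_1=\min\{t:P(t)\ge Q(t_{\min})+\delta\}$ and $t_2=\max\{t:P(t)\le Q(t_{\max})-\delta\}$, yielding a five–piece concatenation.  The payoff of the paper's choice is that all four boundary pieces become ``subcurve versus single point'': since $Q[0,t_{\min}]\subseteq[Q(t_{\min}),Q(t_{\min})+2\delta']$ by $2\delta'$-monotonicity, it lies entirely within $\delta$ of the single point $P(t_1)=Q(t_{\min})+\delta$; and since $P[0,t_1]\subseteq[a-\delta,\,Q(t_{\min})+\delta]\subseteq[a-\delta,a+\delta]$, it lies entirely within $\delta$ of the single point $Q(0)=a$.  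So no traversal needs to be built for the overshoot at all, and Lemma~\ref{lemma:montonecurves} is invoked only once, on $P[t_1,t_2]$ against $Q[t_{\min},t_{\max}]$ (where condition~(iv) is automatic because $t_{\min},t_{\max}$ are the global extrema).  Your plan, by contrast, commits you to re-running an invariant-maintenance argument on each end stage; note in particular that simply parking at $P(0)$ during $Q[0,t_a]$ fails, since $Q$ can reach nearly $a+2\delta'$ there while $P(0)$ can be as low as $a-\delta''$, giving distance close to $\delta+\delta'$.  So your end-stage traversal really does need the full machinery you anticipate.

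There is also an unaddressed degenerate configuration in your plan: when $b-a<2\delta'$ one can have $t_a>t_b$ (for instance $Q$ rises above $b$ early and later dips below $a$, which $2\delta'$-monotonicity permits), so the ordered split $[0,t_a],[t_a,t_b],[t_b,1]$ is undefined.  The paper faces analogous degeneracies ($t_1>t_2$ or $t_{\min}>t_{\max}$) and dispatches them in a few lines, again via curve-versus-point matchings; under your decomposition the corresponding fix is less immediate and would need to be worked out.
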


\begin{proof}
Let $t_{\min} = \argmin \{ Q(t) \}$ and   $t_{\max} = \argmax \{ Q(t) \}$. In case the minimum (resp.\  maximum) is not unique, we choose any of them.
By Observation~\ref{lemma:montonecurvesimple}, we have that $\forall t\in [0,1]~ Q(t)\in [Q(0)-\delta',Q(1)+\delta']$ and by assumption of case (i) $|P(0)-Q(0)| \leq \delta''$ and $|P(1)-Q(1)| \leq \delta''$. Therefore, by triangle inequality, we have in case (i), that
\[ |P(0)-Q(t_{\min})| \leq \delta \quad\text{   and   }\quad |P(1)-Q(t_{\max})| \leq \delta \]
It is easy to see that this holds in the cases (ii) and (iii), as well, since $|P(0)-Q(0)| \leq \delta$ and $|P(1)-Q(1)| \leq \delta$ holds in any case as we assume $\df(P,X) \leq \delta$.

Now, define
\[ t_1 = \min \{t \in [0,1] \mid  P(t) \geq Q(t_{\min}) + \delta\} \]
\[ t_2 = \max \{t \in [0,1] \mid  P(t) \leq Q(t_{\max}) - \delta\} \]
If such a $t_1$ does not exist, then we set $t_1=1$. If $t_2$ does not exist, then we set $t_2=0$. 

Note that by construction and Observation~\ref{lemma:montonecurvesimple} we have 
\begin{equation}\label{eq:signatureedge1:eq1}
\df(P[0,t_1],Q(0)) \leq \delta \quad\text{   and   }\quad  \df(P[t_2,1],Q(1)) \leq \delta
\end{equation}
Indeed, (\ref{eq:signatureedge1:eq1}) holds true  since 
$Q(t_{\min}) \leq Q(0) \leq Q(t_{\min})+\delta'$
and, likewise, 
$Q(t_{\max}) \geq Q(1) \geq Q(t_{\max})-\delta'$, and, moreover,  the image of the subcurve 
$P[0,t_1]$ is contained in  the interval  $[Q(0)-\delta,Q(t_{\min})+\delta]$ and the image of the subcurve $P[t_2,1]$ is contained in  the interval $[Q(t_{\max})-\delta, Q(1)+\delta]$.

In addition, we have  
\begin{equation}\label{eq:signatureedge1:eq2} \df(P(t_1),Q[0,t_{\min}]) \leq \delta \quad\text{   and   }\quad  \df(P(t_2),Q[t_{\max},1]) \leq \delta
\end{equation}
Indeed, (\ref{eq:signatureedge1:eq2}) holds true, since  $\delta'\leq \delta$ and by Observation~\ref{lemma:montonecurvesimple},  $Q$ is $2\delta'$-monotone increasing, and therefore the image of the subcurve $Q[0,t_{\min}]$ is contained in the interval $[Q(t_{\min}), Q(t_{\min})+2\delta']$ which by construction is equal to $[P(t_1)-\delta', P(t_1)+\delta']$ and the image of the subcurve $Q[t_{\max}, 1]$ is contained in the interval $[Q(t_{\max})-2\delta', Q(t_{\max})]$, which by construction is equal to $[P(t_2)-\delta', P(t_2)+\delta']$.

Now, assume that $t_1 \leq t_2$ and $t_{\min} \leq t_{\max}$. In this case, the subcurves $P[t_1,t_2]$ and $Q[t_{\min},t_{\max}]$ are well-defined. 
By construction, $|P(t_1)-Q(t_{\min}) \mid \leq \delta $, $|P(t_2)-Q(t_{\max}) \mid \leq \delta $ and $Q[t_{\min},t_{\max}]\subseteq \overline{Q(t_{\min}) Q(t_{\max})}$. By  Observation~\ref{lemma:montonecurvesimple}, $P$ and $Q$ are both $2\delta$-monotone with respect to $X$, and, by definition, $X=\overline{Q(0)Q(1)}$. Moreover, by the definition of $t_1,t_2$, we have $P[t_1,t_2]\subseteq B(Q[t_{\min},t_{\max}],\delta)$.
Therefore all conditions of Lemma~\ref{lemma:montonecurves} are satisfied, which implies that
\begin{equation}\label{eq:signatureedge1:eq3} \df({P[t_1,t_2],Q[t_{\min}, t_{\max}]}) \leq \delta 
\end{equation}

In summary, we have by (\ref{eq:signatureedge1:eq1}),(\ref{eq:signatureedge1:eq2}), and (\ref{eq:signatureedge1:eq3})  that
\[ \max \begin{pmatrix} 
\df(P[0,t_1],Q(0))\\
\df(P(t_1),Q[0,t_{\min}])\\
\df({P[t_1,t_2],Q[t_{\min}, t_{\max}]})\\
\df(P(t_2),Q[t_{\max},1])\\
\df(P[t_2, 1],Q(1)) 
\end{pmatrix} \leq \delta \]
Now, by Observation~\ref{observation:concatenation} we can concatenate these subcurves and $\df(P,Q) \leq \delta$ is implied.

If the assumption $t_1 \leq t_2$ fails, then, in fact, a simpler decomposition works.
Indeed, if  $t_1 > t_2$, then it holds by (\ref{eq:signatureedge1:eq1}) and (\ref{eq:signatureedge1:eq2}) that
\[ \max \begin{pmatrix} 
\df(P[0,t_1],Q(0))\\
\df(P(t_1),Q)\\
\df(P[t_1, 1],Q(1)) 
\end{pmatrix} \leq \delta \]
Therefore, also in this case,  $\df(P,Q) \leq \delta$  holds true.

Finally, we need to consider the case that the assumption 
$t_{\min} \leq t_{\max}$ fails. We may assume that $t_1 \leq t_2$, as we covered the case $t_1 > t_2$ above. We will consider the different cases from the lemma statement separately.
First, note that if $t_{\min} > t_{\max}$, then $|Q(t_{\max})-Q(t_{\min})| \leq 2\delta$, since $Q$ is $2\delta$-monotone, and therefore, $Q$ is contained in the interval $[P(t_1)-\delta, P(t_1)+\delta]$. By a similar argument, $Q$ is contained in the interval $[P(t_2)-\delta, P(t_2)+\delta]$.

Now, assume case (ii) from the lemma statement. In this case, we have by the above and by Lemma~\ref{lemma:montonecurves}
\[ \max \begin{pmatrix} 
\df(P[0,t_1],Q(0))\\
\df(P(t_1),Q[0,t_{\min}])\\
\df({P[t_1,1],Q[t_{\min}, 1]})) 
\end{pmatrix} \leq \delta \]
Assume case (iii) from the lemma statement. In this case, we have symmetrically
\[ \max \begin{pmatrix} 
\df({P[0,t_2],Q[0, t_{\max}]})\\
\df(P(t_2),Q[t_{\max},1])\\
\df(P[t_2, 1],Q(1)) 
\end{pmatrix} \leq \delta \]

Now, for case (i), we claim that there exist $0 \leq q_1 \leq q_2 \leq 1$, such that
\[ \max \begin{pmatrix} 
\df(P[0,t_1],Q(0))\\
\df(P(t_1),Q[0,q_1])\\
\df(P[t_1,t_2],Q[q_1,q_2])\\
\df(P(t_2),Q[q_2,1])\\
\df(P[t_2, 1],Q(1)) 
\end{pmatrix} \leq \delta \]
Indeed, from what we derived, $\df(P(t_1),Q[0,q_1])\leq \delta$ and $\df(P(t_2),Q[q_2,1])\leq \delta$  holds for any choice of $q_1, q_2 \in [0,1]$. The first and last line hold by (\ref{eq:signatureedge1:eq1}). It remains to show that we can choose $q_1,q_2$ so that $\df(P[t_1,t_2],Q[q_1,q_2]) \leq \delta$ holds. Since $\df(P,X)\leq \delta$, there must be a subsegment $X[x_1,x_2]$ of $X$, such that $\df(P[t_1,t_2],X[x_1,x_2]) \leq \delta$. Recall that $\overline{Q(0)Q(1)}=X$ and by the intermediate value theorem we can define suitable $q_1,q_2$ as follows
\[ q_1 = \max \{q \in [0,1] \mid Q(q)=X(x_1) \} \]
\[ q_2 = \min \{q \in [q_1,1] \mid Q(q)=X(x_2) \} \]
Now, we can apply Lemma~\ref{lemma:montonecurves} and conclude  that $\df(P[t_1,t_2],Q[q_1,q_2]) \leq \delta$. Therefore, also in case (i), we have $\df(P,Q) \leq \delta$.
\end{proof}


Now we are ready to prove Lemma~\ref{lemma:signatureproxy}.

\lemmasignatureproxy*

\begin{proof}
This follows by a modification of the proof of  Lemma~\ref{lemma:simplproxy}. Although the two proofs are very similar, the differences are subtle. Therefore, we give the full proof for the sake of completeness.
Let $q_1,\ldots,q_{\ell}$ be the parameters corresponding to the vertices of $Q'$ in $Q$, i.e., the vertices of $Q'$ are $Q(q_1),\ldots,Q(q_{\ell})$. 
Let $\phi:[0,1] \rightarrow [0,1]^2$ be a $\delta$-traversal between $P$ and $Q'$. Let $0=t_1 \leq \dots \leq t_{\ell}=1$ be a partition of the parameter space of $P$ such that for any $1 \leq i \leq \ell-1$, the edge $\overline{Q(q_i)Q(q_{i+1})}$ is mapped to $P[t_i,t_{i+1}]$ under $\phi$. As such, we have  
\begin{equation}\label{eq:PQ1}
\df (P[t_i,t_{i+1}],\overline{Q(q_i)Q(q_{i+1})}) \leq \delta  
\end{equation}
By the definition of $\delta$-simplifications, we also have that
\begin{equation}\label{eq:PQ2}
 \df (Q[q_i,q_{i+1}],\overline{Q(q_i)Q(q_{i+1})}) \leq \delta' \leq \delta 
\end{equation}
 
Now, if the edge $\overline{Q(q_i)Q(q_{i+1})}$ of $Q'$ is range-preserving, then Lemma \ref{lemma:strongertriangleineq} implies that
\begin{equation}\label{eq:PQ3}
\df(P[t_i,t_{i+1}], Q[q_i,q_{i+1}]) \leq \delta.
\end{equation}
Otherwise, it must be (by the definition of signatures) that either $i=1$ or $i+1=\ell$ or both (the edge is the first or last edge of the signature $Q'$ or $Q'$ consists of just one edge). In any of those cases, Lemma~\ref{lemma:signatureedge1} implies $\df(P[t_i,t_{i+1}], Q[q_i,q_{i+1}]) \leq \delta$.

Finally, we apply Observation \ref{observation:concatenation} on  $P=\bigcirc_{i=1}^{\ell}P[t_i,t_{i+1}]$ and $Q=\bigcirc_{i=1}^{\ell}Q[q_i,q_{i+1}]$, and we obtain  \[\df(P,Q)\leq \max_{i \in [\ell]}\df\left(P[t_i,t_{i+1}], Q[q_i,q_{i+1}] \right) \leq  \delta.\]
\end{proof}

\subsection{Proofs of lemmas for visiting orders}\label{sec:visiting}

In order to prove the existence of $\delta'$-visiting orders for some $\delta' \in O(\delta)$ as claimed in Lemma~\ref{lemma:goodsimplificationranges2}, we introduce the concept of a  visiting sequence. A visiting sequence is not necessarily monotonically increasing, while visiting orders according to Definition~\ref{def:visitingorder} are. Nonetheless, this definition of visiting sequence will turn out to be useful. It is important that a $\delta$-visiting sequence is derived from a monotone traversal. 
We will show (Lemma~\ref{lemma:nonmonotone} and  \ref{lem:visitingorder}) that any non-monotonic visiting sequence can be turned into a monotonic one at the expense of a constant factor in the radius of the visiting sequence.


\begin{Definition}\label{def:visit}
Let $P:[0,1] \rightarrow \RR$ and $Q:[0,1] \rightarrow \RR$ be curves, let $\delta > 0$, and let $\phi: [0,1] \rightarrow [0,1]^2$ be a monotone traversal. We say a vertex $w$ of $Q$ $\delta$-\textbf{visits} a vertex $v$ of $P$ \textbf{under} $\phi$ if the following holds: \begin{compactenum}[(i)] 
\item $|w-v|\leq \delta$ and 
\item at least one of the following holds:
\begin{compactenum}
\item $\phi$ associates $w$ with $v$, or
\item $\phi$ associates $w$ with the interior of an edge of $P$ that is incident to $v$, or
\item $\phi$ associates $v$ with the interior of an edge of $Q$ that is incident to $w$.
\end{compactenum} 
\end{compactenum}
Note that the induced relation on the vertices is symmetric for any fixed $\delta$ and $\phi$. 
\end{Definition}

\begin{Definition}\label{def:visitingseq}
Let $P:[0,1] \rightarrow \RR$ and $Q:[0,1] \rightarrow \RR$ be curves and let $\phi: [0,1] \rightarrow [0,1]^2$ be a monotone traversal. Let $S$ be a subsequence of the vertices of $Q$ of length $\ell$. Let $u_1,\dots,u_{\ell}$ denote the ordered vertices of $S$ and let $v_1,\dots,v_{m}$ denote the ordered vertices of $P$.  A \textbf{$\delta$-visiting sequence} of $S$ on $P$ \textbf{under} $\phi$ is a sequence of indices $i_1,\dots, i_{\ell}$, such that each $u_{j}$ of $S$ $\delta$-visits the vertex $v_{i_{j}}$ of $P$ under $\phi$. 
\end{Definition}

\begin{lemma}\label{lemma:nonmonotone}
Let $P:[0,1]\mapsto \RR$ and $Q:[0,1]\mapsto \RR$ be curves such that $\df(Q,P)\leq \delta$ and let $\phi$ be a monotone traversal realizing this distance. Let $v_i,v_j$ be two vertices of $Q$ with $i < j$ in the ordering along $Q$. Assume $v_i$ $\delta$-visits a vertex $w_a$ of $P$ under $\phi$ and $v_j$ $\delta$-visits  a vertex $w_b$ of $P$ under $\phi$ such that $a > b$ in the ordering along $P$. Then, it must be that $v_i$ $3\delta$-visits $w_b$ under $\phi$ and that $v_j$ $3\delta$-visits $w_a$ under $\phi$.
\end{lemma}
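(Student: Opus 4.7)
The plan is to reason about the geometry of the monotone traversal $\phi = (f,g)$ in parameter space. Let $\alpha_i,\alpha_j$ denote the parameters of $v_i,v_j$ on $Q$ and $\beta_a,\beta_b$ the parameters of $w_a,w_b$ on $P$. I would first fix times $t_1,t_2 \in [0,1]$ witnessing the two visits according to Definition~\ref{def:visit}; in each of the three subcases (a), (b), (c) the point $\phi(t_k)$ is pinned so that one of its coordinates equals a vertex parameter and the other lies in a prescribed (possibly open) interval of adjacent-edge parameters inside the rectangle $[\beta_{a-1},\beta_{a+1}]\times[\alpha_{i-1},\alpha_{i+1}]$, and analogously at $t_2$.

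The heart of the argument is a short case analysis driven by the monotonicity of $f$ and $g$. Compare $t_1$ with $t_2$: one ordering forces $f(t_1)\le f(t_2)$ together with $g(t_1)\le g(t_2)$, the other forces the reverse pair of inequalities. Combined with the hypotheses $i<j$ (so $\alpha_i<\alpha_j$) and $a>b$ (so $\beta_a>\beta_b$), one ordering eliminates every subcase combination except (b)$+$(b) and also forces $a=b+1$, while the other eliminates everything except (c)$+$(c) and forces $j=i+1$. In each surviving configuration $\phi$ on the whole interval between $t_1$ and $t_2$ must remain either on the single edge $\overline{w_bw_a}$ of $P$ (with $g(t_k)\in\{\alpha_i,\alpha_j\}$) or on the single edge $\overline{v_iv_j}$ of $Q$ (with $f(t_k)\in\{\beta_a,\beta_b\}$).

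With these structural constraints in hand I would parameterise $\phi(t_1),\phi(t_2)$ along the common edge and combine the Fréchet bound $|Q(g(t_k))-P(f(t_k))|\le\delta$ with the visit hypotheses $|v_i-w_a|\le\delta$ and $|v_j-w_b|\le\delta$. A short triangle-inequality chain then shows that each of the two parameters sits within $2\delta$ of the endpoint opposite to the originally matched vertex, which directly yields $|v_i-w_b|\le 3\delta$ and $|v_j-w_a|\le 3\delta$. The visit conditions themselves come for free: the same witnessing times $t_1,t_2$ certify, in each surviving configuration, the $3\delta$-visit via clause (b) or (c) of Definition~\ref{def:visit}, since $\phi(t_2)$ already sits on the interior of an edge incident to $w_b$ (or to $v_i$) with $Q$-coordinate (or $P$-coordinate) already in the required range, and symmetrically for $\phi(t_1)$. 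The main obstacle is the bookkeeping of the case analysis: although each individual subcase is straightforward, there are nine combinations of $\{(a),(b),(c)\}\times\{(a),(b),(c)\}$ crossed with the two orderings of $t_1,t_2$ to dispose of, and the two surviving configurations have subtly different geometries that must be handled in parallel.
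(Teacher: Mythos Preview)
Your plan is correct and follows the same core idea as the paper: monotonicity of $\phi$ forces the two visited vertices to be adjacent on one of the curves, and then the ordering of the two image points along that common edge together with the Fr\'echet bound and the visit hypotheses yields $|v_i-v_j|\le 2\delta$ (or $|w_a-w_b|\le 2\delta$), from which the $3\delta$-visits follow by triangle inequality.

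The paper's proof is considerably shorter because it only writes out what you call the $(b)+(b)$ configuration: it asserts directly that $\overline{w_bw_a}$ is an edge of $P$ and lets $P(t),P(t')$ be the points that $v_i,v_j$ are mapped to on that edge, then derives $|v_i-v_j|\le 2\delta$ from the chain $v_i-\delta\le w_a\le P(t')\le P(t)\le w_b\le v_j+\delta$ combined with $v_j-\delta\le P(t')\le P(t)\le v_i+\delta$. It does not explicitly treat your dual $(c)+(c)$ configuration (where it is $\overline{v_iv_j}$ that is forced to be an edge and $w_a,w_b$ that land on it), implicitly relying on the $P\leftrightarrow Q$ symmetry of the visit relation. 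Your full case split across the nine subcase pairs and the two orderings of $t_1,t_2$ makes this symmetry explicit and is therefore more rigorous, at the cost of the bookkeeping you flag; nothing in your outline is wrong or superfluous.
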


\begin{proof}
As $a > b$, however, in $\phi$ a point on an adjacent edge of $w_a$ is matched earlier than a point on an adjacent edge of $w_b$, we conclude due to the monotonicity of $\phi$ that $\overline{w_b w_a}$ is an edge in $P$.
Let $P(t)$ and $P(t')$ be the points that $v_i$ and $v_j$ are mapped to on $\overline{w_b w_a}$ under $\phi$, respectively.
By the monotonicity of $\phi$ we have $t \leq t'$. See Figure~\ref{fig:caseP1aP2a} for an illustration.

Assume that $w_a < w_b$, as the case $w_a > w_b$ is symmetric. Since $P(t)$ and $P(t')$ are both on the edge $\overline{w_b w_a}$, the fact that $t \leq t'$ implies that $P(t') \leq P(t)$. Using the facts that $|v_i - w_a| \leq \delta$ and $|v_j - w_b| \leq \delta$, we obtain
\[ v_i - \delta \leq w_a \leq P(t') \leq P(t) \leq w_b \leq v_j + \delta.\]
At the same time we have
\[ v_j - \delta \leq P(t') \leq P(t) \leq v_i + \delta.\]

It follows that $|v_i-v_j| \leq 2\delta$.

Thus, the claim that $v_i$ is contained in the $3\delta$-range of $w_b$ is then implied by triangle inequality, as well as the symmetric claim that $v_j$ is contained in  the $3\delta$-range of $w_a$. As $v_i$ and $v_j$ are both matched to the edge $\overline{w_b w_a}$, we also have that $v_i$ and $v_j$ visit the $3\delta$-ranges of $w_b$ and $w_a$, respectively.
\end{proof}

\begin{figure}[t]
    \centering
    \includegraphics{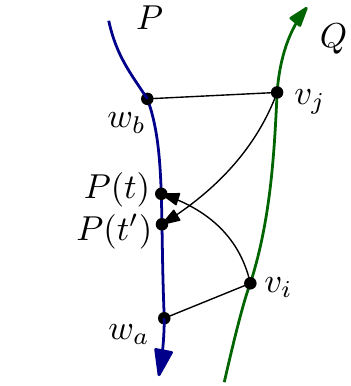}
    \caption{Illustration to the proof of Lemma~\ref{lemma:nonmonotone}. Assuming $w_a<w_b$ as in the proof, $v_i$ visits $w_a$ and $v_j$ visits $w_b$, but $i<j$ and $a>b$, so the visiting sequence is not monotone. }
    \label{fig:caseP1aP2a}
\end{figure}

\begin{lemma}\label{lem:visitingorder}
Let $P:[0,1] \rightarrow \RR$ and $Q:[0,1] \rightarrow \RR$ be curves and let $\phi:[0,1] \rightarrow [0,1]^2$ be a monotone traversal that maps them within distance $\delta$. Let $S$ be a subsequence of the vertices of $Q$. Any $\delta$-visiting sequence of $S$ on $P$ under $\phi$ implies a $3\delta$-visiting order of $S$ on $P$.
\end{lemma}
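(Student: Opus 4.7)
The plan is to derive the monotone visiting order directly from the given (possibly non-monotone) visiting sequence by taking a prefix maximum. Let $i_1, \dots, i_\ell$ denote the given $\delta$-visiting sequence, and let $u_1, \dots, u_\ell$ be the vertices of $S$ in order along $Q$. I would define
\[ i_j' := \max_{1 \le k \le j} i_k \quad \text{for each } j \in [\ell]. \]
Monotonicity $i_1' \le i_2' \le \dots \le i_\ell'$ then holds by construction, and $i_j' \in [m]$ is immediate since each $i_k \in [m]$, so the entire task reduces to verifying the distance condition $|u_j - v_{i_j'}| \le 3\delta$ for every $j$.

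For the distance bound I would split into two cases. If $i_j' = i_j$, then $|u_j - v_{i_j'}| = |u_j - v_{i_j}| \le \delta$ is immediate from the definition of a $\delta$-visiting sequence. Otherwise $i_j' > i_j$, so the prefix max is attained at some $k < j$ with $i_k = i_j' > i_j$; this is precisely an inversion of the form hypothesised in Lemma~\ref{lemma:nonmonotone} (take $v_i = u_k$, $w_a = v_{i_k}$, $v_j = u_j$, $w_b = v_{i_j}$ in the notation of that lemma), so the lemma applies and yields that $u_j$ $3\delta$-visits $v_{i_k} = v_{i_j'}$, which in particular gives $|u_j - v_{i_j'}| \le 3\delta$.

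The main subtlety — and the reason I prefer the prefix-max construction over a bubble-sort style argument — is that Lemma~\ref{lemma:nonmonotone} requires \emph{both} of its input visits to be at distance $\delta$ (not $3\delta$), so it cannot be chained along a sequence of adjacent swaps whose intermediate visits have already been weakened to $3\delta$. With the prefix max, each individual distance bound of $3\delta$ is extracted from a single application of Lemma~\ref{lemma:nonmonotone} to an inversion \emph{in the original} $\delta$-visiting sequence, so the lemma is never iterated on itself. No further bookkeeping is needed, and the case analysis above directly produces the claimed $3\delta$-visiting order.
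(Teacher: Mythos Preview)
Your argument is correct and essentially identical to the paper's: the paper replaces each $i_j$ by the minimum of the \emph{suffix} $i_j,\dots,i_\ell$, whereas you replace it by the maximum of the \emph{prefix} $i_1,\dots,i_j$, and in both cases the required distance bound comes from a single application of Lemma~\ref{lemma:nonmonotone} to an inversion in the original $\delta$-visiting sequence. Your explicit remark that the lemma cannot be chained (and hence a bubble-sort style fix would not work) is a nice clarification that the paper leaves implicit.
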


\begin{proof}
Let $u_1,\dots,u_{\ell}$ denote the vertices of $S$ and let $i_1,\dots,i_{\ell}$ denote the visiting sequence.
We generate a monotonically increasing sequence as follows. For every $u_{j}$, we set $i_{j}$ to the minimum of the suffix sequence $i_{j},\dots,i_{\ell}$. If $i_{j}$ was already a minimum, then nothing changes. Otherwise, let $i_k$ be an index, where this minimum was attained. By  Lemma~\ref{lemma:nonmonotone} the vertex $u_{j}$ is contained in the $3\delta$-range of the vertex $v_{i_k}$. After applying this to all elements of the sequence, starting with $j=1$ and ending with $j=\ell$, the sequence $i_1,\dots,i_{\ell}$ is monotonically increasing.
\end{proof}

The next two lemmas are used in the proof of Lemma~\ref{lemma:goodsimplificationranges2}.

\begin{lemma}
\label{lemma:notvisiting}
Let $P:[0,1]\mapsto \RR$ and $Q:[0,1]\mapsto \RR$ be curves such that $\df(Q,P)\leq \delta$ and let $\phi$ be a monotone traversal realizing this distance. 
If none of the inner vertices of $P$ and $Q$ $\delta$-visit each other under $\phi$, then $P$ and $Q$ are $2\delta$-monotone.
\end{lemma}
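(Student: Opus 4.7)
The plan is to prove the contrapositive: if $Q$ is not $2\delta$-monotone, then some inner vertex of $Q$ must $\delta$-visit an inner vertex of $P$ under $\phi$, contradicting the hypothesis. The case where $P$ fails to be $2\delta$-monotone is fully symmetric by swapping the roles of $P$ and $Q$.

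First I would extract a \emph{bad valley} structure. Since $Q$ is not $2\delta$-monotone, there exist parameters witnessing both a drop and a rise of more than $2\delta$. A short case analysis on the relative order of these witnessing parameters yields parameters $t_a < t_i < t_c$ with $Q(t_a), Q(t_c) > u_i + 2\delta$, where $u_i := \min_{t \in [t_a,t_c]} Q(t)$; by piecewise-linearity $u_i$ is attained at an inner vertex of $Q$, and $Q \geq u_i$ throughout $[t_a, t_c]$. (A bad-peak configuration is handled by the sign-flipped argument.)

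Writing $\phi = (f,g)$ and setting $[s_1, s_2] = g^{-1}(t_i)$ and $[\alpha, \beta] = [f(s_1), f(s_2)]$, we have $|P(t) - u_i| \leq \delta$ for $t \in [\alpha, \beta]$. If any inner vertex $v$ of $P$ has its parameter in $[\alpha, \beta]$, then $\phi$ associates $u_i$ with $v$ and $|v-u_i|\leq\delta$, so $u_i$ visits $v$ via clause (ii)(a) of Definition~\ref{def:visit}, a contradiction. Hence $[\alpha, \beta]$ lies in the interior of a single edge $e = \overline{v_j v_{j+1}}$ of $P$. Using the bad-valley bound $|Q(t_a) - u_i| > 2\delta$ with the triangle inequality, I rule out $\alpha = 0$ (and symmetrically $\beta = 1$), so $v_j, v_{j+1}$ may be taken to be inner vertices (the boundary subcases reduce to the same edge-tracking argument below). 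To avoid a visit via clause (ii)(b), one needs $|v_j - u_i|, |v_{j+1} - u_i| > \delta$, and since $e$ is linear and passes through $P(\alpha) \in [u_i - \delta, u_i + \delta]$, $v_j$ and $v_{j+1}$ must lie strictly on opposite sides of this interval.

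WLOG take $v_j > u_i + \delta$ and $v_{j+1} < u_i - \delta$, so $P$ is strictly decreasing along $e$ past $\beta$ with values $\leq P(\beta) \leq u_i + \delta$. As $\phi$-time increases past $s_2$, $P$ eventually reaches $v_{j+1} < u_i - \delta$ at some time $s^*$, which forces $Q(g(s^*)) < u_i$ and hence $g(s^*) > t_c$ (since $Q \geq u_i$ on $[t_a, t_c]$). By continuity and monotonicity of $g$ there is an intermediate time $s_c \in (s_2, s^*)$ with $g(s_c) = t_c$; but $f(s_c) \in [\beta, \mathrm{param}(v_{j+1})]$ lies on the decreasing portion of $e$, so $P(f(s_c)) \leq u_i + \delta$, contradicting $|P(f(s_c)) - Q(t_c)| \leq \delta$ and $Q(t_c) > u_i + 2\delta$. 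The symmetric configuration $v_j < u_i - \delta, v_{j+1} > u_i + \delta$ is handled by a mirror argument using the $\phi$-interval $[0, s_1]$ and $t_a$ in place of $t_c$. The main obstacle is the careful interleaving of $\phi$'s monotonicity with the linear decrease of $P$ on $e$ to pin down the intermediate time $s_c$ at which the values of $P$ and $Q$ become irreconcilable; the bad-valley lower bound $Q \geq u_i$ on $[t_a,t_c]$ is exactly the ingredient that makes this possible.
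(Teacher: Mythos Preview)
Your contrapositive argument is correct and genuinely different from the paper's proof. The paper proceeds by a forward induction along the traversal $\phi$: starting from the fact that $p_2$ is matched into $\overline{q_1q_2}$ (or vice versa) and that $p_2,q_2$ do not $\delta$-visit each other, it pins down the direction of the first edges and then inductively extends the $2\delta$-monotone prefix one vertex at a time, repeatedly using the non-visiting hypothesis to force each new edge to continue in the same direction. Your route instead localises a single obstruction: extract a bad valley (or peak) in $Q$ with an inner vertex $u_i$ at its bottom, look at the $P$-image of the $\phi$-fibre over $t_i$, and argue that either some inner vertex of $P$ lands in that fibre (clause~(ii)(a)) or the containing edge of $P$ must straddle the interval $[u_i-\delta,u_i+\delta]$ with both endpoints outside, after which the edge-tracking contradiction via $t_c$ (or $t_a$) finishes it. Both approaches are clean; yours is more local and avoids the bookkeeping of the induction, while the paper's argument makes the global monotone direction explicit throughout.

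Two places in your write-up are thinner than they should be. First, ruling out $\alpha=0$ and $\beta=1$ does \emph{not} by itself make $v_j,v_{j+1}$ inner vertices: the edge $e$ can still be the first or last edge of $P$. Your parenthetical is right that these boundary subcases go through, but they do not all reduce literally to the edge-tracking argument---when, say, $v_j=P(0)$ and $v_{j+1}>u_i+\delta$, you instead use that all of $P[0,\alpha]$ lies at height $\le u_i+\delta$ and hence cannot be matched to $Q(t_a)>u_i+2\delta$. Second, the ``short case analysis'' producing the bad valley/peak from the two witnesses of non-$2\delta$-monotonicity needs a line or two more: when the drop interval $[t_1,t_2]$ and the rise interval $[s_1,s_2]$ are nested, neither the valley on $[t_1,s_2]$ nor the peak on $[s_1,t_2]$ is guaranteed individually, but a quick sum shows they cannot both fail (the total gap exceeds $4\delta$). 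None of this is a real gap---just places where a full proof would need another sentence.
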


\begin{proof}

We prove the lemma by induction. We reconstruct the matching $\phi$ and use \enquote{matched} as shorthand for \enquote{matched under $\phi$}.
Recall that we denote the ordered vertices of $P$ and $Q$ by $p_1, p_2, \dots$ and $q_1, q_2, \dots$, respectively. Note that if either $P$ or $Q$ consist of a single vertex or single segment, then the claim immediately follows from Observation~\ref{lemma:montonecurvesimple}.
Otherwise, either $p_2$ is matched to a point on $\overline{q_1 q_2}$ or $q_2$ is matched to a point on $\overline{p_1 p_2}$ and $p_2, q_2$ are inner vertices. As the lemma statement is symmetric with respect to $P$ and $Q$, we assume without loss of generality that $p_2$ is matched to $\overline{q_1 q_2}$. As $p_2$ and $q_2$ are inner vertices, they cannot $\delta$-visit each other, and thus either $p_2 < q_2 - \delta$ or $p_2 > q_2 + \delta$. By mirroring the curves $P$ and $Q$ at the origin, these two cases are symmetric, and we thus assume $p_2 < q_2 - \delta$ without loss of generality. As $p_2$ is matched to $\overline{q_1 q_2}$, it follows that $q_1 < q_2$. Thus, $\overline{q_1 q_2}$ is increasing and $\overline{p_1 p_2}$ has to be 2\deltamonotone increasing as otherwise the matching would have distance larger than $\delta$.
Now, for the inductive step, assume that $\seqtocurve{ p_1, \dots, p_i }$ and $\seqtocurve{ q_1, \dots, q_j }$ are $2\delta$-monotone increasing curves, $p_i, q_j$ are inner vertices, and $p_i$ is matched to a point on $\overline{q_{j-1}q_j}$ with $p_i < q_j - \delta$. Note that this again implies $q_{j-1} < q_j$.

Let us now prove the inductive step. If $p_{i+1}$ is an inner vertex, then either (i) $p_{i+1}$ is also matched to a point on $\overline{q_{j-1}q_j}$ or (ii) $q_j$ is matched to a point on $\overline{p_i p_{i+1}}$. 

In case (i), $p_{i+1}$ extends a subcurve $\seqtocurve{p_{i'}, \dots, p_i}$ with $i' \geq 1$ that is completely matched to a part of the increasing segment $\overline{q_{j-1}q_j}$. The subcurve $\seqtocurve{p_{i'}, \dots, p_i}$ has to be 2\deltamonotone increasing according to Observation~\ref{lemma:montonecurvesimple}. Either $p_{i'}$ is the start of $P$ (i.e, $i' = 1$) and thus $\seqtocurve{p_1, \dots, p_{i+1}}$ is 2\deltamonotone increasing, or $p_{i'-1}$ has to be matched to a part of $Q$ before $q_{j-1}$ and thus $q_{j-1}$ is an inner vertex.
As $q_{j-1}$ was already matched, it follows that either $p_{i'-1}$ is the start of $P$ (i.e., $i'-1 = 1$) and $p_{i'-1} \leq q_{j-1} + \delta$, or $p_{i'-1}$ is an inner vertex and $p_{i'-1} < q_{j-1} - \delta$ as they do not $\delta$-visit each other.
In both cases $\seqtocurve{p_1, \dots, p_{i'-1}}$ is contained in $[-\infty, q_{j-1}+\delta)$; for the first case this holds as $\seqtocurve{p_1, \dots, p_{i'-1}}$ is 2\deltamonotone increasing by induction. Consequently, the concatenation of $\seqtocurve{p_1, \dots, p_{i'-1}}$ and $\seqtocurve{p_{i'}, \dots, p_{i+1}}$ is also 2\deltamonotone increasing.

Now consider case (ii), i.e., $q_j$ is matched to a point on $\overline{p_i p_{i+1}}$. In this case $\overline{p_i p_{i+1}}$ is increasing as $p_i < q_j$ and $q_j < p_{i+1}$, which is the case because $q_j$ is matched to $\overline{p_i p_{i+1}}$ and $p_i < q_j - \delta$.
Therefore, also in this case it holds that $\seqtocurve{p_1, \dots, p_{i+1}}$ is 2\deltamonotone increasing.
Note that after exchanging $P$ and $Q$, we again fulfill the inductive hypothesis. In particular, since $q_j$ is matched to $\overline{p_i p_{i+1}}$ but $p_{i+1}$ and $q_j$ do not $\delta$-visit each other as both are inner vertices, we must have $q_j < p_{i+1} - \delta$.

Now consider the case that $p_{i+1}$ is not an inner vertex, i.e., it is the last vertex of $P$. In this case, part of $\overline{p_i p_{i+1}}$ has to be matched to $q_j$ as no previous part of $P$ was matched to $q_j$. This implies that $\overline{p_i p_{i+1}}$ again is increasing as $p_i < q_j - \delta$ and $p_{i+1} \geq q_j -\delta$. Hence $\seqtocurve{p_1 \dots p_{i+1}}$ is $2\delta$-monotone increasing. As the remainder of $Q$, starting from $q_j$, has to be matched to part of $\overline{p_i p_{i+1}}$ and therefore this part is $2\delta$-monotone increasing by Observation~\ref{lemma:montonecurvesimple}, and $\seqtocurve{q_1, \dots, q_{j-1}}$ is 2\deltamonotone by induction and also contained in $[-\infty,p_i-\delta)$ as $p_i$ is matched to the increasing  $\overline{q_{j-1} q_j}$, it follows that the whole curve $Q$ is 2\deltamonotone increasing.
\end{proof}

\begin{lemma}
\label{lemma:shortcut}
 Let $P:[0,1]\mapsto \RR$ and $Q:[0,1]\mapsto \RR$ be curves such that $\df(Q,P)\leq \delta$ and let $\phi$ be a monotone traversal realizing this distance. Further assume that for all $t \in [0,1]$ we have $Q(t) \in \overline{Q(0)Q(1)}$.
 If none of the inner vertices of $Q$ $\delta$-visit an inner vertex of $P$ under $\phi$, then the line segment $Q'=\overline{Q(0) Q(1)}$ is a range-preserving $\delta$-simplification of $Q$ with  $\df(Q',P)\leq \delta$. 
\end{lemma}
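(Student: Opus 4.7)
The plan is to prove the two conclusions—that $Q'$ is a range-preserving $\delta$-simplification of $Q$ and that $\df(Q',P)\leq\delta$—by two applications of Observation~\ref{lemma:montonecurvesimple}, one to the pair $(Q,Q')$ and one to $(P,Q')$, preceded by a call to Lemma~\ref{lemma:notvisiting} to secure $2\delta$-monotonicity for both curves.

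First, since the $\delta$-visiting relation between vertices is symmetric (Definition~\ref{def:visit}), the hypothesis that no inner vertex of $Q$ $\delta$-visits an inner vertex of $P$ under $\phi$ is exactly what Lemma~\ref{lemma:notvisiting} requires, so both $P$ and $Q$ are $2\delta$-monotone. Assume WLOG $Q(0)\leq Q(1)$ (otherwise mirror both curves at the origin); the containment $Q \subseteq [Q(0),Q(1)]$ pins the direction of $Q$'s monotonicity: if $Q$ were only $2\delta$-monotone decreasing, then $Q(1)\leq Q(0)+2\delta$ together with $Q\subseteq[Q(0),Q(1)]$ bounds $Q$'s range by $2\delta$, which automatically makes $Q$ $2\delta$-monotone increasing as well. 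Applying Observation~\ref{lemma:montonecurvesimple} to $Q$ and the line segment $Q'=\overline{Q(0)Q(1)}$ then gives $\df(Q,Q')\leq\delta$: the endpoints agree, $Q\subseteq Q'\subseteq B(Q',\delta)$, and $Q$ is $2\delta$-monotone with respect to $Q'$. This is exactly the locality property for the single edge of $Q'$; together with $Q'$ being vertex-restricted at $Q(0),Q(1)$ and non-degenerate (the case $Q(0)=Q(1)$ collapses $Q$ to a point and the claim is trivial), this certifies $Q'$ as a $\delta$-simplification of $Q$. Range-preservation of its lone edge is the given assumption $Q\subseteq \overline{Q(0)Q(1)}$.

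The final bound $\df(Q',P)\leq\delta$ comes from a second application of Observation~\ref{lemma:montonecurvesimple}, this time to $P$ and the line segment $Q'$. Conditions (ii) and (iii) are immediate: $|P(0)-Q'(0)|,|P(1)-Q'(1)|\leq\delta$ from $\df(P,Q)\leq\delta$, and $P\subseteq B(Q,\delta)\subseteq B(Q',\delta)$ since $Q\subseteq Q'$. The main obstacle will be condition~(i)—that $P$ is $2\delta$-monotone \emph{with respect to $Q'$}, not merely in some direction. To handle this, I would re-examine the inductive proof of Lemma~\ref{lemma:notvisiting}, which in fact establishes $P$ and $Q$ as $2\delta$-monotone in the \emph{same} direction (the WLOG mirroring at the base case pins a joint direction dictated by the sign of $p_2-q_2$ and the orientation of the first edge of $Q$). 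Since $Q$'s direction has already been aligned with $\overline{Q(0)Q(1)}$, the same direction transfers to $P$, supplying condition~(i) and thus $\df(Q',P)\leq\delta$.
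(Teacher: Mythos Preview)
Your proposal is correct and follows essentially the same route as the paper: invoke Lemma~\ref{lemma:notvisiting} for $2\delta$-monotonicity of $P$ and $Q$, then apply Observation~\ref{lemma:montonecurvesimple} to the pair $(P,Q')$ to obtain $\df(Q',P)\le\delta$. The paper's proof is terser—it asserts that ``the conditions of Observation~\ref{lemma:montonecurvesimple} are satisfied'' without singling out the direction of $P$'s monotonicity—whereas you explicitly flag condition~(i) and propose to resolve it by noting that the inductive argument inside Lemma~\ref{lemma:notvisiting} in fact forces $P$ and $Q$ to be $2\delta$-monotone in the \emph{same} direction; this is a genuine detail the paper leaves implicit, and your treatment is the more careful one.
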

\begin{proof}
By Lemma~\ref{lemma:notvisiting}, $Q$ and $P$ must be $2\delta$-monotone. Moreover, $Q'$ is range-preserving by assumption. Therefore, $Q'$ is a range-preserving $\delta$-simplification of $Q$.
It remains to show the bound on the Fr\'echet distance of $P$ and $Q'$.
To this end, we want to invoke Observation~\ref{lemma:montonecurvesimple}. Indeed, it must be that 
\[ \forall t\in [0,1]: P(t) \in \bigcup_{s \in [0,1]} B(Q(s),\delta), \] 
since $\df(P,Q) \leq \delta$ and since $Q'$ is range-preserving. Therefore, the conditions of Observation~\ref{lemma:montonecurvesimple} are satisfied and the bound is implied.
\end{proof}

We are now ready to prove Lemma~\ref{lemma:goodsimplificationranges2} from Section~\ref{section:lemmas}.

\lemmagoodsimplificationranges*
\begin{proof}
Let $\phi$ be a monotone traversal that realizes the Fr\'echet distance between $P$ and $Q$. We will construct a $\delta$-\straightening $Q'$  together with a $\Oh(\delta)$-visiting order of $Q'$ on $P$. To this end, consider the subset of vertices of $Q$ that each $\delta$-visit \emph{some} vertex of $P$ under $\phi$ (Definition~\ref{def:visit}). Denote this subset by $S$. Lemma~\ref{lem:visitingorder} implies that there exists a $3\delta$-visiting order of $S$ on $P$.
We denote this visiting order by the function $\kappa: S \rightarrow [m]$ that assigns every vertex of $S$ the index of a vertex of $P$ (where $m$ denotes the number of vertices of $P$). 

It is quite possible that $S$ is not a $\delta$-simplification of $Q$ with the desired properties. In a second phase of the construction we will therefore add more vertices of $Q$ to $S$.
Consider any maximal subcurve $Q[s,s']$ of $Q$, such that none of the inner vertices of $Q[s,s']$ $\delta$-visit a vertex of $P$ under $\phi$. It must be that $Q(s)$ corresponds to some vertex $w$ of $S$ and $Q(s')$ corresponds to some vertex $w'$ of $S$. Moreover, $w'$ comes directly after $w$ along $Q$ among the vertices included in $S$.  Assume that $Q[s,s']$ has at least one inner vertex. We distinguish two cases:
\begin{enumerate}[(C1)]
\item $B(v_{\kappa(w)},3\delta) \cap B(v_{\kappa(w')}, 3\delta) \neq \emptyset $,
\item otherwise
\end{enumerate}

In the first case (C1), we will add all inner vertices $Q[s,s']$ to $S$ and assign them the index $\kappa(w)$ in the constructed visiting order $\kappa$.
In the second case (C2), we will only add a specific subset of vertices, which we define as follows.
Define $\alpha$ and $\beta$ as follows:
\[ \alpha = \max \{ t ~\mid~ t \in [s,s'] \text{ and } Q(t) \in B(v_{\kappa(w)},3\delta)  \} \]
\[ \beta = \min \{ t ~\mid~ t \in [\alpha,s'] \text{ and } Q(t) \in B(v_{\kappa(w')}, 3\delta) \} \]

Since the $3\delta$-ranges of $v_{\kappa(w)}$ and $v_{\kappa(w')}$ are disjoint, $\alpha$ and $\beta$ are well-defined and it follows by definition that $s \leq \alpha \leq \beta \leq s'$. Therefore, the subcurves $Q[s,\alpha]$,  $Q[\alpha,\beta]$, and $Q[\beta,s']$ are well-defined. Now, we proceed as follows, we add the inner vertices of $Q[s,\alpha]$ to $S$ and assign them the index $\kappa(w)$ in the constructed visiting order $\kappa$. Secondly, we add the inner vertices of $Q[\beta,s]$ to $S$  and assign them the index $\kappa(w')$ in the constructed visiting order $\kappa$. 

We apply this to all such maximal subcurves $Q[s,s']$ (note that these are pairwise disjoint), thereby constructing the sequence $S$ along with the visiting order $\kappa$.
Let $u_1,\dots,u_{\ell}$ be the sequence of vertices of the resulting $S$ in their order along $Q$. Denote with $Q'$ the curve that results from linearly interpolating $u_1,\dots,u_{\ell}$. Note that it is different from $Q$ only in the sections where we omitted the vertices of the subcurve $Q[\alpha,\beta]$ in case (C2).
We claim that $Q'$ is an edge-range-preserving $\delta$-simplification of $Q$. To see this, consider a subcurve $Q[s,s']$, assume we are in case (C2). By construction,  the subcurve $Q[\alpha,\beta]$ is range-preserving (for all $x \in [\alpha,\beta]$ we have $Q(x) \in \overline{Q(\alpha) Q(\beta)}$). Let $P[t,t']$ be a subcurve of $P$ mapped to $Q[\alpha,\beta]$ under $\phi$. Now, Lemma~\ref{lemma:shortcut} applied to the subcurves $P[t,t']$ and $Q[\alpha,\beta]$ implies that $\overline{Q(\alpha)Q(\beta)}$ is an  edge-range-preserving $\delta$-simplification of $Q[\alpha,\beta]$ with $\df(\overline{Q(\alpha)Q(\beta)},P[t,t']) \leq \delta$. 
Therefore, by Observation~\ref{observation:concatenation},  when removing all vertices of $Q$ in the parameter range $(\alpha,\beta)$ for each such maximal subcurve $Q[s,s']$, we obtain a $\delta$-\straightening $Q'$ of $Q$ with $\df(Q',P)\leq \delta$. 

Finally, we argue that the constructed visiting order $\kappa(u_1), \dots, \kappa(u_{\ell})$ is an $11\delta$-visiting order of $Q'$ on $P$. Clearly it is monotonically increasing by construction. Also, it is clear that any vertex added in the first phase is contained in the $3\delta$-range of its assigned vertex of $P$. It remains to argue for any vertex added to $S$ in the second phase, that it is contained in the $11\delta$-range of its assigned vertex in $P$.
Consider a subcurve $Q[s,s']$ from above and assume we are in case (C1).  We have that $Q(s) \in B(v_{\kappa(w)},3\delta)$ and $Q(s') \in B(v_{\kappa(w')},3\delta)$. By the case distinction, these two ranges are not disjoint. Therefore, the subcurve starts and ends in the $9\delta$-range of the assigned vertex $v_{\kappa(w)}$.
Moreover, by Lemma~\ref{lemma:notvisiting}, $Q[s,s']$ has to be $2\delta$-monotone. This implies that the entire subcurve lies in the $11\delta$-range of $v_{\kappa(w)}$ and this is also the vertex that we assigned to all of its inner vertices.
A similar argument can be applied in case (C2). By the way we chose $\alpha$, we have that $Q(\alpha)$ is contained in the $3\delta$-range of $v_{\kappa(w)}$, which is also the vertex assigned to the entire subcurve. Since also the subcurve $Q[s,\alpha]$ is $2\delta$-monotone, all remaining vertices in the range $[s,\alpha]$ are contained in the $5\delta$-range of the same vertex. A symmetric argument can be applied to show that all remaining vertices in the range $[\beta, s']$ are contained in the $5\delta$-range of their assigned vertex. 
\end{proof}

Finally, we also prove Lemma~\ref{lemma:signatures3}  from Section~\ref{section:lemmas}.  

\lemmasignatures*
\begin{proof}
By the triangle inequality we have that $\df(P',Q) \leq \df(P',P)+\df(P,Q) \leq 2\delta$. Now Lemma~\ref{lemma:signatures2} applied to $P'$ and the $2\delta$-signature of $Q$ implies that there exists a $2\delta$-visiting order of $Q'$ on $P'$. 

It remains to argue that $\df(P',Q') \le 3\delta$.  Let $\phi:[0,1] \rightarrow [0,1]^2$ be a $\delta$-traversal of $P$ and $Q$. Consider an edge $X$ of $Q'$ and let $Q[\alpha,\beta]$ be the subcurve of $Q$ that corresponds to $X$. 
Let $P[\alpha',\beta']$ be a subcurve of $P$ that is mapped to $Q[\alpha,\beta]$ under $\phi$.
By the triangle inequality  
\[ \df(P[\alpha',\beta'],X) \leq \df(P[\alpha',\beta'],Q[\alpha,\beta]) +  \df(Q[\alpha,\beta]),X)  \leq 3\delta \] 

Assume that $P'$ is range-preserving for now (we will treat the general case below) and let $P'[\alpha'',\beta'']$ be the corresponding subcurve of $P'$ starting at $P(\alpha')$, ending at $P(\beta')$, and with inner vertices being the $\delta$-signature vertices of $P$ in the parametrization interval $[\alpha',\beta']$. Note that $P'[\alpha'',\beta'']$ is well-defined since $P'$ is a range-preserving as assumed above.
By Observation~\ref{observation:shortcut} it follows that 
$\df(P'[\alpha'',\beta''],X) \le 3\delta$.
To show the claim for the case of range-preserving $P'$, we now want to use Observation~\ref{observation:concatenation} to concatenate the corresponding subcurves of $P'$ and $Q'$ and obtain that $\df(P',Q') \le 3\delta$. For this, we can choose the values of $\alpha'$ and $\beta'$ in the above argument such that we obtain a decomposition of $P$ into subcurves. Concretely, let $X_1,\dots,X_s$ be the edges of $Q'$ in their order along $Q'$, with $X_i=\overline{Q(\alpha_i) Q(\beta_i})$.
Then, we can choose the corresponding subcurves of $P$ as $P[\alpha_i',\beta_i']$, with \[ \alpha_{i-1}' \leq \beta_{i-1}'=\alpha_i' \leq \beta_i' \]
for any $1 < i \leq s$, with $\alpha'_1 = 0$ and $\beta'_s = 1$. Thus, we obtain a decomposition of $P$.
Now, if $P'$ is a range-preserving simplification of $P$, then the above construction induces a decomposition of $P'$ into subcurves $P'[\alpha_i'',\beta_i'']$ and we can apply Observation~\ref{observation:concatenation}.

As noted above, $P'$ is not necessarily range-preserving on all edges since it is a signature. In particular, it may not be range-preserving on the first  edge (or the  last edge, or neither). This could lead to $P(\alpha_2')$ (resp. $P(\alpha_s')$ for the last edge) not being included in the image of the signature edge of $P'$ that corresponds to the subcurve of $P$ containing $\alpha_2'$ (resp., $\alpha_s'$). 
Note that if $P(\alpha_2')$ is not contained in the image of the first signature edge, then it must be that $ |P(\alpha_2') - P(0)| \leq \delta $, and in fact, it must be that this holds for the entire subcurve, that is  $|P(t) - P(0)| \leq \delta $ for any $t \in [0,\alpha_2']$.
We claim that in this case we can simply set $\alpha_2''$, and $\beta_1''$ to $0$ (resp., we can set  $\beta_{s-1}''$, and $\alpha_s''$ to $1$). We argue that this way of choosing the decomposition leads to $\df(P'[\alpha_1'',\beta_1''],X_1) \leq 3\delta$ and $\df(P'[\alpha_2'',\beta_2''],X_2) \leq 3\delta$ so that the above arguments can be applied (for the last two edges of $Q'$ a symmetric argument can be applied and we will omit the explicit analysis).

By the triangle inequality, we have that
\[ |P'(0)-Q(\alpha_2)|\leq |P(0)-P(\alpha_2')|+| P(\alpha_2')-Q(\alpha_2)| \leq 2\delta.\] 
Together with \[|P'(0)-Q(\alpha_1)| = |P(0)-Q(0)| \leq \delta \] this implies by Observation~\ref{observation:linesegment} that $\df(P'(0), X_1) \leq 3\delta$ since $X_1$ is a line segment and $X_1=\overline{Q(0)Q(\alpha_2)}$. 
Applying the triangle inequality again, we obtain for any $t \in [0,\alpha_2']$ that 
\[ |P(t) - Q(\alpha_2)| \leq |P(t)-P(0)| + |P(0)-Q(\alpha_2)| \leq 3\delta\] 
By Observation~\ref{observation:concatenation} and since $X_2=\overline{Q(\alpha_2)Q(\beta_2)}$, this implies that \[
\df(P[0,\beta_2'],X_2) \leq  \max \left(~
\df(P[0,\alpha_2'],Q(\alpha_2))~,~  \df(P[\alpha_2',\beta_2'], \overline{Q(\alpha_2)Q(\beta_2)})~ \right) \leq 3\delta\]
By Observation~\ref{observation:shortcut} it follows that 
$\df(P'[0,\beta_2''],X_2) \le 3\delta$.


\end{proof}

\section{Lower Bounds}\label{section:lowerbounds}

In this section we show several conditional lower bounds for $(2-\eps)$ and $(3-\eps)$-approximate nearest neighbor data structures. We use the well-known Orthogonal Vectors problem as the problem that we base our hardness results on.

\begin{Definition}[Orthogonal Vectors (OV)]
Given two sets of vectors $A, B \subseteq \{0,1\}^d$, do there exist two vectors $a \in A, b \in B$ such that $\langle a, b \rangle = 0$?
\end{Definition}

\begin{Definition}[Orthogonal Vectors Hypothesis (OVH)]
For all $\eps > 0$ there exists a $c > 0$ such that there is no algorithm solving OV instances $A, B \subset \{0,1\}^d$ with $|A| = |B|$ and $d = c \log |A|$ in time $\Oh(|A|^{2-\eps})$.
\end{Definition}

The above hypothesis is also sometimes called the Low-Dimensional Orthogonal Vectors Hypothesis and it is implied by the Strong Exponential Time Hypothesis \cite{Williams05}.
We use this version of the Orthogonal Vectors Hypothesis as it allows us to rule out running times using an arbitrarily small $\eps$ while still reducing from an instance where vectors have a logarithmic dimension.
It is well known that \emph{balanced} OV with sets of the same size is equally hard as \emph{unbalanced} OV \cite{AbboudW14, BringmannK18}.

\begin{lemma}[Unbalanced Orthogonal Vectors Hypothesis] \label{lem:unbalancedOVH}
Assume OVH holds true. For every $\alpha \in (0,1)$ and $\eps > 0$ there exists a $c > 0$ such that there is no algorithm solving OV instances $A, B \subset \{0,1\}^d$ with $|B| = |A|^\alpha$ and $d = c \log |A|$ in time $\Oh(|A|^{1+\alpha-\eps})$.
\end{lemma}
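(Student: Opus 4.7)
The plan is to prove this by a standard partitioning reduction from balanced OV (as guaranteed by OVH) to the unbalanced version. Given any $\alpha \in (0,1)$ and $\eps > 0$, I would choose $c$ to be precisely the constant $c_0$ supplied by OVH for the same $\eps$, and argue by contradiction: if an algorithm solved unbalanced OV with $|B|=|A|^\alpha$ and $d=c\log|A|$ in time $O(|A|^{1+\alpha-\eps})$, then balanced OV in dimension $c_0 \log N$ would be solvable in time $O(N^{2-\eps})$, contradicting OVH.

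First I would set up the reduction. Take a balanced OV instance $A_0, B_0 \subseteq \{0,1\}^d$ with $|A_0|=|B_0|=N$ and $d = c_0 \log N$. Set $M := N$ and partition $A_0$ into $\lceil N^{1-\alpha} \rceil$ blocks $A_0^{(1)}, \ldots, A_0^{(t)}$, each of size at most $\lceil N^\alpha \rceil = \lceil M^\alpha \rceil$; pad the last block with copies of an arbitrary vector so that every block has size exactly $\lceil M^\alpha\rceil$. For each block form an unbalanced instance with $A := B_0$ (of size $M$) and $B := A_0^{(i)}$ (of size $\lceil M^\alpha\rceil$), in dimension $d = c_0 \log M = c \log M$. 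A pair $(a,b)$ with $a\in A_0, b\in B_0$ is orthogonal if and only if some subinstance contains it, so the disjunction of the $t$ unbalanced answers solves balanced OV on $(A_0,B_0)$.

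Next I would bound the running time. Running the hypothesized unbalanced algorithm on each of the $t = O(N^{1-\alpha})$ subinstances costs $O(M^{1+\alpha-\eps}) = O(N^{1+\alpha-\eps})$ per subinstance, for a total of
\[
O\bigl(N^{1-\alpha}\cdot N^{1+\alpha-\eps}\bigr) \;=\; O(N^{2-\eps}).
\]
This contradicts OVH with the chosen constant $c_0$, which completes the proof.

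I do not expect any real obstacle here; the only mildly delicate points are (i) ensuring the dimension constant $c$ in the unbalanced statement matches the $c_0$ from OVH, which is immediate because $M = N$ and so $c \log M = c_0 \log N$, and (ii) handling the fact that the block sizes need not be exactly $M^\alpha$, which is a trivial padding argument. The reduction is essentially the same idea used, e.g., in Bringmann--K\"unnemann and Abboud--Williams to derive unbalanced OV-hardness from the balanced version.
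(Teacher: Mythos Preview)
Your proposal is correct and follows essentially the same partitioning argument as the paper's proof sketch: split the balanced instance into $O(N^{1-\alpha})$ unbalanced subinstances and sum the running times to get $O(N^{2-\eps})$. Your version is actually more careful than the paper's sketch in handling the choice of the constant $c$ and the padding of block sizes, but the underlying idea is identical.
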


\begin{proof}[Proof sketch]
We briefly outline why this hardness holds. To that end, assume that we can solve the unbalanced case in time $\Oh(|A|^{1+\alpha-\eps})$ for some $\eps > 0$. Then we could solve the balanced case by splitting $B$ into $|A|^{1-\alpha}$ parts of size $|A|^\alpha$, solve these instances in time $\Oh(|A|^{1+\alpha-\eps})$, and thus solve the balanced problem in time $\Oh(|A|^{1-\alpha} \cdot |A|^{1+\alpha-\eps}) = \Oh(|A|^{2-\eps})$.
\end{proof}
Leveraging this insight, we later reduce from unbalanced OV instances to show stronger hardness results.
For convenience, we introduce some additional notation. For a vector $a \in \{0,1\}^d$, we use $a[i]$ to refer to its $i$th entry, where the entries are 0-index, i.e., $a = (a[0], \dots, a[d-1])$. Recall that we use the \enquote{$\circ$} operator to concatenate curves and that the curve $P$ where each point is translated by $\tau$ is denoted as $P + \tau$.

Instead of reducing directly from OV, we introduce a novel problem called \ossov and show that it is hard under OV. Subsequently, we reduce from this problem to the ANN problems introduced above.

\subsection{\ossov}
This problem can be thought of as a variant of OV with an additional restriction on one of the input sets. More precisely, for one set we  allow at most $k$ non-zero entries in each vector.
\begin{Definition}[\ossov]
Given a value $k \in \mathbb{N}$ and two sets of vectors $A, B \subseteq \{0,1\}^d$ where each $a \in A$ contains at most $k$ non-zero entries, do there exist two vectors $a \in A, b \in B$ such that $\langle a, b \rangle = 0$?
\end{Definition}

We also refer to \ossov with parameter $k$ as $\ossov(k)$. We now show that this problem is hard under OV, interestingly, this is already the case for $k \in \omega(1)$.

\begin{lemma}\label{lem:ossov}
Assume OVH holds true. For every $\alpha \in (0,1)$, $\eps > 0$ there is a $c > 0$ such that for any $k \in \omega(1) \cap o(\log |A|)$ there is no algorithm solving $\ossov(k)$ instances $A,B \subset \{0,1\}^d$ with $|B| = |A|^\alpha$ and $d = k \cdot |A|^{c/k}$ in time $\Oh(|A|^{1+\alpha-\eps})$. 
\end{lemma}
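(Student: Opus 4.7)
The plan is to reduce from the unbalanced OV problem of Lemma~\ref{lem:unbalancedOVH}. Given $\alpha\in(0,1)$ and $\eps>0$, invoke the lemma with parameters $(\alpha, \eps/2)$ to obtain a constant $c_0$, and set $c:=c_0$. Consider an OV instance $A',B'\subseteq\{0,1\}^{d'}$ with $|B'|=|A'|^{\alpha}$ and $d'=c_0\log|A'|$ that cannot be solved in $\Oh(|A'|^{1+\alpha-\eps/2})$ time. We will transform this into an equivalent $\ossov(k)$ instance with $|A|=|A'|$, $|B|=|A|^{\alpha}$, and dimension $d=k\cdot|A|^{c/k}$, matching the form required by the lemma.

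The encoding is a standard ``one-hot per block'' trick. Split each $d'$-dimensional vector into $k$ consecutive blocks of length $d'/k$ (pad $d'$ to the nearest multiple of $k$ if needed; this at most doubles $d'$ and can be absorbed by enlarging $c_0$). For $a=(a_1,\dots,a_k)\in A'$ define $f(a)\in\{0,1\}^{k\cdot 2^{d'/k}}$ as the concatenation of $k$ subblocks of length $2^{d'/k}$, where the $i$-th subblock is the one-hot indicator of the pattern $a_i$; hence $f(a)$ has exactly $k$ ones. For $b=(b_1,\dots,b_k)\in B'$ define $g(b)$ whose coordinate at position $(i,y)$ with $y\in\{0,1\}^{d'/k}$ equals $1$ iff $\langle y,b_i\rangle>0$. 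Set $A:=f(A')$ and $B:=g(B')$. Correctness follows from the identity $\langle f(a),g(b)\rangle=\sum_{i=1}^{k}[\langle a_i,b_i\rangle>0]$, which vanishes iff $\langle a_i,b_i\rangle=0$ for all $i$, iff $\langle a,b\rangle=0$.

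The resulting dimension is exactly $k\cdot 2^{d'/k}=k\cdot|A|^{c_0/k}=k\cdot|A|^{c/k}$, and every vector in $A$ has precisely $k$ ones, fulfilling the $\ossov(k)$ restriction. For the running time, $A$ admits a sparse representation built in $\Oh(|A|\cdot k)$ time, and $B$ is built in $\Oh(|B|\cdot d\cdot d'/k)=\Oh(|A|^{\alpha+c_0/k}\log|A|)$ time. Because $k\in\omega(1)$ forces $c_0/k=o(1)$, the total reduction cost is $o(|A|^{1+\alpha-\eps/2})$ for sufficiently large $|A|$. Therefore any $\ossov(k)$ algorithm running in $\Oh(|A|^{1+\alpha-\eps})$ would yield an $\Oh(|A|^{1+\alpha-\eps/2})$-time algorithm for OV, contradicting the chosen invocation of Lemma~\ref{lem:unbalancedOVH}.

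The main obstacle is the bookkeeping around parameter dependencies: $k$ is a function of $|A|$, while the constant $c$ in the target statement must depend only on $\alpha$ and $\eps$. This works cleanly because $c=c_0$ is supplied by the unbalanced OVH independently of $k$, and the only $k$-dependent overhead $|A|^{c_0/k}$ is $|A|^{o(1)}$ for every $k\in\omega(1)$ and hence absorbed by the $\eps/2$ slack we reserved when invoking the hypothesis. The restriction $k\in o(\log|A|)$ is used only to guarantee that $|A|^{c/k}$ is super-constant, which is what makes the resulting hardness interesting in the regime targeted by Theorem~\ref{thm:2minusepshard1d}.
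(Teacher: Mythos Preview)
Your proof is correct and follows essentially the same approach as the paper: split each $d'$-dimensional OV vector into $k$ blocks, one-hot encode the $A$-side per block to obtain $k$-sparse vectors, and encode the $B$-side by the indicator $[\langle y,b_i\rangle>0]$ over all patterns $y$; the correctness identity and the dimension $k\cdot 2^{d'/k}=k\cdot|A|^{c/k}$ are identical to the paper's. Your bookkeeping is in fact a bit tidier than the paper's (reserving an $\eps/2$ slack when invoking Lemma~\ref{lem:unbalancedOVH} so the $|A|^{o(1)}$ reduction cost is cleanly absorbed), but otherwise the arguments coincide.
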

\begin{proof}
For any $\alpha \le 1, \eps > 0$, let $c > 0$ be the constant from Lemma~\ref{lem:unbalancedOVH}. Thus, unless OVH fails, we cannot solve OV instances $A, B \subset \{0,1\}^d$ with $|B| = |A|^\alpha$ and $d = c \log |A|$ in time $\Oh(|A|^{1+\alpha-\eps})$. For any $k \in \omega(1) \cap o(\log |A|)$, we now reduce to $\ossov(k)$ as follows.
We convert $A$ to a \emph{set of sparse vectors} $A'$ and $B$ to a set $B'$ such that $A', B'$ is an equivalent $\ossov(k)$ instance. To achieve this, we increase the dimensionality of the vectors in the \ossov instance.
Given a vector $a \in A$, partition the dimensions of $a$ into $k$ blocks of size $d/k$.\footnote{If $d$ is not divisible by $k$, increase the dimension until this is the case and fill these dimensions with zeros.}
More precisely, let
\[
	a_i = \left(a\left[i \cdot \frac{d}{k}\right],\, a\left[i \cdot \frac{d}{k} + 1\right],\, \dots,\, a\left[i \cdot \frac{d}{k} + \frac{d}{k}-1\right]\right)
\]
for $i \in \{0,\dots,k-1\}$.
Let $\hat{a}_i \in \left[2^{d/k}\right]$ be defined as the binary vector $a_i$ interpreted as a binary number.
We now construct the corresponding $a' \in A'$ as follows.
We choose the dimension of the vectors in $A', B'$ as $d' = k \cdot 2^{d/k}$ --- note that this equals $k \cdot |A|^{c/k}$ as stated in the lemma.
For each $i \in \{0,\dots,k-1\}$, we set $a'[i \cdot 2^k + \hat{a}_i] = 1$.
All other entries of $a'$ are set to 0.
Thus, each vector $a' \in A'$ contains exactly $k$ $1$-entries.
The vectors $b' \in B'$ we construct as follows.
Given a vector $b \in B$, we also partition its dimensions the same way as we did for $a \in A$ and obtain vectors $b_0, \dots, b_{k-1}$.
For each $i \in \{0,\dots,k-1\}$ and all $\beta \in \{0,1\}^{d/k}$ --- where we again use $\hat{\beta}$ to denote $\beta$ being interpreted as a binary number --- we set $b'[i \cdot 2^k + \hat{\beta}] = 1$ if $\langle b_i, \beta \rangle > 0$, otherwise we set it to zero. This completes the description of the reduction. Note that while we changed the dimension of the vectors, the size of the sets remained the same, that is $|A'| = |A|$ and $|B'| = |B|$.

Note that for any vectors $a \in A$ and $b \in B$ with $\langle a, b \rangle > 0$ there exist parts $a_i, b_i$ and a coordinate $\ell$ such that $a_i[\ell] = b_i[\ell] = 1$, and thus $\langle a_i,b_i \rangle > 0$. Hence, by construction of $b'$, there exists a dimension in $a'$ and $b'$ where both have a 1. On the other hand, if $a'$ and $b'$ contain a 1 in the same dimension, then by construction of $b'$ there have to be two parts $a_i, b_i$ such that $\langle a_i,b_i \rangle > 0$ and thus $\langle a,b \rangle > 0$.

The total running time of this reduction consists of constructing the vectors in $A'$ --- which takes time proportional to the number of entries --- and the inner product computation between vectors of dimensionality $d/k$ for each of the $k \cdot 2^{d/k}$ dimensions of each vector in $B'$:
\[
\Oh\left(|A'| \cdot k \cdot 2^{d/k} + |B'| \cdot k \cdot 2^{d/k} \cdot \frac{d}{k}\right) = \Oh\left(|A| \cdot 2^{c \log |A|/k} \cdot c \log |A|\right) = \Oh\left( |A|^{1 + c/k} \cdot c \log |A|\right),
\]
which simplifies to $|A|^{1+o(1)}$ as $k \in \omega(1)$ and $\log |A| = \Oh(|A|^{o(1)})$.
Thus, if indeed we can solve $\ossov(k)$ in time $\Oh(|A'|^{1+\alpha-\eps})$ and add the running time of the reduction, then we can solve unbalanced OV in time
\[
    \Oh(|A'|^{1+\alpha-\eps}) + |A|^{1+o(1)} = \Oh(|A|^{1+\alpha-\eps}),
\]
which would refute OVH.
\end{proof}

Using this insight, we now proceed to proving hardness results for different approximation ratios for ANN under the continuous Fréchet distance.

\subsection{\boldmath Hardness of $(2-\eps)$-Approximation in 1D}

In this section we present our first hardness result. We note that the gadgets that we use to encode our vectors are inspired by \cite{DP20}.


\begin{theorem}\label{thm:2minusepshard1d}
Assume OVH holds true. For any $\eps,\eps'>0$ there is a $c > 0$, such that there is no $(2-\eps)$-ANN for the continuous Fréchet distance supporting query curves of any complexity $k \in \omega(1) \cap o(\log n)$ and storing $n$ one-dimensional curves of complexity $m = k \cdot n^{c/k}$ with preprocessing time $\poly(n)$ and query time $\Oh(n^{1-\epsilon'})$.
\end{theorem}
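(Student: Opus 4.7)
The plan is to reduce from $\ossov$ (Lemma~\ref{lem:ossov}) via a carefully designed gadget construction. Given $\eps,\eps' > 0$, I will first fix parameters $\alpha \in (0,1)$ and then choose the $\eps$-constant inside Lemma~\ref{lem:ossov} to be some $\eps_0 < \alpha\eps'$. Lemma~\ref{lem:ossov} supplies a constant $c_0$ such that, assuming OVH, no algorithm solves $\ossov(k)$ on inputs $A,B \subseteq \{0,1\}^d$ with $|B| = |A|^\alpha$ and $d = k \cdot |A|^{c_0/k}$ in time $\Oh(|A|^{1+\alpha-\eps_0})$. Setting $n := |B|$ gives $|A| = n^{1/\alpha}$ and $d = k \cdot n^{c/k}$ with $c := c_0/\alpha$, matching the complexity $m = \Theta(d)$ claimed in the theorem.

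The second step is to design the gadgets. For each dense vector $b \in B$ I will construct a one-dimensional polygonal curve $\beta(b)$ of complexity $\Theta(d) = \Theta(m)$ whose vertex sequence encodes all $d$ coordinates of $b$ through peaks/valleys, separated by a common baseline. Crucially, for each sparse vector $a \in A$ (with at most $k$ ones at positions $j_1 < \dots < j_{k'}$, $k' \le k$), I will construct a query curve $\alpha(a)$ of complexity at most $k$ that only encodes the positions of the $1$-bits of $a$, skipping over zeros entirely -- this is the key to avoiding one vertex per dimension and thereby achieving $k \ll \log n$. The design will be calibrated so that
\begin{compactitem}
\item if $\langle a,b\rangle = 0$, then $\df(\alpha(a),\beta(b)) \le \delta$, because every $1$-peak of $\alpha(a)$ can be aligned with a matching, zero-valued portion of $\beta(b)$; and
\item if $\langle a,b\rangle \ne 0$, then $\df(\alpha(a),\beta(b)) \ge (2-\eps)\delta$, because any monotone traversal is forced to match at least one $1$-peak of $\alpha(a)$ against a corresponding $1$-gadget of $\beta(b)$, incurring a distance approaching $2\delta$.
\end{compactitem}

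Having the gadgets in place, the third step is to invoke the hypothetical $(2-\eps)$-ANN. I preprocess the set $\{\beta(b) : b \in B\}$ of $n$ input curves and then issue one query $\alpha(a)$ for each $a \in A$. A query returns some $\beta(b) \in \PPP$ iff there exists $b \in B$ with $\df(\alpha(a),\beta(b)) \le (2-\eps)\delta$, which by the gap property holds iff $\langle a,b\rangle = 0$. Thus $n_A$ queries decide $\ossov(k)$. The total running time is $\poly(n) + |A|\cdot \Oh(n^{1-\eps'}) = \poly(|A|^\alpha) + \Oh(|A|^{\alpha(1/\alpha + 1 - \eps')}) = \Oh(|A|^{1+\alpha - \alpha\eps'})$, and the choice $\eps_0 < \alpha\eps'$ makes this strictly faster than the $\ossov$ lower bound from Lemma~\ref{lem:ossov}, contradicting OVH.

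The main obstacle is the gadget design in the one-dimensional setting. Unlike the higher-dimensional case, we cannot embed the coordinate index as a spatial offset; the position of each bit must be witnessed solely through the temporal ordering of vertices along the curve. The critical asymmetry required, namely that $\alpha(a)$ has only $k$ vertices while $\beta(b)$ has $\Theta(d)$, means that the query curve cannot contain any explicit marker for the zero-coordinates, and so the correctness of the matching in the orthogonal case must follow from the \emph{flexibility} of the continuous Fréchet distance to reparametrize $\beta(b)$ and absorb all zero positions into baseline segments. Balancing this flexibility against the rigidity needed to force the $(2-\eps)$ gap in the non-orthogonal case is the delicate part; gadgets inspired by the ones in~\cite{DP20} (appropriately modified to encode the positions of ones sparsely) will be used for this purpose.
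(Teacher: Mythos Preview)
Your proposal follows essentially the same approach as the paper: reduce from $\ossov(k)$ with $|B|=|A|^\alpha$, build input curves from the dense set $B$ and short query curves from the sparse set $A$, and contradict Lemma~\ref{lem:ossov} by choosing $\alpha$ small enough that $\poly(n)$ preprocessing becomes $\Oh(|A|)$.

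One correction worth flagging: your remark that in one dimension ``we cannot embed the coordinate index as a spatial offset'' is the opposite of what the paper does and is the main reason the gadget design is easy rather than hard. The paper translates the $i$th coordinate gadget by $+6i$, so the curves are globally increasing and the $0_A$ gadget is the bare segment $\seqtocurve{0,6}$; after translation, any run of consecutive $0_A$'s becomes a single monotone segment once degenerate vertices are removed, which is exactly why the query curve has only $\Oh(k)$ vertices. If you instead try a ``return to a common baseline'' construction as your text suggests, the zero gadgets would \emph{not} collapse and you would not get complexity $\Oh(k)$. Also, make explicit that $\alpha$ must be chosen as $1/\alpha'$ where $\alpha'$ is the degree of the assumed $\poly(n)$ preprocessing bound; otherwise your step $\poly(|A|^\alpha)=\Oh(|A|^{1+\alpha-\alpha\eps'})$ is unjustified.
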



\begin{proof}
	We show the hardness by a reduction from $\ossov(k)$. To that end, let $A, B \subset \{0,1\}^d$ be a $\ossov(k)$ instance with $|B| = |A|^\alpha$ for a constant $\alpha \leq 1$ that we specify later, $k \in \omega(1) \cap o(\log |A|)$, and $d = k \cdot |A|^{c/k}$ with a constant $c > 0$ that we later choose sufficiently large.
Recall that, by Lemma \ref{lem:ossov}, there exists a $c > 0$ such that $\ossov(k)$ is OV-hard in this regime.
The goal is to use the $k$-sparsity of the vectors in $A$ to obtain short query curves of length $\Oh(k)$.

Let us first give the reduction. To that end, we define the following subcurves:
\[
	0_A \coloneqq \seqtocurve{0,6}, \quad
	1_A \coloneqq \seqtocurve{0,6,2,6}, \quad
	0_B \coloneqq \seqtocurve{0,5,3,6}, \quad
	1_B \coloneqq \seqtocurve{0,6}
\]
Now, given a $\ossov(k)$ instance $A,B$, we create the input set $\mathcal{P}$ and the query set $\mathcal{Q}$ of a $(2-\eps)$-ANN instance with distance threshold $\delta = 1$ as follows. For each vector $a \in A$, we add the curve $Q_a$ to $\mathcal{Q}$ which is defined as
\[
	Q_a \coloneqq \mathop{\bigcirc}_{i=0}^{d-1} \; V_a^i \quad \text{ with } \quad V_a^i \coloneqq a[i]_A + 6i,
\]
where $a[i]_A$ is either $0_A$ or $1_A$, depending on the value of $a[i]$, and the \enquote{$+6i$} is a translation of each point of the curve by $6i$. For each vector $b \in B$, we add the curve $P_b$ to $\mathcal{P}$ which is defined as
\[
	P_b \coloneqq \mathop{\bigcirc}_{i=0}^{d-1} \; V_b^i \quad \text{ with } \quad V_b^i \coloneqq b[i]_B + 6i,
\]
where $b[i]_B$ is either $0_B$ or $1_B$, depending on the value of $b[i]$.
It is crucial that we make the resulting curves non-degenerate by removing all degenerate vertices. In particular, all connecting vertices between gadget curves will be removed and any sequence of consecutive gadgets $0_A$ will be turned into a single line segment. Thus, the curves in $\mathcal{Q}$ will have complexity $\Oh(k)$. See Figure \ref{fig:2minuseps_lb_1d} for an example of the construction.
\begin{figure}
\centering
\includegraphics[scale=1.1]{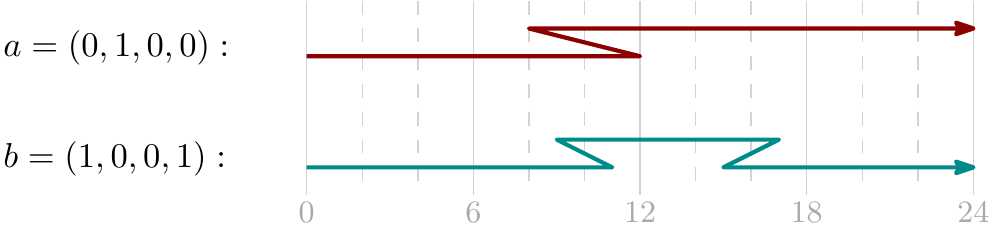}
\caption{Visualization of the $2-\eps$ lower bound in 1D.}
\label{fig:2minuseps_lb_1d}
\end{figure}

We now show correctness of the reduction. Let $P_b \in \mathcal{P}$ and $Q_a \in \mathcal{Q}$ be any curves in these sets. Note that if $\df(P_b, Q_a) < 2$, then if the traversal is a distance $2$ into the gadget $V_a^i$, then we also have to be in the gadget $V_b^i$, as there is no other gadget in distance less than $2$. The same statement holds for $V_a^i$ and $V_b^i$ exchanged. Thus, we traverse the gadgets synchronously. Now consider the case $\langle a,b \rangle = 0$. As $\df(0_A, 0_B) = \df(0_A, 1_B) = \df(1_A, 0_B) = 1$, also $\df(P_b, Q_a) = 1$, as there is no $i \in \{0,\dots,d-1\}$ for which the gadget $V_a^i$ is of type $1_A$ and $V_b^i$ is of type $1_B$. Conversely, consider the case $\langle a,b \rangle = 1$. Then there exist an $i \in \{0,\dots,d-1\}$ such that $V_a^i$ is of type $1_A$ and $V_b^i$ is of type $1_B$. As we traverse the gadgets synchronously and as $\df(V_b^i, V_a^i) = 2$, we have $\df(P_b, Q_a) = 2$. Thus, if we have a $(2-\eps)$-ANN, then we can use it to check if there exist orthogonal vectors $a \in A$ and $b \in B$ by the above reduction.

It remains to show that this reduction implies the claimed lower bound. The time to compute the reduction is linear in the output size and thus negligible.
Recall that $\mathcal{P}$ is the input set, i.e., it is the set that we preprocess, and we run a query for each curve in $\mathcal{Q}$.
Note that by the construction of the above reduction we have $|\mathcal{P}| = |A|^\alpha$, and $|\mathcal{Q}| = |A|$. 
Towards a contradiction, assume that we can solve $(2-\eps)$-ANN with preprocessing time $\Oh(|\mathcal{P}|^{\alpha'})$ for some $\alpha' > 0$ and query time $\Oh(|\mathcal{P}|^{1-\eps'})$ for some $\eps' > 0$.
Choosing $\alpha = 1/\alpha'$, we obtain preprocessing time $\Oh(|\mathcal{P}|^{\alpha'}) = \Oh(|A|^{\alpha \alpha'}) = \Oh(|A|)$ and total query time
\[
	\Oh(|\mathcal{Q}| \cdot |\mathcal{P}|^{1-\eps'}) = \Oh(|A| \cdot |A|^{\alpha (1-\eps')}) = \Oh(|A|^{1 + \alpha - \eps' \alpha}).
\]
Thus, we could solve $\ossov(k)$ in time $\Oh(|A|^{1 + \alpha - \eps' \alpha})$. However, by Lemma \ref{lem:ossov}, there exists a $c > 0$ such that this contradicts OVH.
\end{proof}

\subsection{\boldmath Hardness of $(3-\eps)$-Approximation in 1D}

We now show the first of two hardness results that rule out certain preprocessing and query times for $(3-\eps)$-approximations. Note that ruling out higher approximation ratios is not possible using gadgets that encode the single coordinates, as the distance between the gadgets that encode 1-entries can be at most 3 times the threshold distance due to the triangle inequality between the other gadgets, for details see \cite{BuchinOS19}.
For one-dimensional curves we obtain the following lower bound.
We note that the gadgets that we use to encode our vectors are inspired by \cite{BuchinOS19}.
\begin{theorem}\label{thm:3minuseps_lb_1d}
Assume OVH holds true. For any $\eps,\eps'>0$ there is a $c > 0$, such that there is no $(3-\eps)$-ANN for the continuous Fréchet distance storing $n$ one-dimensional curves of complexity $m$ and supporting query curves of complexity $k$ with $m = k = c \log n$ such that we have preprocessing time $\poly(n)$ and query time $\Oh(n^{1-\epsilon'})$.
\end{theorem}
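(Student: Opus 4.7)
The plan is to reduce from unbalanced Orthogonal Vectors via a coordinate-by-coordinate gadget construction, in the spirit of the proof of Theorem~\ref{thm:2minusepshard1d} and of~\cite{BuchinOS19}. Given $\eps, \eps' > 0$, I pick a small constant $\alpha \in (0,1)$ (to be fixed later) and invoke Lemma~\ref{lem:unbalancedOVH} to obtain constants $c_0 > 0$ and $\eps_0 > 0$ such that, under OVH, no algorithm solves OV instances $A, B \subset \{0,1\}^d$ with $|A| = N$, $|B| = N^\alpha$, $d = c_0 \log N$ in time $\Oh(N^{1+\alpha-\eps_0})$. Setting $n := |B| = N^\alpha$, each $b \in B$ will yield a one-dimensional input curve $P_b$ and each $a \in A$ a one-dimensional query curve $Q_a$, both of complexity $\Theta(d) = \Theta((c_0/\alpha)\log n)$, matching the required $m = k = c \log n$ for $c := c_0/\alpha$.

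The core is a family of constant-complexity one-dimensional subcurves $0_A, 1_A, 0_B, 1_B$ and a threshold $\delta := 1$ satisfying $\df(x_A, y_B) \le \delta$ whenever $xy = 0$ and $\df(1_A, 1_B) \ge 3\delta$. Unlike the single-spike construction of Theorem~\ref{thm:2minusepshard1d}, which only yields a factor-$2$ gap, here I exploit the rigidity of one-dimensional Fr\'echet matchings: when a deep ``down-and-up'' spike of one curve is matched against a monotone segment of the other, monotonicity forces the matching to lie near the bottom of the spike, giving Fr\'echet distance equal to half the spike depth. As a baseline, $\df(\seqtocurve{0,6,0,6},\seqtocurve{0,6}) = 3$; the four gadgets are arranged as in~\cite{BuchinOS19} so that all three ``zero-product'' pairs $(0_A,0_B), (0_A,1_B), (1_A,0_B)$ stay at distance at most $\delta$ while $\df(1_A, 1_B) \ge 3\delta$. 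Each gadget $V_a^i \in \{0_A, 1_A\}$ (resp.\ $V_b^i$) is placed in its own vertical slab by setting $Q_a := \bigcirc_{i=0}^{d-1}(V_a^i + \tau i)$ and $P_b := \bigcirc_{i=0}^{d-1}(V_b^i + \tau i)$ for a translation $\tau$ larger than any value occurring in a single gadget plus $6\delta$; this slab separation forces any traversal at distance below $3\delta$ to synchronize slab-by-slab, so that $\df(P_b, Q_a) = \max_i \df(V_a^i, V_b^i)$ via Observation~\ref{observation:concatenation}. Consequently $\df(P_b, Q_a) \le \delta$ iff $\langle a, b\rangle = 0$, and $\df(P_b, Q_a) \ge 3\delta$ otherwise.

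Given a $(3-\eps)$-ANN with preprocessing time $n^C = \poly(n)$ and per-query time $\Oh(n^{1-\eps'})$, I preprocess $\{P_b\}_{b\in B}$ once and issue one query per $a \in A$; the approximation ratio $3-\eps$ cannot bridge the $[\delta, 3\delta]$ gap, so the data structure answers ``no'' precisely when $a$ has no orthogonal partner in $B$, thus solving OV. The total running time is $N^{\alpha C} + \Oh(N \cdot n^{1-\eps'}) = N^{\alpha C} + \Oh(N^{1 + \alpha - \alpha\eps'})$. Choosing $\alpha$ small enough relative to $C$ makes the preprocessing term $N^{\alpha C}$ negligible, and taking $\eps_0 := \alpha\eps'/2$ in Lemma~\ref{lem:unbalancedOVH} places the query term strictly below the OVH barrier $\Oh(N^{1+\alpha-\eps_0})$, contradicting OVH.

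The main obstacle is the gadget design itself. The triangle inequality $\df(1_A, 1_B) \le \df(1_A, 0_B) + \df(0_B, 0_A) + \df(0_A, 1_B) \le 3\delta$ must be (almost) saturated, so neither family can consist solely of monotone segments as in the $(2-\eps)$ construction, and the four pairwise Fr\'echet distances must be balanced delicately so that only $\df(1_A, 1_B)$ reaches $3\delta$ while the other three remain at $\delta$. I would adapt the explicit construction of~\cite{BuchinOS19}, verifying each of the four pairwise Fr\'echet distances by exhibiting tight traversals, and confirming via Observation~\ref{observation:concatenation} that vertical slab separation $\tau$ chosen as above does indeed yield the reduction identity $\df(P_b, Q_a) = \max_i \df(V_a^i, V_b^i)$.
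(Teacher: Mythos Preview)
Your proposal is correct and follows the same strategy as the paper: reduce from unbalanced OV (Lemma~\ref{lem:unbalancedOVH}), encode each coordinate by a constant-size one-dimensional gadget with a $\{\delta,3\delta\}$ gap, build the input curves from $B$ and the query curves from $A$, and derive the contradiction by choosing $\alpha=1/\alpha'$ so that the preprocessing term becomes $\Oh(|A|)$. The only difference is cosmetic: the paper uses untranslated single-peak gadgets $1_A=\seqtocurve{0,6,0}$, $0_B=\seqtocurve{0,7,0}$, $0_A=\seqtocurve{0,8,0}$, $1_B=\seqtocurve{0,9,0}$ (so each pairwise Fr\'echet distance is just the peak-height difference, and synchronization is enforced because every gadget returns to $0$), whereas you place the gadgets in disjoint translated slabs as in Theorem~\ref{thm:2minusepshard1d}; both synchronization mechanisms work.
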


\begin{proof}
We show the hardness by a reduction from OV. To that end, let $A, B \subset \{0,1\}^d$ be an OV instance with $|B| = |A|^\alpha$ for a constant $\alpha \leq 1$ that we specify later and $d = c \log |A|$ for a constant $c > 0$ that we later choose sufficiently large. 
Recall that, by Lemma \ref{lem:unbalancedOVH}, there exists a $c > 0$ such that this problem is OV-hard.
We now create the input set $\mathcal{P}$ and query set $\mathcal{Q}$ of a $(3-\eps)$-ANN instance with distance threshold $\delta = 1$ as follows.
For convenience, we define the curves
\[
	1_A \coloneqq \seqtocurve{0,6,0},\quad 0_B \coloneqq \seqtocurve{0,7,0},\quad 0_A \coloneqq \seqtocurve{0,8,0},\quad 1_B \coloneqq \seqtocurve{0,9,0}.
\]
First, for each vector $a \in A$ we create a new curve $Q_a \in \mathcal{Q}$ defined as
\[
	Q_a \coloneqq \mathop{\bigcirc}_{i=0}^{d-1} \; V_a^i \quad \text{ with } \quad V_a^i \coloneqq a[i]_A,
\]
where $a[i]_A$ is either $0_A$ or $1_A$, depending on the value of $a[i]$.
Second, for each vector $b \in B$ we create a new curve $P_b \in \mathcal{P}$ defined as
\[
	P_b \coloneqq \mathop{\bigcirc}_{i=0}^{d-1} \; V_b^i \quad \text{ with } \quad V_b^i \coloneqq b[i]_B,
\]
where $b[i]_B$ is either $0_B$ or $1_B$, depending on the value of $b[i]$.
See Figure \ref{fig:3minuseps_lb_1d} for examples of these curves.
\begin{figure}
\centering
\includegraphics[scale=1.1]{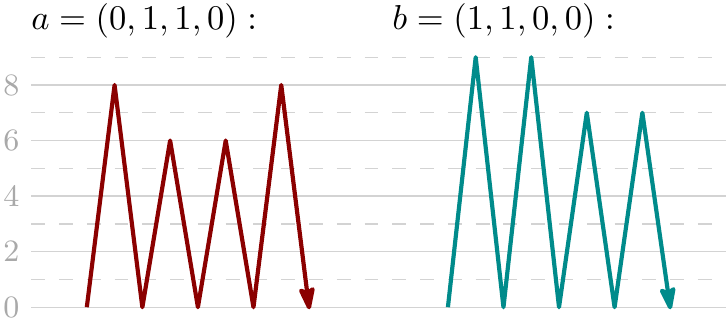}
\caption{Visualization of the $3-\eps$ lower bound in 1D.}
\label{fig:3minuseps_lb_1d}
\end{figure}

We now prove the correctness of the reduction. Consider any $Q_a \in \mathcal{Q}$ and $P_b \in \mathcal{P}$. We first show that if $\df(Q_a, P_b) < 3$, then any traversal realizing this distance has to visit vertices of both curves synchronously. More precisely, a traversal can be in the gadgets $V_a^i$ and $V_b^j$ with $i \neq j$ only if the positions on both curves are strictly less than $6$ in image space. Towards a contradiction, consider the first point in the traversal where this occurs and without loss of generality let the traversal be at position $6$ in $Q_a$. As the traversal on $Q_a$ visited $0$ before, the traversal on $P_b$ has to be below $3$ and thus the positions on $Q_a$ and $P_b$ are within distance more than $3$, which is a contradiction. Thus, when traversing gadgets $V_a^i$ and $V_b^j$ above $6$, then $i = j$.

We now proceed with showing that for all $a \in A$ and $b \in B$ it holds that $\df(Q_a,P_b) \leq 1$ if and only if $\langle a, b \rangle = 0$, and $\df(Q_a,P_b) \geq 3$ otherwise. Assume that $\langle a,b \rangle = 0$, then, by traversing all $V_a^i, V_b^i$ for $i \in \{0,\dots,d-1\}$ synchronously, they can always stay within distance at most 1, as $\df(0_A, 0_B) = \df(0_A, 1_B) = \df(1_A, 0_B) = 1$. However, if $\langle a,b \rangle > 0$, then there exists an index $i \in \{0,\dots,d-1\}$ such that $a[i] = b[i] = 1$. If $\df(Q_a,P_b) < 3$, then we have to traverse these $V_a^i$ and $V_b^i$ synchronously but as $\df(1_A, 1_B) = 3$, there is a point in the traversal where the curves have distance at least $3$ and thus $\df(Q_a,P_b) \geq 3$. It follows that, if we have a $(3-\eps)$-ANN, then it would find if there exists orthogonal vectors in $A$ and $B$ by querying each $Q \in \mathcal{Q}$.

Let us now show that this implies the desired lower bounds. The time to compute the reduction is linear in the output size and thus negligible.
Note that by construction we have $m = k = \Oh(c \log |A|) \le c' \log |A|$ for some constant $c' > 0$. By adding dummy vertices, say many points close to the starting point, we can ensure $m = k = c' \log |A|$ (we could also achieve any intended value $m \ge k$, but this is not necessary for the theorem statement). Moreover, $|\mathcal{P}| = |A|^\alpha$ and $|\mathcal{Q}| = |A|$.
Towards a contradiction, assume that we can solve $(3-\eps)$-ANN with preprocessing time $\Oh(|\mathcal{P}|^{\alpha'})$ for some $\alpha' > 0$ and query time $\Oh(|\mathcal{P}|^{1-\eps'})$ for some $\eps' > 0$.
Choosing $\alpha = 1/\alpha'$, we obtain preprocessing time $\Oh(|\mathcal{P}|^{\alpha'}) = \Oh(|A|^{\alpha \alpha'}) = \Oh(|A|)$ and total query time
\[
	\Oh(|\mathcal{Q}| \cdot |\mathcal{P}|^{1-\eps'}) = \Oh(|A| \cdot |A|^{\alpha (1-\eps')}) = \Oh(|A|^{1 + \alpha - \eps' \alpha}).
\]
Thus, we could solve unbalanced OV in time $\Oh(|A|^{1 + \alpha - \eps' \alpha})$. However, by Lemma \ref{lem:unbalancedOVH}, there exists a $c > 0$ such that this contradicts OVH.
\end{proof}

\subsection{\boldmath Hardness of $(3-\eps)$-Approximation in 2D}
While until here we only considered algorithmic and hardness results for one-dimensional curves, we now show a hardness result for \emph{two-dimensional} curves. This is the only technical section in this paper where we consider two-dimensional curves. Note that in Section \ref{section:prelims} we defined most of our notation for curves in $\RR^d$ and thus the notation of the previous hardness results carries over. For two-dimensional curves we obtain the following lower bound.
\begin{theorem}\label{thm:3minuseps_lb_2d}
Assume OVH holds true. For any $\eps,\eps'>0$ there is a $c > 0$, such that there is no $(3-\eps)$-ANN for the continuous Fréchet distance supporting query curves of any complexity $k \in \omega(1) \cap o(\log n)$ and storing $n$ two-dimensional curves of complexity $m = k \cdot n^{c/k}$ with preprocessing time $\poly(n)$ and query time $\Oh(n^{1-\epsilon'})$.
\end{theorem}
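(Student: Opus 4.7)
The plan is to mirror Theorem~\ref{thm:2minusepshard1d} and reduce from $\ossov(k)$ with $k\in\omega(1)\cap o(\log n)$, which is OV-hard by Lemma~\ref{lem:ossov} for a suitable constant $c$. As before, the sparsity of vectors in $A$ together with a straight-line $0_A$ gadget will ensure that consecutive $0_A$'s merge into a single segment after removing degenerate vertices, keeping $|Q_a|=\Oh(k)$. The upgrade from $(2-\eps)$ to $(3-\eps)$ comes from using $2$D gadgets (one extra degree of freedom) to widen the Fréchet gap from $2$ to $3$.

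With $\delta=1$ and the $i$-th gadget translated by $(6i,0)$, I would use
\[
0_A=\seqtocurve{(0,0),(6,0)}, \quad 1_A=\seqtocurve{(0,0),(2,2),(4,2),(6,0)},
\]
\[
0_B=\seqtocurve{(0,0),(2,1),(4,1),(6,0)}, \quad 1_B=\seqtocurve{(0,0),(2,-1),(4,-1),(6,0)}.
\]
A synchronous $x$-parametrization yields $\df(0_A,0_B)=\df(0_A,1_B)=\df(1_A,0_B)=1$. Hence if $\langle a,b\rangle=0$, the per-gadget matchings agree at the junctions $(6i,0)$ and concatenate, via Observation~\ref{observation:concatenation}, to give $\df(Q_a,P_b)\leq 1$. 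The complexity bounds are $|P_b|=\Oh(d)=\Oh(k\cdot n^{c/k})$ and $|Q_a|=\Oh(k)$ (each $1_A$ contributes $\Oh(1)$ new vertices, whereas runs of consecutive $0_A$'s collapse onto the $x$-axis).

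The substantive new content is to show $\df(Q_a,P_b)\geq 3$ whenever $\langle a,b\rangle>0$; pleasingly, this follows directly from the Hausdorff lower bound on Fréchet distance. Pick any index $i$ with $a[i]=b[i]=1$, and consider the plateau midpoint $p=(6i+3,2)$ on the $i$-th $1_A$ gadget of $Q_a$. I would verify that the closest point of $P_b$ to $p$ is at distance exactly $3$: within the $i$-th gadget (which is $1_B$), the minimum is attained at $(6i+3,-1)$ with distance $3$; every other gadget of $P_b$ is at $x$-distance $\geq 3$ from $p$, and in particular the nearest junction $(6(i\pm 1),0)$ is at distance $\sqrt{9+4}=\sqrt{13}>3$, with interior points of neighboring gadgets even farther. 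Thus $\sup_{q\in Q_a}\inf_{p'\in P_b}\|q-p'\|\geq 3$ and hence $\df(Q_a,P_b)\geq 3$. This one geometric calculation is the only place where the 2D structure and the specific plateau heights $2,-1$ really matter, and is the step I expect to require the most care.

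The remainder of the argument is essentially identical to Theorem~\ref{thm:2minusepshard1d}: a $(3-\eps)$-ANN distinguishes the two cases, so running the preprocessing once and one query per $a\in A$ would, with $|B|=|A|^\alpha$ for $\alpha=1/\alpha'$ and the assumed $\poly(n)$ preprocessing time and $\Oh(n^{1-\eps'})$ query time, solve $\ossov(k)$ in total time $\Oh(|A|^{1+\alpha-\eps'\alpha})$, contradicting Lemma~\ref{lem:ossov} once $c$ is chosen sufficiently large.
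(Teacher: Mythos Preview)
Your proof is correct and follows the same overall scheme as the paper (reduce from $\ossov(k)$, use a straight-line $0_A$ so that runs of zeros collapse, translate the $i$-th coordinate gadget by $(6i,0)$, and finish with the same unbalanced-OV time calculation). The difference is purely in the choice of gadgets and in how you certify the gap.

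The paper uses the step-shaped gadgets $V(y)=\seqtocurve{(0,0),(3,0),(3,y),(6,y),(6,0)}$ with $0_A=V(0)$, $1_A=V(2)$, $0_B=V(1)$, $1_B=V(-1)$, and argues the lower bound via a traversal argument: any traversal with cost $<3$ must at some moment place $Q_a$ at height $2$ and $P_b$ at height $-1$, and the $y$-gap then forces distance $\ge 3$. For those gadgets a pure Hausdorff argument would \emph{not} work, since e.g.\ the junction $(6i{+}6,0)$ is at distance $<3$ from points on the $1_A$ plateau. Your trapezoidal gadgets are designed so that the plateau midpoint $(6i{+}3,2)$ of $1_A$ is at Euclidean distance $\ge 3$ from every point of $P_b$; this lets you replace the traversal argument by the one-line Hausdorff lower bound, which is a genuinely simpler route for this step. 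The upper-bound checks $\df(0_A,0_B)=\df(0_A,1_B)=\df(1_A,0_B)=1$ go through with the synchronous $x$-parametrisation exactly as you say.

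One small slip: the ``nearest junctions'' to $p=(6i{+}3,2)$ are $(6i,0)$ and $(6(i{+}1),0)$, not $(6(i{\pm}1),0)$; your computed distance $\sqrt{13}$ is correct for these, and in any case your coarser observation that every other gadget has $x$-distance $\ge 3$ from $p$ already suffices.
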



\begin{proof}
	This proof is very similar to the proof of Theorem \ref{thm:2minusepshard1d}. The significant difference is the gadgets that we construct. To this end, consider a $\ossov(k)$ instance $A,B$, where we again use the $k$-sparsity of the vectors in $A$ to obtain short query curves of length $\Oh(k)$.
	We define the generic subcurve
\[
	V(y) \coloneqq \seqtocurve{(0,0), (3,0), (3,y), (6,y), (6,0)}
\]
to then define the usual gadgets
\[
	0_A \coloneqq V(0),\quad
	1_A \coloneqq V(2),\quad
	0_B \coloneqq V(1),\quad
	1_B \coloneqq V(-1).
\]
Now, given a $\ossov(k)$ instance $A,B$, we create the input set $\mathcal{P}$ and query set $\mathcal{Q}$ of a $(3-\eps)$-ANN with distance threshold $\delta = 1$ as follows. For each vector $a \in A$, we add the curve $Q_a$ to $\mathcal{Q}$ which is defined as
\[
	Q_a \coloneqq \mathop{\bigcirc}_{i=0}^{d-1} \; V_a^i \quad \text{ with } \quad V_a^i \coloneqq a[i]_A + (6i,0),
\]
where $a[i]_A$ is either $0_A$ or $1_A$, depending on the value of $a[i]$, and the \enquote{$+(6i,0)$} is a translation of each point of the curve by $(6i,0)$. For each vector $b \in B$, we add the curve $P_b$ to $\mathcal{P}$ which is defined as
\[
	P_b \coloneqq \mathop{\bigcirc}_{i=0}^{d-1} \; V_b^i \quad \text{ with } \quad V_b^i \coloneqq b[i]_B + (6i,0),
\]
where $b[i]_B$ is either $0_B$ or $1_B$, depending on the value of $b[i]$.
It is crucial that we make the resulting curves non-degenerate by removing all degenerate vertices. 
In particular, any sequence of consecutive gadgets $0_A$ will be turned into a single line segment. Thus, the curves in  $\mathcal{Q}$ will have complexity $\Oh(k)$. See Figure \ref{fig:3minuseps_lb_2d} for an example of the construction.
\begin{figure}
	\centering
	\includegraphics[scale=1.1]{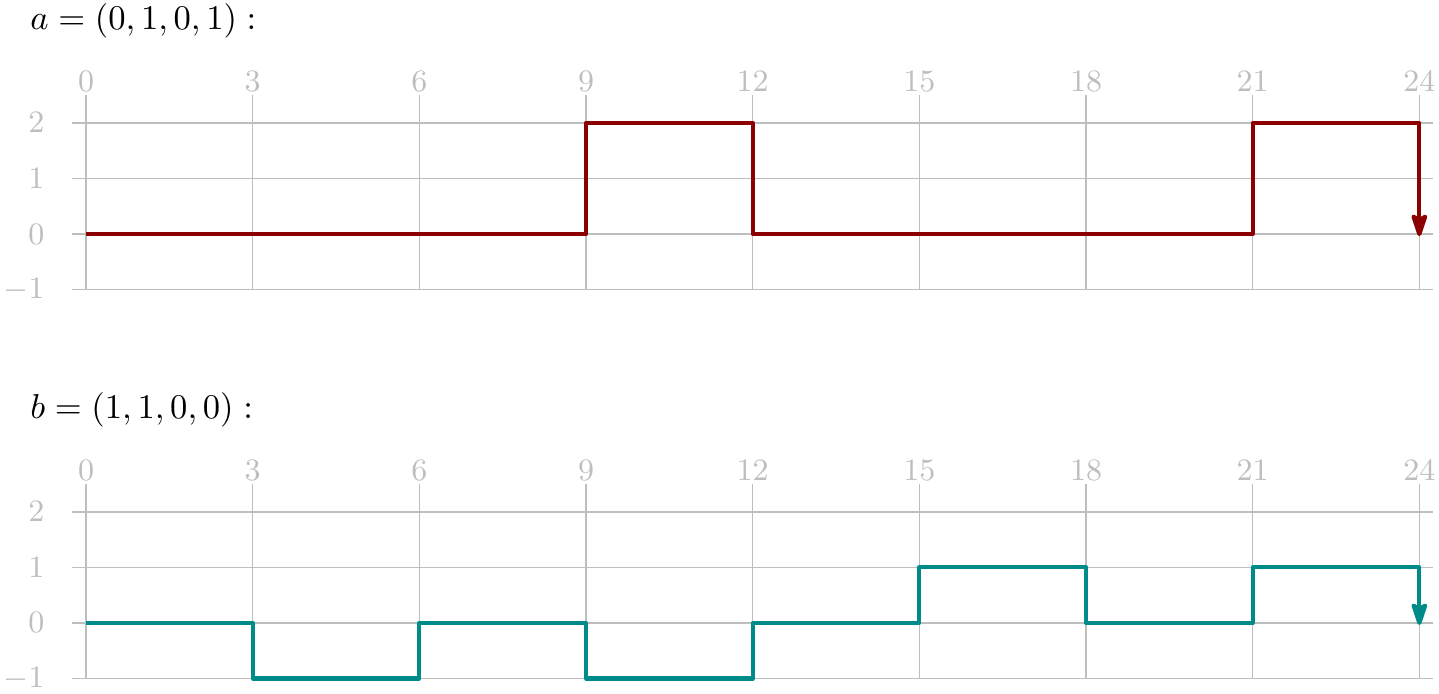}
	\caption{Visualization of the $3-\eps$ lower bound in 2D.}
	\label{fig:3minuseps_lb_2d}
\end{figure}

We now prove correctness of the reduction. Consider the case of two orthogonal vectors $a \in A$ and $b \in B$ such that there is an $i \in \{0,\dots,d-1\}$ with $a[i] = b[i] = 1$. Note that for $\df(Q_a, P_b) < 3$, there has to be a point in the traversal where we are in some point $(x_a, 2)$ in $Q_a$ and in some point $(x_b, -1)$ in $P_b$ as otherwise the distance of the $x$-coordinate would be at least 3. However, the $y$-distance of these points is 3 and thus $\df(Q_a, P_b) \geq 3$. On the other hand, if $a \in A$ and $b \in B$ are orthogonal, then we can traverse the two curves with the same speed in $x$-direction --- i.e., staying at the same $x$-coordinate at every point in time --- and obtain a Fréchet distance at most 1 as $\df(0_A, 0_B) = \df(0_A, 1_B) = \df(1_A, 0_B) = 1$, where the described traversal realizes these distances.

The remainder of the proof, i.e., the derivation of the claimed lower bound, is the same as in the proof of Theorem \ref{thm:2minusepshard1d} and we thus omit it for brevity.
\end{proof}

\section{Conclusions and Open Problems}\label{sec:discussion}
In this work we largely resolve the $\alpha$-ANN problem under the continuous Fréchet distance for one-dimensional curves from a fine-grained perspective for $1 < \alpha < 3$.
We show that, in general, most of the running times presented in this work cannot be improved significantly, however, other tradeoffs between preprocessing time and query time are still possible, and other parameter regimes might be shown hard or more tractable, e.g., for $k \in \Oh(1)$. Indeed, there is a line of work on related data structure problems using the continuous Fr\'echet distance for the specific value of $k=2$, which corresponds to queries with line segments, see~\cite{BergCG13, DBLP:journals/corr/abs-2102-05844}.
It also remains a fundamental problem to show fine-grained lower bounds for approximation factor larger than 3 for a metric problem, which seems to require fundamentally different techniques, cf.~\cite{Rubinstein18}.

As for the continuous Fr\'echet distance, our new upper and lower bounds show that the case of one-dimensional curves provides a kaleidoscopic view into the computational complexity and the underlying challenges posed by the general problem for polygonal curves in $\RR^d$. 
The obvious way forward in this line of research is to show upper and lower bounds for dimension~2 and higher. Some of our ideas might translate directly, such as the idea to generate candidate curves at query time in order to achieve a tradeoff between preprocessing and query time. 
While our lower bounds also hold in higher dimension, it is conceivable that higher lower bounds can be shown already in the plane.
In fact, we already initiate this line of work by showing an equally high lower bound for $(3-\eps)$-ANN in the plane as we have for $(2-\eps)$-ANN for one-dimensional curves.
This lower bound already hints at techniques that can potentially achieve a matching upper bound. We leave this as an open problem.
Our notions of straightenings and signatures, which capture the approximate shape of one-dimensional curves in a best-possible way, currently do not exist in dimension 2 or higher. Extending these notions to the plane by itself would be very interesting.

\typeout{}
\bibliographystyle{alpha}
\bibliography{main}

\end{document}